\let\orignot\not %
\let\not\orignot
\newcommand{\Cos}[1]{\mathrm{Cos}\left(#1\right)}
\DeclareMathOperator{\pcsp}{PCSP}
\DeclareMathOperator{\csp}{CSP}
\DeclareMathOperator{\pol}{Pol}
\DeclareMathOperator{\PMC}{PMC}
\DeclareMathOperator{\im}{Im}
\newcommand{\Yes}{Yes} %
\newcommand{\No}{No} %
\newcommand\ignore[1]{}
\newcommand\rel[1]{\mathbf{#1}}
\newcommand\alg[1]{{#1}}
\newcommand{\I}{{\bm{I}}}
\newcommand{\Hcal}{\mathcal{H}}
\newcommand{\Mscr}{\mathscr{M}}
\newcommand{\NN}{\mathbb{N}}
\newcommand{\ZZ}{\mathbb{Z}}
\newcommand{\arty}{\mathrm{ar}}
\newcommand{\little}[1]{\scaleto{#1}{5pt}}
\newcommand\Crestrict[2]{{%
  \left.\kern-\nulldelimiterspace %
  #1 %
  \right|_{#2} %
  }}
\newcommand\abreg[1]{{#1}^{\textnormal{r.c.}}}
\newtheorem{theorem}{Theorem}[section]
\newtheorem{lemma}[theorem]{Lemma}
\newtheorem{corollary}[theorem]{Corollary}
\newtheorem{observation}[theorem]{Observation}
\newtheorem{proposition}[theorem]{Proposition}
\theoremstyle{definition}
\begin{document}

\title{Equations over Finite Monoids with Infinite Promises\thanks{This work was supported by UKRI EP/X024431/1. Work done while Alberto Larrauri was at the University of Oxford. For the purpose of Open Access, the authors have applied a CC BY public copyright licence to any Author Accepted Manuscript version arising from this submission. All data is provided in full in the results section of this paper.}}

\author{Alberto Larrauri\\University of Zaragoza \and Antoine Mottet\\Hamburg University of Technology \and Stanislav
\v{Z}ivn\'y\\University of Oxford}

\date{}

\maketitle
\begin{abstract}
  Larrauri and \v{Z}ivn\'y~[ICALP'24/ACM ToCL'24] recently established a complete complexity
  classification of the problem of solving a system of equations over a 
  monoid $\alg N$ assuming that a solution exists over a monoid
  $\alg M$, where both monoids are finite and $\alg M$ admits a homomorphism to
  $\alg N$. Using the algebraic approach to promise constraint satisfaction
  problems, we extend their complexity classification in two directions: we obtain a complexity dichotomy in the case where arbitrary relations are added to the monoids, and we moreover allow the monoid $\alg M$ to be finitely generated.
\end{abstract}

\section{Introduction}
\paragraph*{Solving Equations}
Deciding the solvability of systems of equations is a fundamental  problem in computer science and mathematics.
In general, the equation satisfiability problem for an algebraic structure $\alg A$ takes as input a set of equations of the form
\[ s(x_1,\dots,x_n) = t(x_1,\dots,x_n)\]
where $x_1,\dots,x_n$ are variables and $s,t$ are terms over the algebraic symbols of the structure.
For example, over a monoid the terms would be simply words over the alphabet $\{x_1,\dots,x_n\}$, over groups the terms would be words over the alphabet $\{x_1,x_1^{-1},\dots,x_n,x_n^{-1}\}$, and over rings the terms would be multivariate polynomials.
For now, the reader can imagine that the terms may also contain constants from the algebraic structure $\alg A$, but we will revisit this assumption in the following.
The question is to decide whether there exists an assignment $h\colon\{x_1,\dots,x_n\}\to \alg A$ such that every equation becomes true under this assignment.

Besides their intrinsic mathematical appeal, equation satisfiability problems
are  very natural for computer science, in particular in the case where the equations are to be solved over a finitely generated free structure.
Indeed, the elements of such structures are themselves words or terms over the generators and the equation satisfiability problem then coincides with the unification problem, 
where on an input consisting of equations as above, the goal is to determine the existence of a substitution of the variables into terms that would make the equations true, possibly modulo a background equational theory.
For example, unification problems over words with a finite alphabet $\Sigma$ correspond to solving equations in the free monoid generated by $\Sigma$.
Unification is relevant to areas such as automated reasoning and description
logics~\cite{HandbookAutomatedReasoningBaader,UnificationACUI}, among many others.

Depending on the underlying algebraic structure over which the equations are to be solved, and depending on the type of equations to be solved, the complexity of the problem varies wildly.
While systems of linear equations over a field, where each term $s,t$ in the input is a linear polynomial, are known to be efficiently solvable to any undergraduate student, solving arbitrary polynomial equations is NP-hard and is not considered to be efficiently solvable.
It is also known that the equation satisfiability problem in the group $(\ZZ,+)$
(or any finitely generated free \emph{commutative} group) is solvable in polynomial
time~\cite{Kannan}, while the problem becomes much more complex over arbitrary
finitely generated free groups or monoids, for which only algorithms requiring
polynomial space are known~\cite{Plandowski04:jacm}.

Over \emph{finite} structures, it has been known for several decades that the equation satisfiability problem for finite groups is solvable in polynomial time when the group is commutative, and is NP-complete otherwise~\cite{Goldmann02:ic}.
For finite monoids, a similar dichotomy has been obtained by Kl\'ima, Tesson, and Th\'erien: the equation satisfiability problem is solvable in polynomial time for \emph{{regular}} commutative monoids, and is NP-complete otherwise \cite{KTT07:tcs}.

\paragraph*{Generalized Equations}
From here on we do not necessarily allow constants to appear in the instances of the equation satisfiability problem. We outline two possible approaches to obtaining polynomial-time algorithms solving more general problems than equation satisfiability. 

On the one hand, one can start with a tractable equation satisfiability problem and additionally allow in the input constraints that are \emph{not} equations.
A natural example for this is to allow e.g.\ constraints of the form $s(x_1,\dots,x_n)\neq t(x_1,\dots,x_n)$ or $(x_1,\dots,x_r)\in R$, where $R\subseteq\alg A^r$ is a fixed subset.
The appropriate framework to study these problems is that of \emph{constraint satisfaction problems} whose templates are expansions $\rel A=(\alg A, R)$ of algebraic structures by a relation; we introduce this terminology formally in~\Cref{sect:prelims}.
For finite commutative groups, Feder and Vardi~\cite{Feder98:monotone} showed that allowing any constraint that is not itself expressible by a system of equations (so-called cosets) yields an NP-hard problem.
Interestingly, this is not true for finitely generated commutative groups~\cite{CSPExpansionsFreeAbelianGroups} and in particular it is possible in this setting to decide the satisfiability of a system of equations together with disequality constraints as above.

On the other hand, one can start with an NP-hard equation satisfiability problem, and \emph{restrict} the instances allowed as input.
This has been investigated e.g. in~\cite{LZ24:acm} in the following form. Fix two finite monoids $\alg M,\alg N$. The \emph{promise} equation problem only allows systems of equations that are promised to be satisfiable in the monoid $\alg M$ or unsatisfiable in $\alg N$, and the problem is to decide which case applies. When $\alg M=\alg N$, we see that we recover exactly the classical equation satisfiability problem. Note that to specify this problem formally, one needs to also specify a translation between the constant symbols corresponding to elements of $\alg M$ and those corresponding to elements of $\alg N$; this is more than a technical detail, as the complexity of the problem also depends on this translation.
The suitable framework to study such problems is that of \emph{promise}
constraint satisfaction problems~\cite{BG21:sicomp,BBKO21} whose templates are monoids $\alg M$ and $\alg N$.

\paragraph*{Main Result} %
In this work, we generalize the state-of-the-art in both aforementioned directions  simultaneously.
Let $\alg M, \alg N$ be monoids, and let $R(M)\subseteq\alg M^r$ and $R(N)\subseteq \alg N^r$ be relations of some arity $r$.
The problem that we consider takes as input a system of constraints that are either monoid or group equations of the form $s(x_1,\dots,x_n)=t(x_1,\dots,x_n)$ as considered above, or constraints of the form $(x_1,\dots,x_r)\in R$.
The problem is to decide whether this system is satisfiable in $\alg M$ (with $R$ interpreted as $R(M)$), or not satisfiable in $\alg N$ (with $R$ interpreted as $R(N)$). 
Until proper terminology is introduced below, we call this problem the \emph{generalized equation problem}.

For example, consider $\alg M$ to be the monoid $(\ZZ_{\geq 0},+)$ of non-negative integers, and $\alg N$ to be the monoid $(\mathbb Z/n\mathbb Z,+)$ for an arbitrary $n\geq 2$. 
Let $R(M)$ contain all the triples obtained by permuting entries of tuples of the form $(a,a,a+1)$ for $a\in\mathbb N$. Let $R(N)$ be the relation containing all triples except for $(a,a,a)$ for all $a\in\mathbb Z/n\mathbb Z$.
An example of an instance to this problem is
$\{x+y = u+v, \, (x,y,u)\in R, (u,v,x) \in R, (u,v,y) \in R\}$, which is not satisfiable in $\rel M=(\alg M, R(M))$, but is satisfiable in $\rel N=(\alg N,R(N))$ for all $n\geq 2$.

We note that deciding the satisfiability of such instances in $\rel M$ is
NP-complete.\footnote{NP containment holds, since for example this problem reduces to the existential theory of Presburger arithmetic.}
Indeed, 1-in-3-SAT readily reduces to instances of $\rel M$ that do not even use monoid equations over $\alg M$ and only use the constraints of the form $R(M)$. 
Similarly, deciding satisfiability of such instances in $\rel N$ is
NP-complete for all $n\geq 2$ by a reduction from Not-All-Equal-SAT.
However, when the problem is to distinguish between instances satisfiable in
$\rel M$ and those unsatisfiable in $\rel N$, then the problem becomes solvable
in polynomial time when $n$ is a multiple of $3$ and remains NP-hard
otherwise.\footnote{The regularization of $\rel M$ is $(\mathbb Z, +)$ and any
monoid homomorphism to $\rel N$ contains in its kernel the set $n\mathbb Z$ of multiples of $n$.
The coset $\Cos{R(M)}$ consists of all the triples $(a,b,c)$ such that $a+b+c=1 \bmod 3$. In particular, for all $n$ not divisible by 3, it contains a tuple $(a,b,c)$ such that $a=b=c\bmod n$, and therefore there is no homomorphism from $(\mathbb Z,+,\Cos{R(M)}$
to $(\mathbb Z/n\mathbb Z,+,R(N))$. This implies, by our main result
(cf.~\Cref{thm:main}), that $\pcsp(\rel
M,\rel N)$ is NP-hard. In contrast, if $n$ \emph{is} divisible by 3 then no
tuple $(a,b,c)$ such that $a=b=c \bmod 3$ is in $\Cos{R(M)}$, and therefore the canonical projection $x\mapsto x\bmod 3$ 
 is a homomorphism from $(\mathbb Z,+,\Cos{R(M)})$ to $(\mathbb
Z/3\mathbb Z,+,R(N))$. By~\Cref{thm:main}, in this case $\pcsp(\rel
M,\rel N)$ is solvable in polynomial time.}

We are able to obtain a dichotomy result for the class of such problems where $\alg M$ is finitely generated and where $\alg N$ is a finite monoid.
\begin{theorem}[Main theorem, informal version]\label{main-informal}
	Let $\alg M$ be a finitely generated monoid, and let $\alg N$ be a finite monoid. Let $R(M)\subseteq\alg M^r$ and $R(N)\subseteq\alg N^r$ be arbitrary relations of some arity $r$.
	Then the generalized equation problem parameterized by $\rel M=(\alg M,R(M))$ and $\rel N=(\alg N,R(N))$ is
  either solvable in polynomial time or is NP-hard.
\end{theorem}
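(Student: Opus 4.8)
The plan is to cast the generalized equation problem as a promise constraint satisfaction problem (PCSP) and to invoke the algebraic theory of PCSPs, specifically the characterization of tractability via polymorphisms. First I would observe that the template pair $(\rel M,\rel N)$, after adding a symbol for each generator of $\alg M$ and choosing a consistent interpretation in $\alg N$, is a (possibly infinite on the $\alg M$-side, but finitely generated) relational structure with a homomorphism $\rel M\to\rel N$; the generalized equation problem is then exactly $\pcsp(\rel M,\rel N)$ restricted to instances using the monoid/group-equation gadgets and the relation $R$. The key technical point is that finitely generated monoids, although infinite, are \emph{finitely related} in the relevant sense once one fixes the generators, so that the machinery of minions and the "arc-consistency + affine relaxation" meta-algorithms still apply. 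I would therefore first reduce to the finite-template setting by a compactness/periodicity argument: using that $\alg N$ is finite, show that satisfiability in $\rel N$ depends only on the images of variables modulo a bounded congruence, so one may replace $\alg M$ by a suitable finite quotient-like structure $\rel M'$ (still with $\rel M'\to\rel N$) without changing the promise problem up to polynomial-time equivalence.

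The second step is the algorithmic side: when the polymorphism minion $\pol(\rel M',\rel N)$ is rich enough — concretely, when it contains operations witnessing a "$\BLP+\AIP$-type" condition (block-symmetric or bounded-width-plus-affine polymorphisms, as in the works cited for finite monoids) — the combined basic-linear-programming and affine-integer-programming relaxation solves the problem in polynomial time. This uses the earlier observation that the monoid equations over a (virtually) Abelian regular monoid are themselves solvable by the affine relaxation, and that adding the relation $R$ is harmless precisely when $R$ is "subgroup-like" with respect to that relaxation. The third step is hardness: when the minion is too poor, I would construct a minion homomorphism to the projection minion (or to a fixed NP-hard gadget such as $1$-in-$3$ vs.\ NAE, exactly as illustrated in the example in the introduction), and conclude NP-hardness by the standard hardness criterion for PCSPs. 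The dichotomy then follows from a case analysis on the structure of $\alg M$ and $\alg N$: either $\alg M$ has a finite regular-Abelian "core" compatible with $\alg N$ and $R^M,R^N$ are affinely tractable, in which case we are in the tractable case, or one of these fails and we extract the hardness gadget.

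The main obstacle I expect is handling the infiniteness and finite generation of $\alg M$ rigorously in step one: unlike the finite-template PCSP dichotomy, here the "source" structure is infinite, so one cannot directly quote the finite-domain algebraic dichotomy, and one must show that the relevant polymorphism minion of the infinite template coincides (up to the homomorphism-to-projections test) with that of a finite surrogate. This requires a careful periodicity argument exploiting that $\alg N$ is finite and that the only non-relational constraints are word equations, whose solution sets over a finitely generated monoid have a controllable structure (e.g.\ rational or semilinear after projection to $\alg N$). A secondary obstacle is dealing with the \emph{group} equations (allowing formal inverses) when $\alg M$ is not a group: one must verify that the inverse symbols can only be consistently used on the group-invertible part of $\alg M$, and that this does not introduce new behaviour beyond what the monoid equations already capture — this is where the regularity condition on the Abelian part does the real work.
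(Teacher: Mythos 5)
Your high-level plan — cast the problem as a PCSP, split into a tractable case (BLP+AIP) and a hard case (hardness criterion via minion/selection-function arguments), and use the structure of regular Abelian monoids to classify — matches the paper's strategy. But there is a genuine gap in your first step, which is precisely the step you yourself flag as the main obstacle, and your proposed workaround does not fix it.

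You propose to replace the infinite $\rel M$ by a finite quotient-like structure $\rel M'$ with $\rel M'\to\rel N$ so that $\pcsp(\rel M,\rel N)$ and $\pcsp(\rel M',\rel N)$ are polynomial-time equivalent, and then invoke the finite-domain algebraic theory. This cannot work in general: if $\rel M\to\rel M'$ is a quotient map, the set of yes-instances of $\pcsp(\rel M',\rel N)$ strictly grows, since an input $\rel X$ with $\rel X\to\rel M'$ need not satisfy $\rel X\to\rel M$. For the two promise problems to coincide one would need $\rel M$ and $\rel M'$ to be homomorphically equivalent, which fails for most finitely generated infinite monoids. Concretely, take $\alg M=(\ZZ_{\geq 0},+)$: any endomorphism of the associated relational structure $\rel M$ has infinite image, so $\rel M$ maps to no finite structure that maps back, and no finite surrogate can give an equivalent PCSP. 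The paper even makes this point explicitly — the reduction of \cite{BBKO21} between $\PMC_N$ and $\pcsp$ is stated only for finite templates and is known to fail for infinite ones, since $\PMC_N$ is always in NP while infinite-template PCSPs can be arbitrarily hard.

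The paper's actual resolution is different and is the main technical novelty: they prove a direct logspace reduction from $\PMC_N(\rel M,\rel N)$ to $\pcsp(\rel M,\rel N)$ (Proposition~\ref{prop:reduction-fin-presentation-general}) that works whenever all finite powers of $\alg M$ are finitely generated and $\alg N$ is finite. The reduction uses a finite generating set of $\alg M^N$ to encode, inside a finite CSP instance, the requirement that each variable $x$ extends to a homomorphism $\alg M^N\to\alg N$, together with the $R^M\to R^N$ compatibility and the minor identities; no finite surrogate for $\rel M$ is involved. A finite sandwich (the Abelian regularization $\abreg{\rel M}$) is used, but only for the \emph{tractability} direction, where sandwiching $\rel M\to\rel A\to\rel N$ is sound. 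Your proposal conflates the two: sandwiching is available for tractability, but for hardness you must confront the infinite $\rel M$ directly, and the periodicity/compactness heuristic you sketch does not yield the needed reduction.
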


We remark that this result generalizes the main dichotomy from \cite{LZ24:acm}.
At a high level, the classification in \cite{LZ24:acm} involves templates
consisting of pairs of finite monoids $\alg M$, and $\alg
N$\footnote{The phrase ``infinite promises'' in the tile of this article
alludes to the fact that $\alg M$ can be infinite (if finitely generated).} enriched with constants $c_1(M),\dots, c_\ell(M)$, 
and $c_1(N),\dots, c_\ell(N)$. Such a problem can be expressed as in the statement
of~\Cref{main-informal} by letting $R(M)=\{(c_1(M), \dots, c_\ell(M))\}$ and
$R(N)=\{(c_1(N),\dots, c_\ell(N))\}$.

\paragraph*{Other Related Work}

In~\cite{ExistentialTheoryFreeGroups}, the authors consider the complexity of the \emph{uniform} CSP over expansions of finitely generated free groups.
Since the constraint relations are here part of the input, a fixed encoding of the constraints needs to be agreed upon, and the authors represent constraints by regular expressions over the generators of the group.
In particular, not all possible constraints are allowed as input.
Under those conditions, the problem is shown to be solvable in polynomial space.

The literature on the equation satisfiability problem over monoids that are
\emph{not} finitely generated is sparse as such problems are almost always
intractable; see e.g.~\cite{SkolemArithmetic} where the problem over $(\mathbb
N;\times)$, the free commutative monoid with countably many generators,
is investigated.

Finally, an important related problem is the problem of trying to maximize the number of satisfied equations, in cases where the input system cannot be fully satisfied.
The promise version of this problem has recently been considered
in~\cite{BLZ25:icalp}, where it is shown that beating the random assignment is
NP-hard, just as in the non-promise setting for systems of equations over the commutative~\cite{Hastad01} and non-commutative
groups~\cite{Engebretsen04:tcs}. A natural and interesting open problem is to
consider the complexity of this problem when constraints other than equations
are also allowed on the input.

\section{Background}\label{sect:prelims}

We adopt the convention that the set of natural numbers $\NN$ begins at $1$. Given an integer $n\geq 1$, we write $[n]$ for the set $\{1, \dots, n\}$. We implicitly extend functions $f\colon U \rightarrow V$ to arbitrary Cartesian powers by coordinate-wise application. In other words, we write $f(u_1,\dots, u_n)$ for the tuple $(f(u_1), \dots , f(u_n))$ for all $(u_1, \dots, u_n)\in U^n$. 
Given a tuple $\bm u =(u_1,\dots, u_m)\in U^m$ and a map $\sigma\colon [n]\to [m]$, we write
$\bm u \circ \sigma$ for the tuple $(u_{\sigma(1)}, \dots, u_{\sigma(n)})\in U^n$.

We assume basic familiarity with the notions of semigroup, group, monoid, and their homomorphisms (see, for instance~\cite{howie1995fundamentals}). Nevertheless, we recall some of the definitions in order to introduce notation. \par

\paragraph*{Semigroups, Monoids, and Groups} 
We introduce here several basic algebraic notions. We warn the reader that our notion of inverse in a monoid deviates from standard references (e.g., \cite{howie1995fundamentals, clifford1961algebraic, grillet2017semigroups}).

A \emph{semigroup} $\alg S$ is a set $S$ equipped with a binary associative operation $a \cdot^{\little{\alg S}} b$. In most cases the semigroup is clear from the context and we write $ab$ for $a\cdot^{\little{\alg S}} b$. A monoid $\alg M$ is a semigroup that additionally contains an element $e_{\little{M}}\in \alg M$ satisfying
$e_{\little{M}} a = a e_{\little{M}} = a$ for all $a\in M$, which we call the \emph{identity element}. A group $\alg G$ is
a monoid where for each element $a\in G$ there is another element $a^{-1}\in G$
satisfying $a a^{-1}= a^{-1} a = e_{\little{G}}$. We call a semigroup $\alg S$
\emph{commutative} if $ab=ba$ for each $a, b\in S$. An element $a\in \alg S$ is
called \emph{idempotent} if $a^2=a$. A \emph{semilattice} is a commutative monoid where each element is idempotent.  
Given a semigroup (resp., monoid, group) $\alg S$ and an integer $n\geq 1$, the Cartesian power $\alg S^n$ inherits the semigroup (resp., monoid, group) structure from $\alg S$ in the natural way. 
\par
A monoid homomorphism $f$ from a monoid $\alg M$ to a monoid $\alg N$, denoted $f\colon \alg M \to \alg N$, is a map $f\colon M \to N$ satisfying both
$f(e_{\little{M}})=e_{\little{N}}$ and 
$f(a b)= f(a)f(b)$ for all $a, b\in M$. We write $\alg M\to \alg N$ to denote the fact that $\alg M$ maps homomorphically to $\alg N$. \par

Let $\alg M$ be a monoid and $U, V\subseteq M$ be sets. We define 
$U \otimes V$ as the set 
$\{
s t \, \vert \, s\in U, t\in V \}$.
Similarly, given $n\geq 1$, we define
$U^{\otimes n}$ as 
$\bigotimes_{i=1}^n U$, whereas $U^n$ denotes the $n$-th Cartesian power of $U$.
Finally, we write $\langle U \rangle$ to denote the submonoid of $\alg M$
generated by $U$, which is the smallest submonoid of $\alg M$ containing all
elements of $U$.
We say that $S$ \emph{generates} $\alg M$ if $\langle S \rangle= \alg M$.

Let $\alg S$ be a semigroup. A subsemigroup $\alg T\leq  \alg S$ 
is called a subgroup of $\alg S$ if there is an element $e_T\in \alg T$ that acts as the identity element in $\alg T$, and each element $s\in \alg T$ has an inverse $t\in \alg T$ satisfying $st=ts=e_T$. Observe that even if $\alg S$ is a monoid, a subgroup $\alg T \leq \alg S$ is not necessarily a submonoid: $T$ does not need to contain the identity $e_{S}$.
A commutative semigroup $\alg S$ is called \emph{{regular}} if every element in $\alg S$ belongs to a subgroup \footnote{
We remark that in the non-commutative setting, there is a difference between regular semigroups $S$, where it is required that for each $a\in S$ there is some $b\in S$ such that 
$aba=a$ and $bab=b$, and completely regular semigroups $S$ (also called unions of subgroups \cite{grillet2017semigroups}), where each element belongs to a subgroup. For commutative semigroups both notions coincide, so there should be no confusion regarding this.}. 
Under P$\neq$NP, systems of equations over a finite monoid $\alg M$ can be
solved in polynomial time if, and only if, $\alg M$ is commutative and {regular}~\cite{KTT07:tcs,LZ24:acm}.
Given a semigroup $S$, the preorder $a \preceq b$
contains all the pairs $a,b\in S$ for which either $a=b$ or there exist elements $c_1,c_2$ satisfying $c_1b =bc_2 = a$. We write $a\Hcal b$ whenever both $a \preceq b$ and $b\preceq a$ hold. It can be seen that $\Hcal$ is an equivalence relation. This way, we denote the $\Hcal$-class of an element $a\in S$ by $H_a$. The following is sometimes referred to as Green's Theorem:

    \begin{theorem}[{\cite[Theorem 2.2.5]{howie1995fundamentals}}]
    \label{th:green_theorem}
    Let $S$ be a semigroup, and $T\subseteq S$ be a $\Hcal$-class. Then $T$ is a
    subgroup of $S$ if, and only if, $T$ contains some idempotent element. In particular, no $\Hcal$-class of $S$ contains more than one idempotent element. \end{theorem}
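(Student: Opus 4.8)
The plan is to argue directly from the definitions. Recall that $a\preceq b$ means $c_1b=bc_2=a$ for some $c_1,c_2\in M$, that $a\sim b$ means $a\preceq b$ and $b\preceq a$ (so $\sim$ is an equivalence relation and its classes are well defined), and that, per the paper's definition, a subgroup of $M$ is a \emph{subsemigroup} $S$ possessing an element $e_S$ that is a two-sided identity for $S$ in which every element has an inverse lying in $S$. The easy implication is immediate: if $S$ is a subgroup of $M$ with identity $e_S$, then $e_S\in S$ and $e_S^2=e_S$, so $S$ contains an idempotent.

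For the converse, suppose $e\in S$ is idempotent; I will show $S$ is a subgroup with identity $e$. \emph{Step 1 (two-sided identity).} For $a\in S$ we have $a\preceq e$, say $c_1e=a=ec_2$; then $ae=(c_1e)e=c_1e=a$ and $ea=e(ec_2)=ec_2=a$, using $e^2=e$. \emph{Step 2 (inverses).} For $a\in S$ we also have $e\preceq a$, say $d_1a=e=ad_2$; put $b:=d_1e=ed_2$, the two expressions agreeing since $d_1e=d_1(ad_2)=(d_1a)d_2=ed_2$. Then $ab=a(ed_2)=(ae)d_2=ad_2=e$ and $ba=(d_1e)a=d_1(ea)=d_1a=e$ by Step 1, and $b\sim e$ because $b=d_1e=ed_2$ witnesses $b\preceq e$ while $ab=ba=e$ witnesses $e\preceq b$; hence $b\in S$ is an inverse of $a$ in $S$. \emph{Step 3 (closure).} For $a,b\in S$ take inverses $a',b'\in S$ from Step 2; then $e(ab)=(ea)b=ab=a(be)=(ab)e$ by Step 1 (applied to $a$ and to $b$), witnessing $ab\preceq e$, while $(b'a')(ab)=b'(a'a)b=b'(eb)=b'b=e$ and $(ab)(b'a')=a(bb')a'=(ae)a'=aa'=e$ witness $e\preceq ab$; so $ab\sim e$, i.e.\ $ab\in S$. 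Thus $S$ is a subsemigroup with two-sided identity $e$ in which every element is invertible, hence a subgroup of $M$.

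Finally, for the ``in particular'' part: if $e,f\in S$ are both idempotent then $e\sim f$; choosing $c_1$ with $c_1f=e$ gives $ef=(c_1f)f=c_1f=e$, and choosing $d_2$ with $ed_2=f$ gives $ef=e(ed_2)=ed_2=f$, whence $e=f$.

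The only place requiring care is the converse direction, specifically Steps 2 and 3: the inverse $b$ and the closure witnesses must be produced using only consequences of $a\sim e$ (and $b\sim e$), not arbitrary multipliers of $M$, and Step 1 may be invoked only on elements already known to lie in $S$ (so in Step 3 on $a$ and $b$, not prematurely on $ab$). Once the two-sided-identity property of Step 1 is in hand, the rest is a sequence of one-line associativity computations.
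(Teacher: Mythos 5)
Your proof is correct. Since the paper merely cites this as Theorem~2.2.5 of Howie's book and does not include a proof, there is no in-paper argument to compare against; what you give is a self-contained, elementary verification tailored to the paper's specific formulation of $\preceq$ (as $a\in Mb\cap bM$). Howie's textbook proof instead develops the general machinery of Green's relations $\mathcal{L},\mathcal{R},\mathcal{H}$, Green's Lemma on translation bijections between $\mathcal{H}$-classes, and the Miller--Clifford location theorem, and deduces the result for arbitrary semigroups; your argument bypasses all of that and derives the two-sided identity property, inverses, and closure by hand from the defining witnesses $c_1b=a=bc_2$, using only associativity. I checked each step: Step~1 correctly establishes that $e$ is a two-sided identity on $S$; Step~2 correctly shows $b=d_1e=ed_2$ is well defined, is a two-sided inverse of $a$, and lies in $S$ (with $b\preceq e$ witnessed by $d_1,d_2$ and $e\preceq b$ witnessed by $a,a$); Step~3 correctly establishes $ab\preceq e$ via $(ab)e=ab=e(ab)$ and $e\preceq ab$ via $b'a'$; and the uniqueness-of-idempotent computation uses $f^2=f$ to get $ef=e$ and $e^2=e$ to get $ef=f$. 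No commutativity is ever invoked, which is important since $\alg M$ is not assumed Abelian here. The one thing worth saying explicitly (you gesture at it in your closing remark) is that Step~1 may be applied to $a'$ and $b'$ in Step~3 because Step~2 has already placed them in $S$; you do in fact only use Step~1 on $a$, $b$, $a'$, $b'$ after membership is known, so the argument is clean.
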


This result implies that the maximal subgroup containing an idempotent element
$d\in S$ is precisely $H_d$. 
Additionally, an element $a\in S$
belongs to a subgroup if, and only if, $a \Hcal d$ for some idempotent element $d$. In this situation we say that $a$ is a \emph{group element} of $S$
and write $a^{-1}$ for its inverse in $H_a$. Equivalently, $a^{-1}$ is the only element satisfying $aa^{-1} = a^{-1}a$ together with $a^2 a^{-1} =a$ and $a^{-2} a = a^{-1}$, where we use $a^{-k}$ to denote $(a^{-1})^k$.
\footnote{This notion of inverse is less general than the commonly adopted notion of inverse for semigroups, where it is only required that $aba=a$ and $bab=b$ (e.g., \cite{howie1995fundamentals}). In that setting inverses are not necessarily unique.}
Given a set $S\subseteq M$ of group elements we write 
$S^{-1}$ for the set $\{ \, a^{-1} \, \vert \, a\in S \, \}$. 
\par

Let $\alg M$ be a commutative monoid. A \emph{coset} of $\alg M$ is a set $U\subseteq \alg M$ of group elements satisfying
$U \otimes (U^{-1} \otimes U) = U$. Given a set $U\subseteq \alg M$ of group elements, the \emph{coset generated by $U$}, denoted by $\Cos{U}$, is defined as $U\otimes \langle U^{-1} \otimes U \rangle$. It can be seen that $\Cos{U}$ is a coset for all $U\subseteq \alg M$ and it is also the intersection of all cosets containing $U$. We remark that in the particular case where $\alg M$ is a group our notion of coset coincides with the usual one.

\paragraph*{Relational Structures}
A \emph{relational signature} $\Sigma$ is a set of symbols where each $R\in \Sigma$ has a number $\arty(R)$ associated to it, called its arity. In this work we assume relational signatures to always be finite.

Given a relational signature $\Sigma$, a $\Sigma$-structure $\rel A$ consists of a set $A$, called the universe of $\rel A$, and a set $R(\rel A) \subseteq A^{\arty(R)}$ for each symbol $R\in \Sigma$, which is called the interpretation of $R$ on $\rel A$. A \emph{relational structure} is a $\Sigma$-structure for some relational signature $\Sigma$.\par
A homomorphism $f$ from a relational structure $\rel A$ to another relational structure $\rel B$ with the same signature $\Sigma$ is a map
$f:A\to B$ satisfying $f(R(\rel A)) \subseteq R(\rel B)$ for each $R\in \Sigma$. We write $f: \rel A \to \rel B$ to denote that $f$ is a homomorphism from $\rel A$ to $\rel B$, and $\rel A \to \rel B$ to denote that there exists such a homomorphism without specifying one. \par 
Let $\Sigma_\textnormal{mon}$ be the relational signature consisting of a unary symbol $R_e$ and a ternary symbol $R_\otimes$.
Given a monoid $\alg M$, there is a natural $\Sigma_\textnormal{mon}$-structure $\rel M$ associated to it whose universe is $M$,
and where $R_e(\rel M)= \{e_{\little{M}} \}$, and $R_\otimes(\rel M) = \{
(a,b,c) \in M^3 \, \vert \, ab = c \}$.
Note however that not every $\Sigma_\textnormal{mon}$-structure is associated to a monoid.
Observe that given two monoids $\alg M, \alg N$, a map $f\colon M \to N$ is a
homomorphism from $\alg M$ to $\alg N$ if, and only if, it is a homomorphism
from $\rel M$ to $\rel N$. If $\alg M$ is a monoid and $R\subseteq M^n$, we denote by $(\alg M, R)$ the \emph{relational} structure with the three relations $R_e(\rel M)$, $R_\otimes(\rel M)$, and $R$.

\paragraph*{Promise Constraint Satisfaction Problems}
Let $\rel A, \rel B$ be relational structures such that  $\rel A\to\rel B$.
The \emph{promise constraint satisfaction problem} $\pcsp(\rel A,\rel B)$ is the computational problem of distinguishing, for a given input $\rel X$, between the following two cases: (1) there exists a homomorphism $\rel X\to\rel A$, and (2) there does not exist any homomorphism $\rel X\to\rel B$.
Instances falling into the first case are called \Yes-instances, instances
falling into the second case are called \No-instances. We write $\csp(\rel A)$ for the problem $\pcsp(\rel A, \rel A)$.

Note that by our convention, if $\rel M$ and $\rel N$ are expansions of monoids, the input to $\pcsp(\rel M,\rel N)$ is \emph{not} assumed to be an expansion of a monoid, but a relational structure.
This differs, e.g., from~\cite{BartoDemeoMottet,FiniteAlgebras}, where the CSPs of expansions of algebraic structures are investigated and where the inputs are assumed to be themselves algebras.
An input $\rel X$ to $\pcsp(\rel M,\rel N)$ can then be seen as a finite set of equations of the form $s(x_1,\dots,x_n) = t(x_1,\dots,x_n)$ where $s,t$ are words over $\{x_1,\dots,x_n\}$, together with additional constraints involving the extra relations.

\section{Dichotomy and Finite Tractability}
\label{sec:main}

We are now in position to state our main result (\Cref{main-informal}) formally.

\begin{restatable}{theorem}{main}\label{thm:main}
    Let $\rel M=(\alg M,R(M)),\rel N=(\alg N,R(N))$ be expansions of monoids such
    that $\rel M\to\rel N$. Assume that $\alg M$ is finitely generated and $\alg
    N$ is finite. Then, either (1) there exists a homomorphism $h\colon\rel M
    \to \rel N$ whose image is a regular commutative monoid and such that $\Cos{h(R(M))} \subseteq
    R(N)$, in which case $\pcsp(\rel M,\rel N)$ is solvable in polynomial time, or (2) $\pcsp(\rel M,\rel N)$ is NP-hard.
    \end{restatable}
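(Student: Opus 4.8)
The plan is to use the algebraic approach to promise CSPs~\cite{BBKO21} and analyse the polymorphism minion $\pol(\rel M,\rel N)$. The key preliminary is a description of this minion that behaves like that of a finite template even though $\alg M$ is only finitely generated. Since $\alg M$ is finitely generated and $\alg N$ is finite, the set $\operatorname{Hom}(\rel M,\rel N)$ is finite. Moreover, every monoid homomorphism $f\colon\alg M^n\to\alg N$ decomposes as $f(x_1,\dots,x_n)=f_1(x_1)\cdots f_n(x_n)$, where $f_i\colon\alg M\to\alg N$ is the restriction of $f$ to the $i$-th coordinate and, because the coordinate copies of $\alg M$ commute inside $\alg M^n$, the images $\im f_1,\dots,\im f_n$ pairwise commute in $\alg N$. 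Hence an $n$-ary polymorphism of $(\rel M,\rel N)$ is exactly a map of this form that additionally satisfies $f_1(R^M)\otimes\dots\otimes f_n(R^M)\subseteq R^N$ -- a condition on finitely much data. The two alternatives in the theorem will correspond to whether $\pol(\rel M,\rel N)$ is ``rich'' (alternative (1)) or collapses to the projection minion $\Pcal$ (alternative (2), whence NP-hardness by~\cite{BBKO21}).

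For alternative (1), suppose $h\colon\rel M\to\rel N$ has regular Abelian image $\alg P:=h(\alg M)$ with $[h(R^M)]\subseteq R^N$, and put $\rel P:=(\alg P,[h(R^M)])$. Then $h\colon\rel M\to\rel P$ (since $h(R^M)\subseteq[h(R^M)]$) and the inclusion $\alg P\hookrightarrow\alg N$ gives $\rel P\to\rel N$; so $\rel M\to\rel P\to\rel N$, every \Yes-instance of $\pcsp(\rel M,\rel N)$ is a \Yes-instance of $\csp(\rel P)$, every \No-instance is a \No-instance of $\csp(\rel P)$, and it suffices to solve $\csp(\rel P)$ in polynomial time. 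Now $\alg P$ is a finite regular Abelian monoid, that is, a semilattice of finite Abelian groups, and $(x,y,z)\mapsto xy^{-1}z$ (with $y^{-1}$ the inverse in the maximal subgroup of $y$) is a polymorphism of $\rel P$: it is a homomorphism $\alg P^3\to\alg P$ because $(ab)^{-1}=a^{-1}b^{-1}$ in a regular Abelian monoid, and it preserves $[h(R^M)]$ because cosets satisfy $U\otimes U^{-1}\otimes U=U$. This operation is idempotent and Maltsev on each group fibre, so $\csp(\rel P)$ falls into the tractable regime of~\cite{KTT07:tcs,LZ24:acm,CSPExpansionsFreeAbelianGroups}: one runs arc consistency along the semilattice fibration of $\alg P$ and then solves the residual system of equations and coset constraints over finite Abelian groups by linear algebra.

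For alternative (2), suppose no homomorphism $\rel M\to\rel N$ simultaneously has regular Abelian image and satisfies the coset containment. If no homomorphism $\rel M\to\rel N$ has regular Abelian image at all, then $\pcsp((\alg M),(\alg N))$, obtained by deleting $R^M$ and $R^N$, is already NP-hard: from the description above, $\pol((\alg M),(\alg N))$ consists of pairwise-commuting products of homomorphisms none of whose images is regular Abelian, and such a minion admits a minion homomorphism to $\Pcal$, so the pair pp-constructs an NP-hard template (equations over a non-Abelian group~\cite{Goldmann02:ic}, or over a non-regular monoid~\cite{KTT07:tcs}); this is the finitely generated counterpart of~\cite{LZ24:acm}, and $\pcsp((\alg M),(\alg N))$ reduces trivially to $\pcsp(\rel M,\rel N)$ by keeping only the equational constraints. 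If instead some homomorphism has regular Abelian image but every such homomorphism $h$ satisfies $[h(R^M)]\not\subseteq R^N$, fix such an $h$ with image $\alg P$ and pick $\tuple a\in[h(R^M)]\setminus R^N$. Since $\alg P$ is regular and Abelian, $\tuple a$ is obtained from boundedly many tuples of $h(R^M)$ by a single application of $xy^{-1}z$ to products of $R^M$-tuples; pulling this expression back along $h$ -- the needed multiplications and inverses over $\alg P$ are realised by equations over $\alg M$ -- and adding $R$-atoms yields, through a primitive positive definition over $\rel M$ relaxed to $\rel N$, a pp-construction by $(\rel M,\rel N)$ of an NP-hard promise template (for instance, $1$-in-$3$ versus not-all-equal). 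Carrying this out for the finitely many relevant homomorphisms simultaneously -- via the product homomorphism $\prod_i h_i\colon\rel M\to\rel N^t$ and its combined image -- completes the argument.

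The crux is alternative (2): converting the \emph{failure} of the structural condition into a bona fide NP-hardness reduction rather than a mere absence of the affine algorithm. Two points need care. First, the extension of the finite-monoid hardness analysis of~\cite{KTT07:tcs,LZ24:acm} to finitely generated $\alg M$; this is exactly where the finiteness of $\operatorname{Hom}(\rel M,\rel N)$ and the product form of polymorphisms reduce matters to a finite computation. Second, showing that when $[h(R^M)]\not\subseteq R^N$ for \emph{all} homomorphisms $h$ with regular Abelian image, the various obstructions amalgamate into one pp-construction -- the difficulty being that different homomorphisms may fail for different reasons, which forces one to work with the product homomorphism and argue at the level of its image. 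Underlying both sides is the observation that the set generated from $h(R^M)$ by the candidate polymorphism $xy^{-1}z$ is precisely the coset $[h(R^M)]$, which is what makes the inclusion $[h(R^M)]\subseteq R^N$ the correct dividing line.
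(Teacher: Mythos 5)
The tractability half of your proposal is essentially right and close to the paper's: you sandwich $\pcsp(\rel M,\rel N)$ through a finite intermediate structure $\rel P=(\im h,[h(R^M)])$ and solve $\csp(\rel P)$ directly. The paper proceeds through 2-block symmetric polymorphisms and the BLP$+$AIP characterisation (and, separately, gives a direct algorithm for the Abelian-regularization template), but the underlying idea is the same sandwich, so this is a minor stylistic difference, not a gap.

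The hardness half is where your proposal has genuine gaps. First, the claim in your Case A that if no homomorphism $\rel M\to\rel N$ has regular Abelian image, then $\pol((\alg M),(\alg N))$ ``admits a minion homomorphism to $\Pcal$'' is not justified and is almost certainly false in general. The hardness criterion that actually applies here, and that is used in~\cite{LZ24:acm} and in this paper, is strictly weaker than collapse to $\Pcal$: one covers $\pol(\rel M,\rel N)$ by finitely many subsets, each admitting a \emph{bounded selection function}, and invokes~\cite[Thm.~5.22]{BBKO21}. The paper's $\Mscr_1$ (non-Abelian image) and $\Mscr_2$ (non-regular unary image) are handled this way by~\cite{LZ24:acm}; no minion homomorphism to $\Pcal$ is claimed or needed. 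Second, your Case B is a pp-construction sketch that reasons only about the chosen unary homomorphism $h$ and the element $\tuple a\in[h(R^M)]\setminus R^N$; it does not control arbitrary polymorphisms $f$ whose unary components might be homomorphisms different from $h$ and might avoid the ``bad'' coset element entirely. Turning the coset obstruction into a bounded selection function on the relevant polymorphism class requires a quantitative combinatorial lemma, namely that $[R_\dagger]^{\otimes n}\subseteq R^{\otimes n}$ for all sufficiently large $n$ and all $R$ in a finite Abelian monoid (\Cref{splitting-monoids_non_regular}); this is the technical heart of the paper's~\Cref{prop:minor-compatibility_coset} and is entirely absent from your sketch. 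Third, the reduction from $\PMC_N$ to $\pcsp$ for templates with \emph{finitely generated but infinite} left-hand side (\Cref{prop:reduction-fin-presentation-general}) is not automatic---\Cref{equivalence-PMC} fails for infinite structures---and you acknowledge this as ``a finite computation'' but do not carry it out; it needs the observation that all finite powers $\alg M^N$ are finitely generated and an explicit encoding of extension-to-homomorphism conditions into the instance. Without these three ingredients, the hardness direction does not go through.
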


We note that our result also captures templates where $\rel M=(\alg M, R_1(M),\dots, R_\ell(M))$ and $\rel N=(\alg N, R_1(N),\dots, R_\ell(N))$ are expansions of monoids by finitely many non-empty relations. To see this, observe that $\pcsp(\rel M, \rel N)$ is polynomial time equivalent to $\pcsp(\rel M^\prime, \rel N^\prime)$ where
$\rel M^\prime, \rel N^\prime$ are the expansions of $\alg M, \alg N$ by the relations $R(M), R(N)$ obtained by concatenating the tuples in $R_1(M), \dots, R_\ell(M)$ and in
$R_1(N), \dots, R_\ell(N)$ respectively. \par
    
Interestingly, our proof of NP-hardness in~\Cref{thm:main} makes use of the algebraic approach to promise constraint satisfaction~\cite{BBKO21} albeit in a setting where the left-hand side structure is infinite and the right-hand side structure is finite.
In~\cite{BBKO21}, many of the statements for finite structures hold for templates where the \emph{right-hand} side template is infinite and the \emph{left-hand} side template is finite, however the direction that we take here is essentially new and works for arbitrary algebraic structures (i.e., not only groups and monoids) under the natural condition that the structures and their powers are finitely generated.
Such structures with finitely generated left-hand side and finite right-hand
side also appear in the study of (non-promise) constraint satisfaction problems
for infinite structures with a large automorphism group~\cite{Mottet25:lics}.

The polynomial-time algorithm that we use to prove~\Cref{thm:main} is inspired
by the work of~\cite{KTT07:tcs} on finite regular commutative monoids.
We show that the CSP of an expansion of a finitely generated regular commutative monoid by any finite number of cosets is solvable in polynomial time.
Although the monoid $\alg M$ in~\Cref{thm:main} is only assumed to be finitely
generated, we can apply our algorithm by showing that every finitely generated
monoid $\alg M$ admits a \emph{commutative regularization} $(\abreg{\alg
M},\abreg{\pi})$, as given by the following result.
This fact and this proof might be folklore; we include the proof in the appendix (see~\Cref{ap:ab_reg}) for the convenience of the reader.

\begin{restatable}{lemma}{abelian_regularization}
\label{le:abelian_regularization}
    Let $\alg M$ be a monoid.
    Then there exists a {regular} commutative monoid $\abreg{\alg M}$ and a homomorphism $\abreg{\pi}_M\colon \alg M \to \abreg{\alg M}$ satisfying that for any monoid homomorphism $f\colon \alg M \to \alg N$ where $\alg N$ is commutative and {regular}, there is a homomorphism $f^\prime\colon \abreg{\alg M} \to \alg N$
    such that $f= f^\prime \circ \abreg{\pi}_M$. Moreover, if $S\subseteq \alg M$ is a set generating $\alg M$, then $\abreg{\pi}_M(S)$ generates $\abreg{\alg M}$.
\end{restatable}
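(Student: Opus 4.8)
The plan is to construct $\abreg{\alg M}$ as the reflection of $\alg M$ into the class $\mathcal C$ of regular Abelian monoids, realised concretely as a sub-object of a product of members of $\mathcal C$. The first step is to view members of $\mathcal C$ as algebras in the signature $(\cdot, e, {}^{-1})$, where ${}^{-1}$ denotes the inverse operation on regular elements recalled in \Cref{sect:prelims}, and to record two closure properties. First, a direct product $\prod_i \alg N_i$ of regular Abelian monoids is again regular Abelian, with $(a_i)_i^{-1} = (a_i^{-1})_i$. Second, if $T$ is a submonoid of a regular Abelian monoid $\alg N$ that is closed under ${}^{-1}$, then $T$ is itself regular Abelian: each $a \in T$ lies in the cyclic subgroup $\{a^k : k \in \ZZ\}$ of $\alg N$ (with identity $a a^{-1}$), and this subgroup is contained in $T$. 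Both facts rely on the short observation that in an Abelian monoid the regular elements are closed under multiplication with $(ab)^{-1} = a^{-1} b^{-1}$, which follows directly from the characterisation of ${}^{-1}$ given in \Cref{sect:prelims}.

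Next I fix a set $\kappa$ of cardinality $\max(|M|, \aleph_0)$ and let $\mathcal I$ be the set of all pairs $(\alg N, f)$ such that $\alg N$ is a regular Abelian monoid with universe a subset of $\kappa$ and $f\colon \alg M \to \alg N$ is a monoid homomorphism; this is a set since there are set-many such structures and set-many maps. I set $P := \prod_{(\alg N, f) \in \mathcal I} \alg N$, which is regular Abelian by the first step, and let $\delta\colon \alg M \to P$ be the homomorphism $m \mapsto (f(m))_{(\alg N, f) \in \mathcal I}$. In each coordinate $f(m)$ is a regular element, so $\delta(m)$ is regular in $P$; I therefore define $\abreg{\alg M}$ to be the submonoid of $P$ generated by $\delta(M) \cup \delta(M)^{-1}$, where $\delta(M)^{-1} := \{\delta(m)^{-1} : m \in M\}$. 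Every element of $\abreg{\alg M}$ is a product of regular elements of $P$, hence regular in $P$, and flipping exponents shows $\abreg{\alg M}$ is closed under ${}^{-1}$; so by the second closure property $\abreg{\alg M}$ is regular Abelian. Let $\pi_M^{\textnormal{ab-reg}}$ be $\delta$ corestricted to $\abreg{\alg M}$.

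For the universal property, let $f\colon \alg M \to \alg N$ with $\alg N$ regular Abelian. The submonoid $\alg N_0 \leq \alg N$ generated by $f(M) \cup f(M)^{-1}$ is regular Abelian, has cardinality at most $\max(|M|, \aleph_0)$, and $f$ factors through it; transporting $\alg N_0$ along an isomorphism onto a structure with universe inside $\kappa$, I may assume the resulting pair lies in $\mathcal I$. The corresponding coordinate projection $P \to \alg N_0$, composed with the inclusion $\alg N_0 \hookrightarrow \alg N$ and restricted to $\abreg{\alg M}$, yields a homomorphism $f'\colon \abreg{\alg M} \to \alg N$ with $f' \circ \pi_M^{\textnormal{ab-reg}} = f$, as required (and $f'$ is unique, being determined on $\pi_M^{\textnormal{ab-reg}}(M)$). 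For the final claim, if $S$ generates $\alg M$ then $\delta(M) = \langle \delta(S) \rangle$, so $\abreg{\alg M} = \langle \delta(S) \cup \delta(S)^{-1} \rangle$ is generated by $\pi_M^{\textnormal{ab-reg}}(S)$ together with the inverses of these elements; in particular $\abreg{\alg M}$ is finitely generated whenever $\alg M$ is.

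I expect the only genuine difficulty to be correctly setting up the indexing of the product by a set rather than a proper class --- hence the cardinality bound $\max(|M|,\aleph_0)$ and the passage to $\alg N_0$ in the universal-property step --- together with the small lemma that a submonoid of a regular Abelian monoid closed under ${}^{-1}$ is again regular; without these two points the object $\abreg{\alg M}$ would be either ill-defined or not regular. All remaining verifications are routine unwindings of the construction and of the definition of ${}^{-1}$.
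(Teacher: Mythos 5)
Your proof is correct but takes a genuinely different route from the paper. You realise $\abreg{\alg M}$ abstractly as the reflection of $\alg M$ into the class of regular Abelian monoids, constructed as the ${}^{-1}$-closed submonoid of a set-indexed product of homomorphic images, with the cardinality bound $\max(|M|,\aleph_0)$ playing the role of a solution set. The paper instead constructs $\abreg{\alg M}$ concretely by a monoid presentation: generators $\widehat{a}$, $\widecheck{a}$, $1_a$ for each $a\in M$ (intended to represent the image of $a$, its group inverse, and the corresponding group identity), subject to relations that import the multiplication table of $\alg M$, force commutativity, and force each $[\widehat{a}]$ to be completely regular with inverse $[\widecheck{a}]$ and associated idempotent $[1_a]$. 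The presentation approach yields the universal property directly from the universal property of presented monoids and gives the generating set by an explicit computation, whereas your approach front-loads the work into the two closure facts (regular Abelian monoids are closed under products, and under ${}^{-1}$-closed submonoids) plus the set-theoretic bookkeeping needed to index a genuine product. Both are valid and of comparable length; the paper's is more explicit, yours is arguably cleaner and makes the variety structure (signature $(\cdot,e,{}^{-1})$) more transparent. One point worth noting: what you actually establish is that $\pi_M^{\textnormal{ab-reg}}(S)$ \emph{together with the inverses of its elements} generates $\abreg{\alg M}$, which is precisely what the paper's own auxiliary generator lemma in the appendix establishes (and all that is used downstream, namely finite generation of $\abreg{\alg M}$ when $\alg M$ is finitely generated); the phrasing of the final sentence of the lemma statement, which omits the inverses, should be read in this sense, and you correctly flag the distinction.
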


Let $\abreg{\rel M}=(\abreg{\alg M}, \Cos{\abreg{\pi}(R(M))})$. Then
$\csp(\abreg{\rel M})$ is solvable in polynomial time by our algorithm
briefly discussed above, cf. \Cref{sec:tractability} for all
details. \Cref{thm:main} together with the universal property of $\abreg{\rel M}$ imply that this structure serves as a classifier for the complexity of problems of the form $\pcsp(\rel M,\underline{\hspace{0.2cm}})$.
\begin{corollary}
\label{abreg-sandwich}
Assume P${}\neq{}$NP.
    Suppose that $\rel M=(\alg M,R(M)),\rel N=(\alg N,R(N))$ are expansions of monoids such that $\alg M$ is finitely generated and $\alg N$ is finite.
    Then $\pcsp(\rel M,\rel N)$ is solvable in polynomial time if, and only if,
    there exists a homomorphism $\abreg{\rel M}\to\rel N$.
\end{corollary}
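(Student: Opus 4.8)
The plan is to establish the two implications separately: the backward (``if'') direction will come from the universal property of the Abelian regularization (\Cref{le:abelian_regularization}) together with the polynomial-time solvability of $\csp(\abreg{\rel M})$ recorded just before the statement, and the forward (``only if'') direction will come from \Cref{thm:main}.

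For the ``if'' direction, suppose there is a homomorphism $g\colon \abreg{\rel M}\to\rel N$. I would first observe that $\abreg{\pi}$ is itself a homomorphism $\rel M\to\abreg{\rel M}$: it is a monoid homomorphism by \Cref{le:abelian_regularization}, and since $\abreg{\alg M}$ and all of its Cartesian powers are regular Abelian, the set $\abreg{\pi}(R^M)$ is regular and hence contained in the coset $[\abreg{\pi}(R^M)]$ that it generates. Thus $\rel M\to\abreg{\rel M}\to\rel N$; in particular $\rel M\to\rel N$, so $\pcsp(\rel M,\rel N)$ is well-defined. The algorithm is the usual ``sandwich'' argument: on input $\rel X$, run the polynomial-time algorithm for $\csp(\abreg{\rel M})$; if it reports $\rel X\to\abreg{\rel M}$ then $\rel X\to\rel N$ via $g$, so $\rel X$ is not a \No-instance and accepting is correct, whereas if it reports $\rel X\not\to\abreg{\rel M}$ then $\rel X\not\to\rel M$ (as $\abreg{\pi}$ would otherwise witness $\rel X\to\abreg{\rel M}$), so $\rel X$ is not a \Yes-instance and rejecting is correct. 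Hence $\pcsp(\rel M,\rel N)$ is in polynomial time.

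For the ``only if'' direction, assume $\pcsp(\rel M,\rel N)$ is solvable in polynomial time. Since $\alg M$ is finitely generated and $\alg N$ is finite, \Cref{thm:main} applies, and because $\mathrm{P}\neq\mathrm{NP}$ rules out NP-hardness, case~(1) of \Cref{thm:main} must hold: there is a homomorphism $h\colon\rel M\to\rel N$ whose image $\alg N'$ is regular Abelian and satisfies $[h(R^M)]\subseteq R^N$, the coset closure being taken in the regular Abelian power monoid $(\alg N')^r$. Viewing $h$ as a monoid homomorphism $\alg M\to\alg N'$ and invoking the universal property of \Cref{le:abelian_regularization}, I obtain a monoid homomorphism $h'\colon\abreg{\alg M}\to\alg N'$ with $h=h'\circ\abreg{\pi}$; postcomposing with the inclusion $\alg N'\hookrightarrow\alg N$, we regard $h'$ as a monoid homomorphism $\abreg{\alg M}\to\alg N$. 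It then remains to verify that $h'$ is a homomorphism of relational structures $\abreg{\rel M}\to\rel N$, that is, that $h'([\abreg{\pi}(R^M)])\subseteq R^N$. For this I would prove the auxiliary fact that any monoid homomorphism between regular Abelian monoids preserves complete inverses --- immediate from the uniqueness characterization of $a^{-1}$ via $aa^{-1}=a^{-1}a$, $a^2a^{-1}=a$, $a^{-2}a=a^{-1}$ recalled in \Cref{sect:prelims} --- and therefore commutes with the operations $U\mapsto U^{-1}$, $U\mapsto\langle U\rangle$, and $(U,V)\mapsto U\otimes V$, so that $g([U])=[g(U)]$ for every regular set $U$. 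Applying this with $g=h'$ and $U=\abreg{\pi}(R^M)$, and using that $h=h'\circ\abreg{\pi}$ yields $h'(\abreg{\pi}(R^M))=h(R^M)$, we obtain $h'([\abreg{\pi}(R^M)])=[h(R^M)]\subseteq R^N$, as required.

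The step requiring the most care is this last one: checking that the coset-closure operator $[\,\cdot\,]$ commutes with homomorphisms of regular Abelian monoids --- which hinges on such homomorphisms preserving complete inverses --- and being precise about the ambient power monoid in which each closure is computed, namely $(\abreg{\alg M})^r$ on the source side and $(\alg N')^r$ on the target side. The remaining ingredients, namely the direction of the reduction in the ``if'' case and the bookkeeping around \Cref{thm:main} and \Cref{le:abelian_regularization} in the ``only if'' case, are routine.
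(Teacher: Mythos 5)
Your proof is correct and follows essentially the same route as the paper, which derives this corollary from \Cref{thm:main} together with the equivalence (1)$\Leftrightarrow$(4) of \Cref{prop:main_tractability} (an application of \Cref{le:abelian_regularization}), with your ``if'' direction replacing the detour through condition~(1) by the standard sandwich argument through $\csp(\abreg{\rel M})$. You helpfully spell out a detail the paper's terse proof of (1)$\Rightarrow$(4) leaves implicit, namely that monoid homomorphisms between regular Abelian monoids preserve complete inverses and hence commute with the coset-closure operator $[\,\cdot\,]$, which is exactly what upgrades $h'$ from a homomorphism of monoids to a homomorphism of relational structures.
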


Another implication of \Cref{thm:main} is that a similar dichotomy holds if the templates $\rel M,\rel N$ are expansions of \emph{groups} instead of monoids.
In that case, the terms appearing in an input to $\pcsp(\rel M,\rel N)$ can also use the inverse symbol. However, up to introducing a new variable $x_{-}$ and constraints $x_{-}x=e=xx_{-}$ for each variable $x$, we reduce ``group instances'' to ``monoid instances''.
Moreover, a group is generated by $U$ if, and only if, it is generated as a monoid by $U\cup U^{-1}$.
\begin{corollary}
    Let $\rel G=(\alg G,R^G),\rel H=(\alg H,R^H)$ be expansions of groups such
    that there exists a homomorphism $\rel G\to\rel H$. Assume that $\alg G$ is
    finitely generated and $\alg H$ is finite. Then, either (1) there exists a
    homomorphism $h\colon\rel G \to \rel H$ with a commutative image such that
    $[h(R^G)] \subseteq R^H$, in which case $\pcsp(\rel G,\rel H)$ is solvable in
    polynomial time, or (2) $\pcsp(\rel G,\rel H)$ is NP-hard.
\end{corollary}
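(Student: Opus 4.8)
The plan is to reduce the group setting to the monoid setting of \Cref{thm:main} by forgetting the inverse operation. Let $\rel G^{\textnormal{mon}}$ and $\rel H^{\textnormal{mon}}$ denote the expansions of the \emph{monoid} reducts of $\alg G$ and $\alg H$ by the relations $R^G$ and $R^H$; these are expansions of monoids in the sense of \Cref{thm:main}. Every group homomorphism is in particular a monoid homomorphism, so the assumed homomorphism $\rel G\to\rel H$ yields $\rel G^{\textnormal{mon}}\to\rel H^{\textnormal{mon}}$. Moreover, if a finite set $U$ generates $\alg G$ as a group, then $U\cup U^{-1}$ generates its monoid reduct, so $\rel G^{\textnormal{mon}}$ is an expansion of a finitely generated monoid, while $\rel H^{\textnormal{mon}}$ is finite. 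Hence \Cref{thm:main} applies to the pair $(\rel G^{\textnormal{mon}},\rel H^{\textnormal{mon}})$.

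Next I would check that $\pcsp(\rel G,\rel H)$ and $\pcsp(\rel G^{\textnormal{mon}},\rel H^{\textnormal{mon}})$ are \emph{polynomial-time equivalent}. One direction is immediate: an instance of the monoid problem is in particular an instance of the group problem (one simply never uses the inverse symbol), and this changes neither the existence of a homomorphism to $\rel G$ nor to $\rel H$. For the converse, given an instance $\rel X$ of $\pcsp(\rel G,\rel H)$ --- a system of equations whose terms are words over $\{x_1,x_1^{-1},\dots,x_n,x_n^{-1}\}$ together with $R$-constraints --- build $\rel X'$ by adding, for each variable $x$, a fresh variable $x_{-}$ and the equations $x x_{-}=e$ and $x_{-}x=e$, and by replacing every occurrence of $x^{-1}$ in a term by $x_{-}$; this takes polynomial time. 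If $g$ is a homomorphism from $\rel X$ to $\rel G$ (resp.\ $\rel H$), then the extension with $g(x_{-}):=g(x)^{-1}$ is a homomorphism from $\rel X'$ to $\rel G^{\textnormal{mon}}$ (resp.\ $\rel H^{\textnormal{mon}}$). Conversely, any homomorphism $g'$ from $\rel X'$ to $\rel G^{\textnormal{mon}}$ (resp.\ $\rel H^{\textnormal{mon}}$) satisfies $g'(x)g'(x_{-})=e$ in a group, which forces $g'(x_{-})=g'(x)^{-1}$; hence the restriction of $g'$ to the original variables is a homomorphism from $\rel X$ to $\rel G$ (resp.\ $\rel H$). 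Thus $\rel X$ is a \Yes-instance (resp.\ \No-instance) of $\pcsp(\rel G,\rel H)$ if and only if $\rel X'$ is a \Yes-instance (resp.\ \No-instance) of $\pcsp(\rel G^{\textnormal{mon}},\rel H^{\textnormal{mon}})$.

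Finally I would translate the two cases of \Cref{thm:main} back. If $\pcsp(\rel G^{\textnormal{mon}},\rel H^{\textnormal{mon}})$ is NP-hard, then so is $\pcsp(\rel G,\rel H)$ by the equivalence above, which is case (2). Otherwise \Cref{thm:main} yields a homomorphism $h\colon\rel G^{\textnormal{mon}}\to\rel H^{\textnormal{mon}}$ with regular Abelian image satisfying $[h(R^G)]\subseteq R^H$. A monoid homomorphism between the monoid reducts of groups is automatically a group homomorphism, so $h\colon\rel G\to\rel H$; its image $h(\alg G)$ is a subgroup of $\alg H$, hence regular as a monoid, so the condition on $h$ is exactly that its image be Abelian. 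Together with the fact that $\pcsp(\rel G,\rel H)$ is solvable in polynomial time (being equivalent to $\pcsp(\rel G^{\textnormal{mon}},\rel H^{\textnormal{mon}})$), this is case (1). The argument is essentially this routine reduction; the only point requiring care --- and it presents no real difficulty --- is the polynomial-time equivalence of the two promise problems, specifically that the added equations $xx_{-}=e$ genuinely pin the value of $x_{-}$ to the inverse of $x$ in any group, which is immediate from the uniqueness of inverses.
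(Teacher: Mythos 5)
Your proof is correct and follows essentially the same route the paper sketches immediately before the corollary: view the groups as their monoid reducts (noting $U\cup U^{-1}$ generates the reduct as a monoid when $U$ generates the group), reduce group instances to monoid instances by adding a variable $x_-$ with constraints $xx_-=e=x_-x$ for each $x$, and invoke \Cref{thm:main}. You flesh out a couple of details the paper leaves implicit --- checking both directions of the polynomial-time equivalence, and observing that the image of a group homomorphism is automatically a regular monoid so the regularity condition of \Cref{thm:main} simplifies to Abelianness --- but these are exactly the points one would want to verify and present no surprises.
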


The problem $\pcsp(\rel A,\rel B)$ is said to be \emph{finitely tractable} if there exist a finite structure $\rel C$ and homomorphisms $\rel A\to\rel C$ and $\rel C\to\rel B$ such that $\csp(\rel C)$ is solvable in polynomial time.
The study of finite tractability in promise constraint satisfaction is important, as it highlights some essential differences between the non-promise setting (where $\rel A=\rel B$).
It is known that there exist problems $\pcsp(\rel A,\rel B)$ with $\rel A,\rel
B$ finite that are solvable in polynomial time but that are not finitely
tractable~\cite{BBKO21,AB21}. 
A simple consequence of~\Cref{thm:main} is that such problems do not appear in the class of problems under consideration.
\begin{corollary}
Assume P${}\neq{}$NP.
    Suppose that $\rel M,\rel N$ are expansions of monoids $\alg M,\alg N$ such that $\alg M$ is finitely generated and $\alg N$ is finite.
    If $\pcsp(\rel M,\rel N)$ is solvable in polynomial time, then it is finitely tractable.
\end{corollary}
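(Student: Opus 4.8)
The plan is to read off the required intermediate structure directly from alternative~(1) of \Cref{thm:main}. Since $\pcsp(\rel M,\rel N)$ is solvable in polynomial time and $\mathrm P\neq\mathrm{NP}$, alternative~(2) of \Cref{thm:main} is excluded, so alternative~(1) holds: there is a homomorphism $h\colon\rel M\to\rel N$ whose image $\alg C\coloneqq h(\alg M)$ is a regular Abelian submonoid of $\alg N$ and which satisfies $[h(R^M)]\subseteq R^N$. As $\alg N$ is finite, so is $\alg C$. Setting $R^C\coloneqq[h(R^M)]$ and $\rel C\coloneqq(\alg C,R^C)$, I would show that $\rel C$ witnesses finite tractability, that is, $\rel M\to\rel C$, $\rel C\to\rel N$, and $\csp(\rel C)$ is in polynomial time. (Note that the Abelian regularization $\abreg{\rel M}$ of \Cref{le:abelian_regularization} does not by itself suffice, since it may be infinite; alternative~(1) is what additionally supplies a map into the \emph{finite} structure $\rel N$, whose image then furnishes the finite intermediate structure.)

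For the two homomorphisms: corestricted to its image, $h$ is a surjective monoid homomorphism $\alg M\to\alg C$ sending $R^M$ into $[h(R^M)]=R^C$ (as $U\subseteq[U]$ always), so $h\colon\rel M\to\rel C$; and the inclusion $\alg C\hookrightarrow\alg N$ is a monoid homomorphism sending $R^C=[h(R^M)]$ into $R^N$ by alternative~(1), so $\rel C\to\rel N$. Here one point deserves a brief check: that $R^C=[h(R^M)]$ lies in $\alg C^r$ — indeed is a coset of $\alg C^r$ — and not merely of $\alg N^r$. This holds because $\alg C^r$ is a regular Abelian submonoid of $\alg N^r$ and, by the uniqueness of inverses of completely regular elements recorded after \Cref{th:green_theorem}, the inverse in $\alg N^r$ of an element of $\alg C^r$ already lies in $\alg C^r$; hence $[h(R^M)]=h(R^M)\otimes\langle h(R^M)^{-1}\otimes h(R^M)\rangle$ stays inside $\alg C^r$, and is a coset there by construction.

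It then remains to note that $\csp(\rel C)$ is solvable in polynomial time, which is immediate from the algorithmic half of \Cref{thm:main}: $\alg C$ is a finite, hence finitely generated, regular Abelian monoid and $R^C$ is a coset of $\alg C^r$, so $\rel C$ is an expansion of a finitely generated regular Abelian monoid by a coset, a class whose CSP is shown to be tractable. Combining the three facts gives the desired finite structure $\rel C$. I do not expect a real obstacle, as the statement essentially repackages alternative~(1) of \Cref{thm:main}; the only things to be careful about are the two items flagged above — that the coset $[h(R^M)]$ is well defined inside the finite regular Abelian monoid $\alg C^r$, and that the tractability result underlying \Cref{thm:main} is cited in sufficient generality to cover $\csp(\rel C)$.
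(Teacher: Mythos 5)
Your argument is correct and mirrors the paper's intended route, which is the implication (1)~$\Rightarrow$~(3) of \Cref{prop:main_tractability}: there $\rel A$ is taken to be exactly your $\rel C$, namely the expansion of $\im(h)$ by the coset $[h(R^M)]$, with $h$ providing $\rel M\to\rel A$ and the inclusion providing $\rel A\to\rel N$. The two small details you flag are sound: the coset $[h(R^M)]$ indeed lives inside $\alg C^r$ because, by the equational characterization of completely regular inverses given after \Cref{th:green_theorem}, the inverse computed in $\alg N^r$ of a regular element of the regular submonoid $\alg C^r$ agrees with its inverse in $\alg C^r$; and tractability of $\csp(\rel C)$ is available either via \Cref{thm:algorithm} (as you cite) or, as in the paper's proof of (1)~$\Rightarrow$~(3), by directly exhibiting $2$-block symmetric polymorphisms and invoking the BLP{+}AIP criterion of \Cref{th:BLP+AIP} — both are adequate, and this is the only point where you diverge slightly from the paper's phrasing.
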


\section{The Structure of Regular Commutative Monoids}
\label{sec:algebra}

In this section we introduce some elementary facts about monoids that will be used in our main results. We refer to \Cref{ap:algebra} for the short proofs, which we include for completeness although they might be known.
We start with the following characterization of {regular} commutative monoids, which is a consequence of Green's theorem. 

\begin{proposition}[{\cite[Proposition 4.1.1]{howie1995fundamentals}}]\label{prop:charRM}
Let $\alg M$ be a commutative monoid. Then $\alg M$ is {regular} if, and only if, each of its $\Hcal$-classes contains an idempotent element.     
\end{proposition}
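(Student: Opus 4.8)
The plan is to obtain the statement as an immediate consequence of Green's theorem (\Cref{th:green_theorem}), together with the elementary observation that every subgroup of $\alg M$ is confined to a single $\sim$-class. In fact this is precisely the content of the remark following \Cref{th:green_theorem}, so the ``proof'' amounts to spelling that remark out.

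For the forward implication, I would assume $\alg M$ is regular and fix an arbitrary $\sim$-class $S$. Pick some $s\in S$. By regularity $s$ belongs to a subgroup $\alg H\leq\alg M$ with identity $e_H$, and $e_H$ is idempotent. The key auxiliary step is to verify that $\alg H$ lies inside one $\sim$-class: for any $a,b\in\alg H$ we have $a=(ab^{-1})b$ and $b=(ba^{-1})a$, where all products and the inverse are taken inside $\alg H$; these equalities witness $a\preceq b$ and $b\preceq a$, hence $a\sim b$. Consequently $\alg H$ is contained in the $\sim$-class of $s$, which is $S$, and in particular the idempotent $e_H$ lies in $S$.

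For the backward implication, I would assume that every $\sim$-class of $\alg M$ contains an idempotent and fix an arbitrary $s\in M$. Its $\sim$-class $S$ contains an idempotent by hypothesis, so \Cref{th:green_theorem} tells us that $S$ is a subgroup of $\alg M$. Since $s\in S$, the element $s$ belongs to a subgroup; as $s$ was arbitrary, $\alg M$ is regular. Combining the two implications gives the equivalence.

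I do not anticipate a genuine obstacle. The only step that is not a direct appeal to \Cref{th:green_theorem} is the claim that a subgroup occupies exactly one $\sim$-class, and that reduces to the one-line computation above once the inverse inside the subgroup is used; everything else is bookkeeping with the partition of $M$ into $\sim$-classes.
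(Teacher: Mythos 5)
Your proof is correct and matches the paper's intent: the paper does not prove this proposition itself but cites Howie and remarks that it is a consequence of Green's theorem, which is exactly what you spell out. One small bookkeeping slip: the paper's relation $a\preceq b$ requires \emph{both} a left multiplier $c_1$ with $c_1 b = a$ and a right multiplier $c_2$ with $bc_2 = a$, so your identity $a=(ab^{-1})b$ witnesses only half of $a\preceq b$ — you should also record $a=b(b^{-1}a)$ (and likewise $b=a(a^{-1}b)$ alongside $b=(ba^{-1})a$); all four products are available inside the subgroup $\alg H$, so the argument goes through unchanged.
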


\par
Given a commutative monoid $\alg M$, we write $M_I$ for its set of idempotent elements. \Cref{prop:charRM} implies that if $\alg M$ is {regular} we can write $\alg M= \bigsqcup_{d\in M_I} \alg H_d$, where
$\alg H_d$ is the $\Hcal$-class of $d\in M_I$, which also is the maximal subgroup containing this element, by \ref{th:green_theorem}. 
The set $M_I$ has the following nice properties.

\begin{restatable}{lemma}{structureabelianmonoid}
    \label{structure-abelian-monoid}
    Let $\alg M$ be a commutative monoid and $M_I\subseteq M$ be the subset of its
    idempotent elements. Then, (1) $\alg M_I$ is a semilattice, and
    (2) if $\alg M$ is {regular} and finitely generated, then $\alg M_I$ is finite.%
\end{restatable}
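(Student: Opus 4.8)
I would handle the two parts separately; part~(1) is immediate, while part~(2) is the substantive one. For part~(1), I would first check that $M_I$ is a submonoid of $\alg M$: it contains $e_M$, and if $a,b\in M_I$ then commutativity gives $(ab)^2=a^2b^2=ab$, so $ab\in M_I$. The inherited operation is associative and commutative, and every element of $M_I$ is idempotent by definition, so $\alg M_I$ is an Abelian idempotent monoid, i.e.\ a semilattice.

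For part~(2), the idea is to exhibit a \emph{finite} set of idempotents that generates all of $M_I$. Since $\alg M$ is regular, every $s\in M$ lies in a unique maximal subgroup $\alg G_{d_s}$ whose identity is the idempotent $d_s$, and there one has $d_s=ss^{-1}$ and $sd_s=s$ (\Cref{prop:charRM,th:green_theorem} and the surrounding discussion). Fix a finite generating set $S$ of $\alg M$ and put $E=\{\,d_s : s\in S\,\}$, a finite set of idempotents. I claim $M_I\subseteq\langle E\rangle$. Given $d\in M_I$, write $d=s_1\cdots s_k$ with all $s_i\in S$ (the case $k=0$, i.e.\ $d=e_M$, being trivial) and set $e=d_{s_1}\cdots d_{s_k}\in\langle E\rangle$. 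Using commutativity and $s_id_{s_i}=s_i$ one computes $de=\prod_{i}(s_id_{s_i})=\prod_i s_i=d$; using $d_{s_i}=s_is_i^{-1}$ one rewrites $e=d\cdot(s_1^{-1}\cdots s_k^{-1})$, so that $de=d^2\cdot(s_1^{-1}\cdots s_k^{-1})=d\cdot(s_1^{-1}\cdots s_k^{-1})=e$. Hence $d=de=e\in\langle E\rangle$. Finally, $\langle E\rangle$ is generated by finitely many commuting idempotents, so each of its elements is the product over some subset of $E$; thus $|M_I|=|\langle E\rangle|\le 2^{|E|}<\infty$.

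The only non-routine point is the identity $d=e$ in part~(2): it is what forces every idempotent to be a product of the ``base'' idempotents $d_s$ attached to the generators, and it relies on both hypotheses — that $\alg M$ is Abelian (to reorder products freely) and that each generator is regular (so that $s^{-1}$ is available). The remaining ingredients — the semilattice structure of $\alg M_I$, the inclusion $\langle E\rangle\subseteq M_I$ (immediate from part~(1)), and the counting bound $2^{|E|}$ — are routine.
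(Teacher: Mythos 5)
Your proof is correct. Part (1) matches the paper (where it is simply declared ``straightforward''). Part (2) reaches the same conclusion by a genuinely more explicit route. The paper argues abstractly: since $\alg M$ is regular and Abelian, $\sim$ is a congruence and the map sending each $\sim$-class to its unique idempotent is an isomorphism $\alg M/\sim\cong\alg M_I$; a quotient of a finitely generated monoid is finitely generated, and a finitely generated semilattice is finite. You bypass the quotient and the fact that $\sim$ is a congruence, and instead verify directly that $E=\{d_s:s\in S\}$ generates $\alg M_I$, via the computation $d=de=e$ for an idempotent $d=s_1\cdots s_k$ with $e=d_{s_1}\cdots d_{s_k}$, closing with the $2^{|E|}$ bound. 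The core insight is the same in both: the idempotents $d_s$ attached to a finite generating set $S$ of $\alg M$ generate $\alg M_I$ (equivalently, $a\mapsto d_a$ is a surjective homomorphism onto $\alg M_I$, which your calculation re-derives by hand in the special case needed). Your argument is more elementary and self-contained, while the paper's is shorter because it leans on the abstract quotient structure and a general fact about finitely generated semilattices.
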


It is well-known that for any finite group $\alg G$ there is a constant $C\in \NN$ such that $a^C=e_{\little{G}}$ for all $a\in G$. For monoids the following analogue holds.

\begin{restatable}{lemma}{idempotentcst}
\label{le:idempotent_constant}
    Let $\alg M$ be a finite monoid. There is an integer $C>1$ such that $a^C$ is idempotent for all $a\in \alg M$. If $a\in \alg M$ is a group element, then it also holds that $a^{C-1}=a^{-1}$. Additionally, given $a\in \alg M$, there is a unique idempotent element $d_a$ that satisfies $d_a=a^n$ for some $n\in \NN$.
\end{restatable}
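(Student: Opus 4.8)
The plan is to run the classical ``index and period'' analysis of the cyclic subsemigroup generated by a single element, and then make the resulting exponent uniform over all of $\alg M$.

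First I would fix $a\in\alg M$ and look at the sequence $a,a^2,a^3,\dots$. Since $\alg M$ is finite, there are $i<j$ with $a^i=a^j$; let $m_a$ be the least such $i$ and let $p_a=j-i$, so that $a^{n+p_a}=a^n$ for every $n\geq m_a$, and hence $a^{n+kp_a}=a^n$ for all $k\geq 0$. Both $m_a$ and $p_a$ are at most $|M|$. The key observation is that whenever $n\geq m_a$ and $p_a\mid n$, the element $a^n$ is idempotent: indeed $a^n\cdot a^n=a^{2n}=a^n$, because $2n\geq n\geq m_a$ and $2n\equiv n\pmod{p_a}$. Now I would take $C$ to be any integer with $C>1$ that is a common multiple of all the periods $p_a$ ($a\in M$) and is at least $\max_{a\in M} m_a$ — for instance a sufficiently large multiple of $\operatorname{lcm}\{p_a : a\in M\}$. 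By the observation, $a^C$ is idempotent for every $a\in\alg M$, which is the first claim.

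For the uniqueness statement, suppose $a^n$ and $a^{n'}$ are both idempotent with $n,n'\geq 1$. Since an idempotent $e$ satisfies $e^k=e$ for all $k\geq 1$, we get $a^{nn'}=(a^n)^{n'}=a^n$ and likewise $a^{nn'}=(a^{n'})^{n}=a^{n'}$, whence $a^n=a^{n'}$. Thus the idempotent power of $a$ is unique; call it $d_a$. For the regular case, if $a$ is {regular} then $a$ lies in its maximal subgroup $\alg G_d\leq\alg M$ with identity $d$, and since $\alg G_d$ is closed under the monoid operation, every power $a^n$ lies in $\alg G_d$. In particular $a^C\in\alg G_d$ is idempotent, and by Green's theorem (\Cref{th:green_theorem}) $\alg G_d$ contains exactly one idempotent, namely $d$; hence $a^C=d$. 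Multiplying by the inverse $a^{-1}$ of $a$ inside $\alg G_d$ yields $a^{C-1}=a^C a^{-1}=d\,a^{-1}=a^{-1}$ (note $C-1\geq 1$ since $C>1$), as required.

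None of these steps presents a real obstacle; this is a routine classical fact. The only two points that need care are (i) extracting a single exponent $C$ that works simultaneously for all elements, which is handled by taking a common multiple of the periods dominating all the indices, and (ii) the identification $a^C=d$ in the {regular} case, which is precisely where uniqueness of the idempotent inside a subgroup — i.e.\ Green's theorem — is invoked.
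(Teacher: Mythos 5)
Your proof is correct and follows essentially the same index-and-period strategy as the paper for constructing $C$, so the core of the argument matches. The one genuine (if minor) divergence is the uniqueness of the idempotent power: you prove it directly — if $a^n$ and $a^{n'}$ are both idempotent then $a^n = a^{nn'} = a^{n'}$ — whereas the paper first shows $a^n \sim a^{n'}$ and then invokes Green's theorem (\Cref{th:green_theorem}) to conclude a $\sim$-class contains at most one idempotent. Your route is cleaner and more elementary for that sub-claim, reserving Green's theorem only for the regular case where both proofs use it; the paper's route, on the other hand, keeps the argument stylistically uniform by funnelling everything through the Green-relation machinery it has already set up.
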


This way, given an element $a$ in a finite monoid $\alg M$, we write $d_a$ for the only idempotent element that can be expressed as $a^n$ for some $n\in \NN$.

\begin{restatable}{lemma}{regsubmonoid}
\label{le:regular_submonoid}
Let $\alg M$ be a finite commutative monoid, $M_I\subseteq M$ its subset of idempotent elements, and $M_\dagger\subseteq M$ its subset of group elements. The following hold:
(1) The map $\pi_I: \alg M \to \alg M_I$ given by $a\mapsto d_a$
is a surjective monoid homomorphism satisfying $\pi_I\circ \pi_I= \pi_I$. 
(2) $\alg M_\dagger$ is a regular submonoid of $\alg M$. 
(3) The map $\pi_\dagger: \alg M \to \alg M_\dagger$ given by  
    $a\mapsto d_a a$ is a surjective monoid homomorphism satisfying $\pi_\dagger \circ \pi_\dagger = \pi_\dagger$.
\end{restatable}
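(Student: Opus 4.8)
The plan is to reduce all three claims to two elementary facts about the constant $C$ from \Cref{le:idempotent_constant}. First, for every $a\in\alg M$ we have $d_a=a^C$: by \Cref{le:idempotent_constant} the element $a^C$ is idempotent, and it is a power of $a$, so by the uniqueness clause of that lemma it is exactly the idempotent power $d_a$. Second, I would record the characterisation
\[ a\text{ is {regular}} \iff a\,d_a=a \iff a^{C+1}=a. \]
The forward implication holds because if $a$ lies in a subgroup $\alg S\leq\alg M$ with identity $e_S$, then $a$ has finite order in $\alg S$, so some power of $a$ equals $e_S$; hence $e_S$ is an idempotent power of $a$, so $e_S=d_a$ and $a\,d_a=a\,e_S=a$. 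Conversely, if $a^{C+1}=a$ then $a^{C+k}=a^k$ for all $k\geq 1$, so $\{a,a^2,\dots,a^C\}$ is closed under multiplication, has $a^C=d_a$ as identity, and $a^{C-k}$ inverts $a^k$; thus $a$ belongs to a subgroup. The equivalence $a\,d_a=a\iff a^{C+1}=a$ is just $d_a=a^C$.

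With these facts, part (1) follows since $\alg M$ is Abelian: $\pi_I(ab)=(ab)^C=a^Cb^C=\pi_I(a)\pi_I(b)$ and $\pi_I$ sends the identity to itself, so it is a monoid homomorphism into the submonoid $\alg M_I$ (a submonoid of $\alg M$ by \Cref{structure-abelian-monoid}(1)); it fixes every idempotent, hence it is surjective and $\pi_I\circ\pi_I=\pi_I$. For part (2), the identity lies in $M_\dagger$, and if $a,b\in M_\dagger$ then $(ab)\,d_{ab}=(ab)^{C+1}=a^{C+1}b^{C+1}=ab$, so $ab\in M_\dagger$ by the characterisation; thus $\alg M_\dagger$ is a submonoid of $\alg M$. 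It is {regular} because, for {regular} $a$, every element of the maximal subgroup $\alg G_{d_a}$ is {regular} and therefore lies in $M_\dagger$, so $\alg G_{d_a}$ is a subgroup of $\alg M_\dagger$ witnessing regularity of $a$ inside $\alg M_\dagger$. For part (3), note $\pi_\dagger(a)=d_a a=a^{C+1}$ lies in the cyclic group formed by the powers of $a$ of exponent at least $C$ (this sequence is eventually periodic with period dividing $C$, with identity $a^C=d_a$), hence $\pi_\dagger$ maps into $M_\dagger$; the distributive computation $\pi_\dagger(ab)=(ab)^{C+1}=a^{C+1}b^{C+1}=\pi_\dagger(a)\pi_\dagger(b)$ (and $\pi_\dagger(e)=e$) shows it is a homomorphism; and it restricts to the identity on $M_\dagger$ since $a\,d_a=a$ for {regular} $a$, giving both surjectivity onto $\alg M_\dagger$ and $\pi_\dagger\circ\pi_\dagger=\pi_\dagger$.

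I do not anticipate a genuine obstacle: once $d_a=a^C$ and the regularity characterisation are in place, every remaining step is a routine exponent manipulation justified by commutativity. The one spot deserving a sentence of care is verifying that $a^{C+1}$ is {regular} for an \emph{arbitrary} $a$, which I would handle exactly via the eventual periodicity of $(a^{C+k})_{k\geq 0}$ noted above rather than by appealing to the characterisation (which only applies once regularity is known).
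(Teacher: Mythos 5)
Your proof is correct and takes essentially the same route as the paper: both rely on the constant $C$ (the paper calls it $k$) from \Cref{le:idempotent_constant} with $d_a=a^C$, the characterisation ``$a$ regular $\iff a^{C+1}=a$'' (the paper cites \Cref{le:ab_reg_characterization} for one direction where you reprove it via the eventually-periodic cyclic subgroup $\{a^C,\dots,a^{2C-1}\}$, and both appeal to Green's theorem for the other), and commutativity to push exponents through products. You spell out a few steps the paper leaves implicit, such as closure of $M_\dagger$ under multiplication, that $\pi_\dagger$ actually lands in $M_\dagger$, and that it restricts to the identity there, which is harmless and if anything tightens the exposition.
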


Given a finite commutative monoid $\alg M$ and an element $a\in M$, we write $a_\dagger$ for the product $ad_a$ and, if $S\subseteq M$, we write $S_\dagger$ for $\{a_\dagger\mid a\in S\}$.

\begin{restatable}{lemma}{splitaux}
    \label{le:split_aux}
    Let $\alg M$ be a monoid, and $\alg N$ be a commutative monoid. Let $F$ be a finite family of 
    monoid homomorphisms $f: \alg M \to \alg N$, and $g: \alg M \to \alg N$ be the homomorphism given by $g(a)= \prod_{f\in F} f(a)$.\footnote{Observe that this notation is well-defined because products commute in a commutative monoid.} Then for each $S\subseteq \alg M$, it holds that $ 
    \Cos{g(S)_\dagger} \subseteq \bigotimes_{f\in F} \Cos{f(S)_\dagger}$. 
\end{restatable}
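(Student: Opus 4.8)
The plan is to reduce the statement to two facts: that the ``regularization'' map $\pi_\dagger$ is a monoid homomorphism (\Cref{le:regular_submonoid}), and that a product of cosets in an Abelian monoid is again a coset. (We use here that $\alg N$ is finite, as is implicit in the notation $(\cdot)_\dagger$.)

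First I would note that by \Cref{le:regular_submonoid} the map $\pi_\dagger\colon\alg N\to\alg N_\dagger$ sending $b$ to $b_\dagger=d_bb$ is a surjective monoid homomorphism onto the submonoid $\alg N_\dagger$ of regular elements. Since $g(a)=\prod_{f\in F}f(a)$ and $\pi_\dagger$ is a homomorphism into an Abelian monoid, $g(a)_\dagger=\pi_\dagger\bigl(\prod_{f\in F}f(a)\bigr)=\prod_{f\in F}f(a)_\dagger$ for every $a\in\alg M$. Letting $a$ range over $S$ yields $g(S)_\dagger\subseteq\bigotimes_{f\in F}f(S)_\dagger$; in particular $g(S)_\dagger\subseteq\alg N_\dagger$ is regular, so $[g(S)_\dagger]$ is well-defined, and likewise each $[f(S)_\dagger]$ makes sense.

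The crux is the claim that $W:=U_1\otimes\dots\otimes U_k$ is a coset whenever $U_1,\dots,U_k\subseteq\alg N$ are cosets. Regularity of $W$ is immediate since $\alg N_\dagger$ is a submonoid and each $U_i\subseteq\alg N_\dagger$. For the identity $W\otimes W^{-1}\otimes W=W$, the one computation to do is $(u_1\cdots u_k)^{-1}=u_1^{-1}\cdots u_k^{-1}$ for regular $u_i$: by commutativity $(u_1\cdots u_k)^2(u_1^{-1}\cdots u_k^{-1})=\prod_i(u_i^2u_i^{-1})=\prod_iu_i$, and symmetrically $(u_1^{-1}\cdots u_k^{-1})^2(u_1\cdots u_k)=\prod_iu_i^{-1}$, so the uniqueness of the inverse in a maximal subgroup forces the claim. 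Hence $W^{-1}=U_1^{-1}\otimes\dots\otimes U_k^{-1}$, and since $\otimes$ is commutative and associative on subsets of an Abelian monoid we may regroup to get $W\otimes W^{-1}\otimes W=\bigotimes_{i=1}^k\bigl(U_i\otimes U_i^{-1}\otimes U_i\bigr)=\bigotimes_{i=1}^kU_i=W$, using that each $U_i$ is a coset.

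Combining the pieces: $\bigotimes_{f\in F}[f(S)_\dagger]$ is a coset by the previous paragraph, and it contains $\bigotimes_{f\in F}f(S)_\dagger$, which in turn contains $g(S)_\dagger$. As $[g(S)_\dagger]$ is the intersection of all cosets containing $g(S)_\dagger$, it is contained in $\bigotimes_{f\in F}[f(S)_\dagger]$, which is the assertion. I expect the only real work to be the verification that a product of cosets is a coset — the remaining steps are bookkeeping with $\pi_\dagger$ and with the extremal characterization of the coset-generation operator $[\cdot]$.
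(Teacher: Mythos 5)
Your proof is correct, but it proceeds along a genuinely different route than the paper's. The paper's proof is a direct, hands-on computation: it picks an arbitrary element $a\in[g(S)_\dagger]$, unwinds the definition of $[\cdot]$ to write $a = g(b)_\dagger\prod_{i\in[k]}g(s_i)_\dagger\,g(t_i)_\dagger^{-1}$ with $b,s_i,t_i\in S$, then applies the factorization $g(s)_\dagger=\prod_{f\in F}f(s)_\dagger$ (from $\pi_\dagger$ being a homomorphism, \Cref{le:regular_submonoid}) and regroups the resulting product by $f$ to land each factor in $[f(S)_\dagger]$. You instead prove two structural facts and combine them via the extremal characterization of $[\cdot]$: first, that $g(S)_\dagger\subseteq\bigotimes_{f\in F}f(S)_\dagger$ (same key computation as the paper), and second, that $\bigotimes_{f\in F}[f(S)_\dagger]$ is itself a coset; since $[g(S)_\dagger]$ is the smallest coset containing $g(S)_\dagger$, the inclusion follows. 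Your approach buys a reusable lemma — products of cosets in an Abelian monoid are cosets — and avoids unwinding the definition of $[\cdot]$ for a specific element, at the cost of the extra work verifying $(u_1\cdots u_k)^{-1}=u_1^{-1}\cdots u_k^{-1}$ via the uniqueness characterization of inverses, which you carry out correctly. The paper's approach is shorter precisely because it implicitly uses that same distributivity of inverses but can get away with treating it as obvious inside the regrouping step; both arguments lean on the same underlying commutativity and the homomorphism property of $\pi_\dagger$, and both silently require $\alg N$ finite (or at least regular) for $(\cdot)_\dagger$ to make sense, which you flag.
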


\section{Hardness}
\label{sec:hardness}

We prove here the hardness side of~\Cref{thm:main}.
That is, we show that in the absence of a suitable homomorphism $\rel M\to\rel
N$ we have that $\pcsp(\rel M,\rel N)$ is NP-hard.
For this, we first introduce an additional decision problem from~\cite{BBKO21} whose NP-hardness
can be proved under a certain algebraic condition $(\star)$, then exhibit a reduction from this problem to $\pcsp(\rel M,\rel N)$ which holds beyond the setting of groups and monoids, and finally show that $(\star)$ holds for the templates under consideration in this paper. 

\paragraph*{Polymorphisms}
Given a relational structure $\rel A$ and a number $n\in \NN$, the $n$-th Cartesian power of $\rel A$, denoted $\rel A^n$, is defined in the natural way. That is, the universe of $\rel A^n$ is $A^n$, and for each symbol $R$ in the signature, $R(\rel A^n)$ is the set of tuples $(\bm{a}_1, \dots, \bm{a}_{\arty(R)})$ where $\bm a_i\in R(\rel A)$ for each $i\in [\arty(R)]$. \par

Let $\rel{A} \to \rel{B}$ be two relational structures. Given $n\in \NN$,
a $n$-ary \emph{polymorphism} of the pair $(\rel A, \rel B)$ is a homomorphism $f: \rel A^n \to \rel B$.
We denote by $\pol^{(n)}(\rel A, \rel B)$ the set of $n$-ary polymorphisms of $(\rel A, \rel B)$, and use $\pol(\rel A, \rel B)$ to represent the disjoint
union $\bigsqcup_{n\in \NN} \pol^{(n)}(\rel A, \rel B)$.

Given $n,m\in \NN$, $f\in \pol^{(n)}(\rel A, \rel B)$, and $\sigma: [n]\to [m]$ the element $f^\sigma\in \pol^{(m)}(\rel A, \rel B)$ is given by $f^\sigma(\bm a) = f(\bm a\circ\sigma)$ for each $\bm a\in A^m$. In this situation we say that $f^\sigma$ is a \emph{minor} of $f$.
The \emph{unary minor} of $f$ is the polymorphism $f^{\tau}$, where $\tau$ is the constant map from $[n]$ to $[1]$.
In this situation, it is easy to see that
$(f^{\sigma_1})^{\sigma_2}= f^{\sigma_2\circ \sigma_1}$
for all suitable maps $\sigma_1,\sigma_2$,
and that $f^{\sigma}= f$
whenever $f\in \pol^{(n)}$, and $\sigma$ is the identity over $[n]$ for some $n\in \NN$. This sort of algebraic structure has been called a \emph{minion} in the literature~\cite{BBKO21}.

\paragraph*{Minor Conditions} A minor condition $\Phi$ is a tuple $(U,V, E, (\phi_{(u,v)\in E}))$ where
$U,V$ are finite disjoint sets where each element $u\in U\sqcup V$ has an arity $\arty(u)\in \NN$ associated to it, $E\subseteq U\times V$, and
$\phi_{u,v}$ is a map from $[\arty(u)]$ to $[\arty(v)]$
for each $(u,v)\in E$. We say that $\Phi$ is \emph{trivial} if for each $u\in V\sqcup U$ there is an index $i_u$ satisfying that 
$\phi_{u,v}(i_u)= i_v$ for each $(u,v)\in E$. Given relational structures $\rel A \to \rel B$, we say that $\Phi$ is \emph{satisfiable} over $\pol(\rel A, \rel B)$ if for each
$u\in V\sqcup U$ there is an element $f_u\in \pol^{(\arty(u))}(\rel A, \rel B)$ 
and there exists $f_v\in \pol^{(\arty(v))}(\rel A, \rel B)$
satisfying $f_u^{\phi_{u,v}}= f_v$ for each $(u,v)\in E$. It can be seen that if $\Phi$ is trivial, then it is satisfiable over $\pol(\rel A, \rel B)$ for any choice of $\rel A\to \rel B$. \par

Let $\rel A,\rel B$ be relational structures such that $\rel A\to\rel B$,
and let $\ell\in\NN$.
The \emph{promise minor condition problem}, denoted by $\PMC_\ell(\rel A, \rel B)$,\footnote{For the sake of readability, we differ slightly here from the notation introduced in~\cite{BBKO21}.}
is the computational problem of distinguishing, given a finite minor condition $\Sigma$ with symbols of arity at most $\ell$, between the following two cases: (1) $\Sigma$ is trivial, and (2) $\Sigma$ is not satisfiable in $\pol(\rel A, \rel B)$. 

Let $\rel A \to \rel B$ be relational structures, and $\mathscr{N}\subseteq \pol(\rel A, \rel B)$ be a subset.
A \emph{selection function} over $\mathscr{N}$ is a function $\mathcal I$ with domain $\mathscr{N}$ such that for every $n\geq 1$, and every $f\in \mathscr{N}$ of arity $n$, $\mathcal I(f)$ is a subset of $[n]$, and such that whenever $f\in \pol^{(n)}(\rel A, \rel B) \cap \mathscr{N}$ and $\sigma\colon [n]\to [m]$ are such that $f^\sigma\in \mathscr{N}$, then $\sigma(\mathcal I(f))\cap \mathcal I(f^\sigma)\neq\emptyset$.
We say that $\mathcal I$ is bounded if $\{|\mathcal I(f)| : f\in
\mathscr{N}\}\subseteq\mathbb N$ is bounded. The proof 
of
Theorem~5.21 in~\cite{BBKO21} implies the following result (cf. also~\cite{Barto22:soda}
and~\cite[Corollary~4.5]{KO22:survey}).

\begin{theorem}[\cite{BBKO21}] \label{hardness-criterion}
    Let $\rel A \to \rel B$ be relational structures.
    Suppose that
    \begin{equation*}\tag{$\star$}\parbox{0.9\textwidth}{$\pol(\rel A, \rel B)$ is the union of subsets $\Mscr_1,\dots,\Mscr_k$ such that
    for each $i\in [k]$ there exists a bounded selection function for $\Mscr_i$.}\end{equation*} Then there exists some $\ell\in \mathbb N$ such that  $\PMC_{\ell^\prime}(\rel A, \rel B)$ is NP-hard for all $\ell^\prime \geq \ell$.
\end{theorem}

\paragraph*{Algebraic Approach in the Finitely Generated Case}
For \emph{finite} relational structures, the problem $\PMC_\ell(\rel A,\rel B)$ is
related to $\pcsp(\rel A,\rel B)$ by the following result, which lies at the heart of the so-called algebraic approach to promise constraint satisfaction.
\begin{theorem}[\cite{BBKO21}]\label{equivalence-PMC}
    Let $\rel A \to \rel B$ be finite relational structures.
    Then $\PMC_\ell(\rel A,\rel B)$ and  $\pcsp(\rel A,\rel B)$ are equivalent under logspace reductions for all large enough $\ell$.
\end{theorem}

Let $\alg M, \alg N$ be monoids, where $\alg M$ is finitely generated and $\alg
N$ is finite, and $\rel{M}, \rel{N}$ be expansions of $M,N$ by a single relation  $R(M), R(N)$ such that $\rel{M} \rightarrow \rel{N}$.
The hardness side of \Cref{thm:main} is shown by reducing $\PMC_\ell(\rel M, \rel N)$ to $\pcsp(\rel M, \rel N)$ 
and then showing that $\PMC_\ell(\rel M, \rel N)$ is NP-hard for $\ell$ sufficiently large.
The fact that $\rel M$ may be infinite means that \Cref{equivalence-PMC} does not yield the reduction we are looking for.
In general, \Cref{equivalence-PMC} is known to fail for infinite structures, since $\PMC_L(\rel A,\rel B)$ is always in NP (since the computationally harder problem to check whether a given minor condition $\Sigma$ is trivial or not is itself in NP), while the (P)CSP of infinite structures can have arbitrary high complexities, even under strong structural assumptions~\cite{Gillibert22:sicomp}. 

Our reduction is more general and applies to a wider context than for groups and monoids.
In order to present it, we will need the standard notions from universal algebra
to put our result in its natural habitat.
An \emph{algebraic signature} $\tau$ is a list of function symbols, each having a finite arity $n\in\mathbb N$.
As for relational structures, we assume in this work that algebraic signatures are finite.
A \emph{$\tau$-algebra} $\alg A$ is a tuple consisting of a set $A$ together with, for each $n$-ary operation symbol $f\in\tau$, an operation $f^{\alg A}\colon A^n\to A$. The operations $f^{\alg A}$ are called the basic operations of the algebra.
A \emph{term} (in the signature $\tau$) is a formal operation obtained by composing the symbols in $\tau$ in a way that respects the arities of the symbols.
Every term $t$ can be evaluated naturally in any $\tau$-algebra, yielding an operation $t^{\alg A}$.
For example, if $\tau$ consists of a single binary symbol $\cdot$, then $t_1(x,y,z) = (x\cdot y)\cdot z$ and $t_2(x,y,z)=x\cdot(y\cdot z)$ are both terms. Any set with a binary operation $\alg A=(A,\cdot^{\alg A})$ is a $\tau$-algebra. We have $t_1^{\alg A}=t_2^{\alg A}$ in any algebra where $\cdot^{\alg A}$ is associative, e.g., in monoids and groups.

We can treat any algebra as a relational structure, by replacing a basic operation of arity $n$ by a relation of arity $n+1$.
The notion of homomorphism between algebras can be defined as homomorphisms between the associated relational structures.
Alternatively, a homomorphism between two $\tau$-algebras $\alg A$ and $\alg B$ is a function $h\colon \alg A\to \alg B$ such that 
\[ h(f^{\alg A}(a_1,\dots,a_n)) = f^{\alg B}(h(a_1),\dots,h(a_n))\] holds for every symbol $f$ of arity $n$ in $\tau$ and every $a_1,\dots,a_n\in \alg A$.
By a simple induction, one sees that this equality must also hold for every term $t$.

For $\ell\geq 1$, the $\ell$th \emph{power} of $\alg A$, denoted by $\alg A^\ell$, is the algebra with base set $A^\ell$ and whose fundamental operations are defined component-wise, i.e., by
\[ f^{\alg A^\ell}(\bm a^1,\dots,\bm a^n) = (f^{\alg A}(a^1_1,\dots,a^n_1),\dots,f^{\alg A}(a^1_L,\dots,a^n_\ell)),\]
where $a^i_j$ denotes the $j$th component of the $L$-tuple $\bm a^i$.
This coincides with the definition of products for groups and monoids.

We say that an algebra $\alg A$ is \emph{finitely generated} if there exists a
finite subset $A'=\{a_1,\dots,a_r\}\subseteq A$, called a \emph{generating set}, such that every
$a\in\alg A$ is of the form $t^{\alg A}(a_1,\dots,a_r)$ for some term $t$.

In general, an algebra $\alg A$ can be finitely generated while its powers are not. A simple example is the algebra with a single unary function $\alg N=(\mathbb N;s^{\alg N})$ where $s^{\alg N}(n)=n+1$.
One sees that $\{1\}$ is a generating set of $\alg N$, while any generating set of $\alg N^2$ must contain the elements $(1,1),(1,2),(1,3),\dots$.
However, if $\alg A$ is a finitely generated group (or a monoid) with identity element $e$, then $\alg A^\ell$ is finitely generated for all $L\in\mathbb N$.
Indeed, if $G=\{g_1,\dots,g_r\}$ is a generating set of $\alg A$, then one sees that elements of the form $(e,\dots,e,g_i,e,\dots,e)$ for all $i$ and all possible positions of $g_i$ generate $\alg A^\ell$.

Finally, we note the following. Suppose that $\alg A,\alg B$ are $\tau$-algebras and $U=\{\alpha_1,\dots,\alpha_r\}$ is a generating set of $\alg A$.
Let $h\colon U\to B$ be a function.
Then $h$ can be extended to a homomorphism $\tilde h\colon\alg A\to\alg B$ if, and only if, for all terms $s,t$ of arity $r$, we have that if $s^{\alg A}(\alpha_1,\dots,\alpha_r)=t^{\alg A}(\alpha_1,\dots,\alpha_r)$, then $s^{\alg B}(h(\alpha_1),\dots,h(\alpha_r))=t^{\alg B}(h(\alpha_1),\dots,h(\alpha_r))$.
Indeed, we can simply define $\tilde h(s^{\alg A}(\alpha_1,\dots,\alpha_r))$ as $s^{\alg B}(h(\alpha_1),\dots,h(\alpha_r))$, which is well-defined by assumption and is a total function by assumption that $G$ generates $\alg A$.

\begin{proposition}\label{prop:reduction-fin-presentation-general}
    Let $\alg A,\alg B$ be algebras such that all finite powers of $\alg A$ are finitely generated and $\alg B$ is finite. Fix $m\in \NN$ and $R^A\subseteq A^m, R^B\subseteq B^m$ such that $\rel A=(\alg A,R^A)$ admits a homomorphism to $\rel B=(\alg B,R^B)$. Then there is a logspace reduction from $\mathrm{PMC}_\ell(\rel A,\rel B)$ to $\pcsp(\rel{A}, \rel{B})$ for any $\ell \in \NN$.
\end{proposition}
 \begin{proof}
    Consider an instance $\Sigma$ of $\PMC_\ell(\rel A,\rel B)$. 
    Without loss of generality, we can assume that all symbols in
    $\Sigma$ have arity $\ell$ (otherwise lower-arity symbols are padded
    by dummy variables). We define an instance $\I= \I_\Sigma$ of $\pcsp(\rel A, \rel B)$ that is constructible in logspace from 
    $\Sigma$.  Let $U$ be a finite generating set of $\alg A^L$.
    We write $U=\{\bm \alpha_1,\dots,\bm \alpha_r\}$. 
    Without loss of generality, we can assume that for every $\bm \alpha\in U$ and $\sigma\colon[L]\to [L]$, we have $\bm \alpha \circ \sigma\in U$ as well.
    
    For each $x\in X$ and $\bm \alpha\in U$, let $x(\bm \alpha)$ be a variable of $\I$.
    Our goal is to obtain that in every solution of $\I$ in $\rel B$, the values $x(\bm \alpha)$ define a map $U\to \alg B$ that extends to a homomorphism $\tilde x\colon \alg A^L\to \alg B$.
    Note that there are only finitely many functions $f\colon U\to\alg B$.
    For each such function that does not extend to a homomorphism $\tilde f\colon\alg A^L\to\alg B$, there must exist a pair $(s,t)$ of terms such that $s^{\alg A^L}(\bm \alpha_1,\dots,\bm \alpha_r) = t^{\alg A^L}(\bm \alpha_1,\dots,\bm \alpha_r)$ while $s^{\alg B}(f(\bm \alpha_1),\dots,f(\bm \alpha_r))\neq t^{\alg B}(f(\bm \alpha_1),\dots,f(\bm \alpha_r))$.
    Make a finite list $(s_1,t_1),\dots,(s_k,t_k)$ of such pairs of terms corresponding to maps $U\to \alg B$ that do not extend to a homomorphism.
    For each such pair, and each variable $x$, add the constraint 
    \[ s(x(\bm \alpha_1),\dots,x(\bm \alpha_r)) = t(x(\bm \alpha_1),\dots,x(\bm \alpha_r))\]
    to $\I$.
    Formally, for each variable $x$ and pair $(s,t)$, this involves the introduction of a bounded number of variables for each subterm of $s$ and $t$ and corresponding equations.
    All in all, in every solution to $\I$, we are guaranteed that $x\colon U\to\alg B$ extends to a homomorphism $\tilde x\colon\alg A^L\to \alg B$.

    We now additionally want to make sure that the extension $\tilde x\colon\alg A^L\to \alg B$ of $x$ is a homomorphism $\rel A^L\to \rel B$, i.e., that it is an element of $\pol(\rel A,\rel B)$.
    Note that there are only finitely many homomorphisms $\alg A^L\to\alg B$.
    Let $f_1,\dots,f_k$ be those homomorphisms that are not homomorphisms $\rel A^L\to\rel B$.
    For each $i\in [k]$, proceed as follows.
    First, pick elements $\bm{r}^1,\dots,\bm{r}^L\in R^A$ such that $f_i(\bm{r}^1,\dots,\bm{r}^L)\not\in R^B$.
    Each $(r_j^1,\dots,r_j^L)$ for $j\in [r]$ is equal to $t_j^{\alg A^L}(\bm \alpha_1,\dots,\bm \alpha_r)$ for some term $t_j$.
    Add to $\I$ the constraint
    \[ (t_1(x(\bm \alpha_1),\dots,x(\bm \alpha_r)),\dots,t_{\arty(R)}(x(\bm \alpha_1),\dots,x(\bm \alpha_r))) \in  R. \]

    Finally, we want that the resulting homomorphisms satisfy the minor condition $\Sigma$.
    Let $x=y^\sigma$ be a minor identity in $\Sigma$.
    We add the constraints
    \[x(\bm \alpha) = y(\bm \alpha \circ \sigma)  \]
    for every $\bm \alpha\in U$, where we recall that $\bm \alpha \circ \sigma$ is also an element of $U$.
    We prove that if $\tilde x$ and $\tilde y$ are the resulting homomorphisms $\alg A^L\to\alg B$, then $\tilde x(a_1,\dots,a_\ell) = \tilde y(a_{\sigma(1)},\dots,a_{\sigma(N)})$ holds for all $a_1,\dots,a_\ell\in\alg A$.
    There exists a term $t$ such that $t^{\alg A^L}(\bm \alpha_1,\dots,\bm \alpha_r) = (a_1,\dots,a_\ell)$.
    Thus, $t^{\alg A}(\alpha_{1i},\dots,\alpha_{ri})=a_i$ for all $i\in [L]$.
    It follows that $t^{\alg A}(\alpha_{1\sigma(i)},\dots,\alpha_{r\sigma(i)})=a_{\sigma(i)}$ for all $i\in [L]$ and therefore $t^{\alg A^\ell}(\bm \alpha_1\circ\sigma,\dots,\bm \alpha_r\circ\sigma)=(a_{\sigma(1)},\dots,a_{\sigma(N)})$.
    By definition of $\tilde x$ and $\tilde y$, we have \[\tilde x(a_1,\dots,a_\ell) = t^{\alg B}(x(\bm \alpha_1),\dots,x(\bm \alpha_r))\] and \[\tilde y(a_{\sigma(1)},\dots,a_{\sigma(N)})=t^{\alg B}(y(\bm \alpha_1\circ \sigma),\dots,y(\bm \alpha_r\circ\sigma)).\]
    Since $\I$ contains the constraint $x(\bm \alpha_i)=y(\bm \alpha_i\circ\sigma)$ for all $i\in [r]$, we obtain that
    $\tilde x(a_1,\dots,a_\ell)=\tilde y(a_{\sigma(1)},\dots,a_{\sigma(N)})$ is satisfied.

    It remains to prove that this is a correct reduction.
    We have already described above that if $\I$ is not a \No-instance of
    $\pcsp(\rel A,\rel B)$, i.e., if there is a solution $\xi$ to
    $\I$, then for each $x\in X$ the function $U\to B$ defined by $\bm \alpha\mapsto \xi(x(\bm\alpha))$ extends to an element $\tilde{x}\in\pol(\rel A, \rel B)$ showing
    that $\Sigma$ is satisfiable in $\pol(\rel A,\rel B)$ and therefore not a \No-instance of $\PMC_\ell(\rel A,\rel B)$.
    Therefore, \No-instances of $\PMC_\ell(\rel A,\rel B)$ are mapped to \No-instances of $\pcsp(\rel A,\rel B)$.   Conversely, suppose that $\Sigma$ is a \Yes-instance of $\PMC_\ell(\rel A,\rel B)$. Then there is a map $x \mapsto i_{x}$ 
    that satisfies $\sigma(i_x)=i_y$ for any constraint of the form $x^\sigma=y$ in $\Sigma$. Then, the map that sends each variable $x(\bm \alpha_j)$
    to the $i_x$-th component of $\bm \alpha_j$ is a solution over $\rel A$
    to our instance.

\end{proof}

\paragraph*{Polymorphisms of Expansions of Monoids}

For the NP-hardness part of~\Cref{thm:main}, we are left to show in this section that if there does not exist a homomorphism $h\colon\rel M\to\rel N$ whose image is a {regular} commutative submonoid of $\alg N$ and such that $\Cos{h(R(M))}\subseteq R(N)$, then the requirement in~\Cref{hardness-criterion} is met. \par
The following two auxiliary results show that in a finite commutative monoid $\alg M$, for a sufficiently large $\ell$ and any set $U\subseteq M$,
any product of the form $\prod_{i=1}^\ell a_i$,
where $a_i\in \Cos{U_\dagger}$ for each $i$, can be obtained alternatively as $\prod_{i=1}^\ell b_i$, where $b_i\in U$ for each $i$.

\begin{lemma}\label{splitting-monoids}
	Let $\alg M$ be a finite {regular} commutative monoid. There exists $\ell(\alg M)\in\mathbb N$ such that for all $n \geq \ell(\alg M)$ and for all $U\subseteq \alg M$ we have $\Cos{U}^{\otimes n} = U^{\otimes n}$.
\end{lemma}
\begin{proof}  
Recall the decomposition of {regular} commutative monoids described in \Cref{sec:algebra}. Let $M_I$ be the set of idempotent elements in $\alg M$, and $H_d$ the maximal subgroup of $\alg M$ containing $d$ for each $d\in M_I$. \par
For all $n$, we have $|U^{\otimes n}|\leq |U^{\otimes n+1}|$. To see this, 
we simply show that $|U^{\otimes n} \cap H_d| \leq
|U^{\otimes n+1} \cap H_d|$ for each $d\in M_I$.
Suppose that $|U^{\otimes n} \cap H_d|\geq 1$ (otherwise we are done).
Then there must be some element $a\in U$ with $d\preceq a$.
It follows that $ad \Hcal d$, so $ad\in H_d$. This way,
\[
a\otimes H_d = a \otimes (d \otimes H_d) = ad \otimes H_d \subseteq H_d,
\]
and therefore $a\otimes (U^{\otimes n} \cap H_d)\subseteq U^{\otimes n+1}\cap H_d$.
Moreover, $|a\otimes (U^{\otimes n} \cap H_d)| = |U^{\otimes n} \cap H_d|$ holds, proving the statement.
  Thus, since $\alg M$ is finite, there exists a natural number $\ell(M)$ such that $|U^{\otimes n}|=|U^{\otimes \ell(M)}|=\lambda$ holds for all $n\geq \ell(\alg M)$.

Now let us show that $|U\otimes (U^{-1}\otimes U)^{\otimes n}| \geq |U^{\otimes n}|$ for all $n\in \NN$. To do this we define an injective map
from $U^{\otimes n} \cap H_d$ to $U\otimes (U^{-1}\otimes U)^{\otimes n} \cap H_d$ for each $d\in M_I$. Suppose that $U^{\otimes n} \cap H_d$ is not empty. In particular, there must be some element $a\in U$ satisfying $d\preceq a$. We claim the map $b \mapsto a^{1-n} b$ is an injective map from $U^{\otimes n} \cap H_d$ to $U\otimes (U^{-1}\otimes U)^{\otimes n} 
\cap H_d$. 
 This is a well-defined map: if $b\in U^{\otimes n} \cap H_d$, then we can write $b=s_1 \cdots s_n$ for some $s_1,\dots, s_n \in U$, and then $a^{1-n}b= 
a(a^{-1}s_1)\cdots (a^{-1}s_n)$, which belongs to $U\otimes (U^{-1}\otimes U)^{\otimes n}$. The fact that $b\in H_d$ and $b\preceq a$ implies $a^{1-n} b\in H_d$. Finally, let us show that the map $b \mapsto a^{1-n} b$ is injective. 
Observe that $b=d b$, for all $b\in H_d$, so $a^{1-n} b = a^{1-n} b^\prime$ if,
  and only if, $(a^{1-n}d) b = (a^{1-n}d) b^\prime$ for each
$b,b^\prime \in U\otimes(U^{-1}\otimes U)^n \cap H_d$. However, the fact that
  $d\preceq a$ implies that $a^{1-n}d \in H_d$, so  $(a^{1-n}d) b = (a^{1-n}d)
  b^\prime$ holds if, and only if, $b = b^\prime$. \par
Now we can see that $\Cos{U}=U^{\otimes n}$ for some $n$. Indeed, $\Cos{U}=U \otimes (U^{-1}U)^{\otimes n}$ for all large enough $n$, so in particular $|\Cos{U}|\geq |U^{\otimes n}| = \lambda$. Since $U^{-1}\subseteq U^{\otimes k_M-1}$ for some constant $k_M$, we get $\Cos{U}\subseteq U^{\otimes (1+n k_M)}$ and this set has size at most $\lambda$, so the two sets are equal.\par
Finally, for all $n$, we trivially have $U^{\otimes n}\subseteq \Cos{U}^{\otimes n}$ and since for $n\geq \ell(\alg M)$ the two sets have size $\lambda$, we get $U^{\otimes n}=\Cos{U}^{\otimes n}$.
\end{proof}

\begin{lemma}\label{splitting-monoids_non_regular}
Let $\alg M$ be a finite commutative monoid. There exists $\ell(\alg M)\in\mathbb N$ such that for all $n \geq \ell(\alg M)$ and for all $U\subseteq \alg M$ we have $\Cos{U_\dagger}^{\otimes n} \subseteq U^{\otimes n}$.
\end{lemma}
\begin{proof}
    Let $k\in \NN$ be such that $a^k$ is idempotent for all $a\in \alg M$, as given by \Cref{le:idempotent_constant}. Let $\lambda$ be a constant witnessing \Cref{splitting-monoids} with respect to $\alg M_\dagger$. Let $\ell = \max( \lambda, 2k|M|)$. We claim that $\ell$ satisfies the lemma's statement. Fix $n\geq \ell$. Then, it is enough to show that $\Cos{U_\dagger}^{\otimes n} \subseteq U^{\otimes n}$.
    
    Let $a \in \Cos{U_\dagger}^{\otimes n}$. By \Cref{splitting-monoids}, $a= \prod_{i\in [n]} (b_i)_\dagger$ for some $b_1,\dots, b_n\in U$. Another way of writing this identity is
    $a = \prod_{b\in U^\prime} b_\dagger^{n_b}$, where 
    $U^\prime \subseteq U$, and the $n_b$ denote positive integers satisfying $\sum_{b\in U^\prime} n_b= n$. As $n \geq 2k|M|$, 
    and $|U^\prime|\leq |M|$, there are
    integers $(m_b)_{b\in U^\prime}$
    such that $\sum_{b\in U^\prime} m_b = n$ that satisfy both  $m_b  = n_b \mod k$, and $m_b \geq k$ for all $b\in U^\prime$. Let us construct this sequence. For each $b\in U^\prime$, let $0 < n^\prime_b \leq k$ be the only integer satisfying $n^\prime_b = n_b \mod k$. Define $m^\prime_b= n^\prime_b+k$ for each $b\in U^\prime$. Define 
    \[\tau= \sum_{b\in U^\prime} m^\prime_b - \sum_{b\in U^\prime} n_b.\] 
    As $n\geq 2k|M|$ and $m^\prime_b\leq 2k$ for each $b\in U^\prime$, we have
    that $\tau \leq 0$. Moreover, by construction $m^\prime_b = n_b \mod k$ for each $b\in U^\prime$. This way, $\tau = 0 \mod k$. In order to obtain the desired $(m_b)_{b\in U^\prime}$, define $m_b= m^\prime_b + \tau$ for one $b\in U^\prime$, and $m_{c}= m^\prime_c$
    for all $c\in U^\prime \setminus \{ b\}$. 
    This way,
    \[
    a = \prod_{b\in U^\prime} b_\dagger^{n_b}= \prod_{b\in U^\prime} b_\dagger^{m_b} = \prod_{b\in U^\prime} b^{m_b}.
    \]
    The last term belongs to $U^{\otimes n}$, so this completes the proof. 
\end{proof}

With this in hand, we are now able to provide a bounded selection function for the subset of the polymorphisms of $(\rel M,\rel N)$
with commutative image, and whose unary minor has a regular image.

Let $\alg M$ be a monoid with identity $e$. We say that two sets, $S, T\subseteq \alg M$ \emph{commute} if 
$ab = ba$ for every $a\in S, b\in T$.  Let $\alg M, \alg N$ be monoids and $f\colon \alg M^n\to \alg N$ be a homomorphism.
For $i\in [n]$, we define the homomorphism $f_i\colon \alg M\to \alg N$ by $f_i(x) = f(e,\dots,e,x,e,\dots,e)$, where $x$ is at position $i$.
This way, $f(x_1,\dots,x_n)=\prod_{i=1}^n f_i(x_i)$. Moreover, $\im(f_i)$ and $\im(f_j)$ commute for each $i\neq j$. 
We say that $i,j\in[n]$ are \emph{$f$-equivalent} if $f_i=f_j$. This clearly defines an equivalence relation on $[n]$ whose equivalence classes are called \emph{$f$-equivalence classes}. If $I$ is an $f$-equivalence class,
  we define $f_I$ as the homomorphism $f_i$ for an arbitrary $i\in I$.

\begin{observation}
\label{le:minoring}
    Let $\alg M, \alg N$ be monoids, $f\colon \alg M^n \rightarrow \alg N$ a homomorphism, and $g= f^\sigma$ for some $\sigma:[n]\to [m]$. Then, 
    $g_i(a) = \prod_{j\in \sigma^{-1}(i)} f_j(a)$
    for each $i\in [m]$, $a\in M$.
\end{observation}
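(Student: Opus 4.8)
The plan is to prove the identity by a direct computation, unwinding the definition of the minor $g=f^\sigma$ together with the defining property of monoid homomorphisms. Throughout, for a homomorphism $h\colon\alg M^k\to\alg N$ and $\ell\in[k]$, I write $h_\ell\colon\alg M\to\alg N$ for the component homomorphism $h_\ell(a)=h(e_{\little{M}},\dots,e_{\little{M}},a,e_{\little{M}},\dots,e_{\little{M}})$, with $a$ placed in coordinate $\ell$; one checks immediately that this is indeed a monoid homomorphism.

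First I would record the basic factorization underlying everything: any tuple $(a_1,\dots,a_n)\in M^n$ equals the ordered product $\prod_{j=1}^{n}\bm e^{(j)}(a_j)$, where $\bm e^{(j)}(a_j)\in M^n$ carries $a_j$ in coordinate $j$ and $e_{\little{M}}$ in all other coordinates; applying the homomorphism $f$ then gives $f(a_1,\dots,a_n)=\prod_{j=1}^{n}f_j(a_j)$. I would also note at this point that whenever $j\neq k$ the sets $f_j(M)$ and $f_k(M)$ commute elementwise, since tuples of $\alg M^n$ supported on disjoint coordinates commute; in particular the product $\prod_{j\in\sigma^{-1}(i)}f_j(a)$ appearing in the statement does not depend on the order in which it is taken, and is thus well-defined even though $\alg N$ need not be Abelian.

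Next, fixing $i\in[m]$ and $a\in M$, I would apply the definition of $g=f^\sigma$ to the tuple $\bm b:=\bm e^{(i)}(a)\in M^m$: we get $g_i(a)=g(\bm b)=f(\bm b\circ\sigma)$. The tuple $\bm b\circ\sigma=(b_{\sigma(1)},\dots,b_{\sigma(n)})\in M^n$ has its $j$-th entry equal to $a$ exactly when $\sigma(j)=i$, i.e.\ when $j\in\sigma^{-1}(i)$, and equal to $e_{\little{M}}$ otherwise. Feeding this tuple into the factorization from the previous paragraph yields $f(\bm b\circ\sigma)=\prod_{j\in\sigma^{-1}(i)}f_j(a)$, which is exactly the asserted formula.

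I do not expect any genuine obstacle here: this is a bookkeeping statement whose proof amounts to a change of variables. The only two places deserving an explicit word are the well-definedness of the unordered product $\prod_{j\in\sigma^{-1}(i)}f_j(a)$ (which follows from commutativity of disjointly supported tuples, hence of their images under $f$) and keeping the index sets straight when precomposing with $\sigma$.
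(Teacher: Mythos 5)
Your proof is correct, and since the paper states this as an unproven Observation, the argument you give is precisely the expected unwinding of the definitions: evaluate $g$ at the $m$-tuple with $a$ in coordinate $i$ and identity elsewhere, precompose with $\sigma$, and factor $f$ through its component homomorphisms. One small remark: the commutativity of disjointly-supported images, which you rightly record, is not strictly needed to make the product well-defined, since the equality $f(\bm b\circ\sigma)=\prod_{j=1}^n f_j((\bm b\circ\sigma)_j)$ already produces the factors $f_j(a)$ for $j\in\sigma^{-1}(i)$ in increasing order of $j$ (with the remaining factors $f_j(e_{\little{M}})=e_{\little{N}}$ dropping out); but noting the commutativity is consistent with the paper's later remark that $\im(f_i)$ and $\im(f_j)$ commute for $i\neq j$, and it justifies writing the product unordered.
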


\begin{observation}
\label{obs:pol}
    Let $\rel{M}, \rel{N}$ be expansions of monoids $M,N$ by a single relation
    $R(M), R(N)$ such that $\rel{M} \rightarrow \rel{N}$. Then a map $p\colon M^n
    \rightarrow N$ is a polymorphism in $\pol(\rel{M}, \rel{N})$ if, and only
    if, $p$ is a monoid homomorphism from $M^n$ to $N$ and $p((R(M))^n)\subseteq p(R(N))$.
\end{observation}

\begin{proposition}\label{prop:minor-compatibility_coset}
  Let $M, N$ be monoids with $\alg M \to \alg N$, where $\alg N$ is finite. Fix $r\in \NN$ and relations $R\subseteq M^r, S\subseteq N^r$. 
Let $\Mscr \subseteq \pol(\alg M, \alg N)$ be the subset of polymorphisms $f$ such that (1) $\im(f)$ is commutative, (2)
the unary minor $h$ of $f$ has a {regular} image, and (3) 
$\Cos{h(R)}\not\subseteq S$. 
Given $f\in \Mscr$, 
let $\mathcal{I}(f)\subseteq[n]$ be the union of all $f$-equivalence classes of size smaller than $\ell(\alg N^r)$,
where $\ell$ is given by \Cref{splitting-monoids_non_regular}. Suppose that
$\mathcal I(f^\sigma)\cap\sigma(\mathcal I(f))= \emptyset$ for some $f, f^\sigma\in \Mscr$. Then $f(R^n)\not\subseteq S$, where $n$ denotes the arity of $f$.
\end{proposition}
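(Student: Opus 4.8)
The plan is to prove the stronger statement $[h(R)]\subseteq f(R^n)$, where $h\colon\alg M\to\alg N$ is the common unary minor of $f$ and $f^\sigma$ (both equal the map $a\mapsto f(a,\dots,a)$); since $[h(R)]\not\subseteq S$ by hypothesis~(3), this immediately gives $f(R^n)\not\subseteq S$. First I would fix the ambient algebra: by~(2) the image $\im(h)\subseteq\im(f)$ is a regular Abelian submonoid of $\alg N$, so working inside the finite Abelian monoid $P:=\im(f)^r\leq\alg N^r$ every tuple of $h(R)$ is regular, whence $h(R)_\dagger=h(R)$ and $[h(R)]$ is well defined in $P$; one also checks, from the explicit constant produced in the proof of~\Cref{splitting-monoids_non_regular}, that $K(\alg N^r)$ may serve as the constant of that lemma inside $P$.

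Next I would set up the constant-class decomposition. If $g\colon\alg M^k\to\alg N$ is a homomorphism with Abelian image and $\{J_\mu\}_\mu$ are the maximal constant classes of $[k]$, then $g(x_1,\dots,x_k)=\prod_i g_i(x_i)$ together with the homomorphism property and the fact that distinct $\im(g_i)$ commute gives $g(R^k)=\bigotimes_\mu\bigl(g_{J_\mu}(R)\bigr)^{\otimes|J_\mu|}$ (products in $P$, with coordinate-wise application as in~\Cref{sect:prelims}). Call $g$ \emph{tame} if its unary minor is $h$ and each $J_\mu$ either satisfies $|J_\mu|\geq K(\alg N^r)$ or has $g_{J_\mu}$ the trivial homomorphism. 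For a tame $g$ one has $\bigl(g_{J_\mu}(R)\bigr)^{\otimes|J_\mu|}\supseteq\bigl[(g_{J_\mu}(R))_\dagger\bigr]^{\otimes|J_\mu|}$ for each $\mu$ — by~\Cref{splitting-monoids_non_regular} in $P$ in the large case, and trivially in the trivial-homomorphism case (both sides being the singleton consisting of the identity of $P$) — so multiplying over $\mu$ and applying~\Cref{le:split_aux} to the family $\{g_i\}_{i\in[k]}$ (whose pointwise product is $h$) yields $g(R^k)\supseteq\bigotimes_i\bigl[(g_i(R))_\dagger\bigr]\supseteq[h(R)_\dagger]=[h(R)]$. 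Since moreover $f^\rho(R^k)\subseteq f(R^n)$ for every $\rho\colon[n]\to[k]$ (applying $f^\rho$ to a tuple over $R$ is applying $f$ to a reindexed tuple over $R$), it suffices to exhibit \emph{some} $\rho$ for which $f^\rho$ is tame, as then $[h(R)]\subseteq f^\rho(R^k)\subseteq f(R^n)$.

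The hard part will be constructing such a $\rho$ from the hypothesis $\mathcal I(f^\sigma)\cap\sigma(\mathcal I(f))=\emptyset$ — the only place its combinatorial content is used. Unwound, the hypothesis says that every coordinate of $f^\sigma$ lying in a constant class of size $<K(\alg N^r)$ has $\sigma$-fibre disjoint from $\mathcal I(f)$, i.e.\ contained in the union of the \emph{large} constant classes of $f$. Hence the only obstructions to $f^\sigma$ itself being tame are its small, non-trivial constant classes, and these are built entirely from coordinates lying in large classes of $f$, each of which supplies at least $K(\alg N^r)$ coordinates. I would then post-compose $\sigma$ with a collapsing map $[m]\to[k]$ that re-routes the coordinates of each such offending class back through the large class of $f$ that produced it — there being ``enough room'' there to absorb them without creating new small non-trivial classes — so that the composite $\rho$ makes $f^\rho$ tame. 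I expect this finite bookkeeping over the constant classes of $f$ and $f^\sigma$ to be the genuine obstacle; the rest is assembling~\Cref{le:split_aux},~\Cref{splitting-monoids_non_regular}, and the decomposition above.
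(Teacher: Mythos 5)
Your overall target $[h(R)]\subseteq f(R^n)$ is the same as the paper's, and your first two paragraphs are sound: the decomposition $g(R^k)=\bigotimes_\mu (g_{J_\mu}(R))^{\otimes|J_\mu|}$ over maximal constant classes, and the observation that a ``tame'' minor $g$ with unary minor $h$ satisfies $g(R^k)\supseteq[h(R)]$ via \Cref{splitting-monoids_non_regular} and \Cref{le:split_aux}, are correct. The gap is the last step: the hypothesis $\mathcal I(f^\sigma)\cap\sigma(\mathcal I(f))=\emptyset$ does \emph{not} guarantee the existence of any tame minor $f^\rho$, and the envisaged re-routing $\tau$ cannot produce one, because a minor can only merge fibres and thereby \emph{change} the homomorphism attached to a class; it cannot erase the contribution of a small non-trivial class of $f$. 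For a concrete failure take $\alg M=\alg N=\mathbb Z_3$ (additive), $r=1$, $R=\{1\}$, $S=\mathbb Z_3\setminus\{1\}$, write $K=K(\mathbb Z_3)$ (a multiple of $3$), and let $f\colon\mathbb Z_3^{K+1}\to\mathbb Z_3$ be $f(\bm a)=a_1+2\sum_{i=2}^{K+1}a_i$. Here $\mathcal I(f)=\{1\}$ and $h=\id$. Taking $\sigma$ to identify coordinates $1,2$, act injectively elsewhere, and add enough empty fibres that the resulting $0$-class of $g=f^\sigma$ is large, one has $f,f^\sigma\in\Mscr$ and $\mathcal I(f^\sigma)\cap\sigma(\mathcal I(f))=\emptyset$; yet no minor $f^\rho$ is tame. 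Indeed the coefficient sum $1+2K\equiv 1\not\equiv 0\pmod 3$ rules out an all-trivial partition, while any non-trivial class with $\geq K$ fibres would consume $\geq K$ of the coordinates carrying coefficient $2$, and at least one more is needed to place coordinate $1$ in a zero-sum fibre --- but only $K$ such coordinates exist.

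The paper avoids the issue by never producing a tame minor. Fixing $\bm c\in[h(R)]\setminus S$, it builds a preimage in $R^n$ in two interleaved stages: split $\bm c$ over the coordinates of $g=f^\sigma$ via \Cref{le:split_aux}; upgrade the pieces in the \emph{large} classes of $g$ to genuine elements of $g_J(R)$ via \Cref{splitting-monoids_non_regular}; transport through $\sigma$, refining the still-abstract pieces from the \emph{small} classes of $g$ across their $\sigma$-fibres with \Cref{le:split_aux}; finally use the disjointness hypothesis, which places exactly those fibres in \emph{large} classes of $f$, to apply \Cref{splitting-monoids_non_regular} a second time on the $f$-side. It is this two-sided splitting --- once over $g$'s classes, once over $f$'s, with the hypothesis serving as the gluing condition between the two --- that cannot be compressed into a single tame minor, and that your plan is missing.
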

\begin{proof}
Let $f, g\in \Mscr$ be polymorphisms satisfying $f^\sigma= g$, and
$\mathcal I(g)\cap\sigma(\mathcal I(f))= \emptyset$. Suppose the arities of $f$ and $g$ are $n$ and $m$ respectively.  Let $J_1,\dots,J_p\subseteq [m]$ be a list of the $g$-equivalence classes.
By assumption $\Cos{h(R)}\not \subseteq S$, where $h$ is the unary minor of $g$ and $f$. Let $\bm c\in \Cos{h(R)} \setminus S$.
By \Cref{le:split_aux}, there exist elements
$\bm b_i\in \Cos{g_i(R)_\dagger}$ for each $i\in [m]$ satisfying 
$\prod_{i\in [m]} \bm b_i = \bm c$.

In this paragraph, we construct a second sequence $(\bm b^\prime_i)_{i\in [m]}$ whose product is also $\bm c$, but where $\bm b^\prime_i\in g_i(R)$ for each $i\notin \mathcal{I}(g)$.
First, we define $\bm b^\prime_i= \bm b_i$ for each $i\in \mathcal{I}(g)$.
Now let $J\not \subseteq \mathcal{I}(g)$ be a $g$-equivalence class. Then
\[
\prod_{j\in J} \bm b_j\in \Cos{g_J(R)_\dagger}^{\otimes |J|}.\]
We know that $|J| \geq \ell(\alg N^r)$, so by \Cref{splitting-monoids_non_regular} we conclude there are elements $\bm b^\prime_i\in g_J(R)$ for each $i\in J$ satisfying 
$\prod_{j\in J} \bm b_j \bm d_{\bm b_j} = \prod_{j\in J} \bm b^\prime_j$. Applying this reasoning to each $g$-equivalence class $J \not \subseteq \mathcal{I}(g)$, we define an element $\bm b^\prime_i\in g_i(R)$ for each $i\notin \mathcal{I}(g)$. This way,  $
\prod_{i \in [m]} \bm b^\prime_i = 
\bm c$. \par

We now construct another sequence $(\bm a_j)_{j\in [n]}$ whose product equals $\bm c$. For each $i\in \mathcal{I}(g)$, by \Cref{le:minoring} it holds that $g_i=\prod_{j\in \sigma^{-1}(i)} f_j$.
Thus, by \Cref{le:split_aux}, $\Cos{g_i(R)_\dagger}\subseteq \bigotimes_{j\in
  \sigma^{-1}(i)} \Cos{f_j(R)_\dagger}$. We use this fact to obtain elements $\bm a_j\in
  \Cos{f_j(R)_\dagger}$ for each $j\in \sigma^{-1}(i)$ satisfying $\prod_{j\in
  \sigma^{-1}(i)} \bm a_j = \bm b^\prime_i$. Similarly, for each $i\in [m]
  \setminus \mathcal{I}(g)$, we have that $\bm b^\prime_i= g_i(\bm \alpha)$ for
  some $\bm \alpha\in R$, so we define $\bm a_j= f_j(\bm \alpha)$ for each $j\in
  \sigma^{-1}(i)$. This way, $\prod_{j\in \sigma^{-1}(i)} \bm a_j = \bm
  b^\prime_i$ as well. Hence, we have obtained a sequence $(\bm a_j)_{j\in [n]}$
  satisfying $\prod_{j\in [n]} \bm a_j = \bm c$. Observe that, according to our
  construction, if $j\notin \sigma^{-1}(\mathcal{I}(g))$, then $\bm a_j\in
  f_j(R)$. Finally, we construct a fourth sequence $(\bm a^\prime_j)_{j\in [n]}$
  whose product equals $\bm c$ and where $\bm a^\prime_j\in f_j(R)$ for each
  $j\in [n]$. Given $j\in \mathcal{I}(f)$, we define $\bm a^\prime_j= \bm a_j$.
  Observe that by the assumption of the lemma, in this case $j\notin \sigma^{-1}(\mathcal{I}(g))$, so $\bm a^\prime_j\in f_j(R)$. Now let $J\not \subseteq \mathcal{I}(f)$ be an $f$-equivalence class. Then \[
\prod_{j\in J} \bm a_j \bm d_{\bm a_j}\in \Cos{f_j(R)_\dagger}^{\otimes(|J|)}.\]
By the definition of $\mathcal{I}$, we know that $|J| \geq \ell(\alg N^r)$, so by \Cref{splitting-monoids_non_regular} we conclude there are elements $\bm a^\prime_i\in f_J(R)$ for each $j\in J$ satisfying 
$\prod_{j\in J} \bm a_j \bm d_{\bm a_j} = \prod_{j\in J} \bm a^\prime_j$.
  Applying this reasoning to each $f$-equivalence class $J \not \subseteq
  \mathcal{I}(f)$, we define an element $\bm a^\prime_j\in f_j(R)$ for each $j\notin \mathcal{I}(f)$. 
This way, 
\[
\prod_{j\in [n]} \bm a^\prime_j= 
\left(\prod_{j\in \mathcal{I}(f)} \bm a_j \right) \left(\prod_{j\in [n]\setminus \mathcal I(f)} \bm a_j  \bm d_{\bm a_j} \right)
= \prod_{j\in [n]} \bm a_j = \bm c.\]
To see the second equality, observe that $\bm c$ is a group element, and $\bm c\preceq \bm a_j$ for all $j\in [n]$, so $\bm c \bm d_{\bm a_j} = \bm c$. Now, by construction $\bm a^\prime_j = f_j(\bm \alpha_j)$ 
for some $\bm \alpha_j\in R$, for each $j\in [n]$. In particular,
$f(\bm \alpha_1,\dots, \bm \alpha_n)=\bm c$. However, $f(\bm \alpha_1,\dots, \bm \alpha_n)\in f(R^n)$, and $\bm c\notin S$, so this proves the result.  
\end{proof}

We conclude this section proving the NP-hardness part of our main theorem
(\Cref{thm:main}).
\begin{proposition}
    \label{prop:hardness_main}
    Let $\alg M$ be a finitely generated monoid, $\alg N$ be a finite monoid,
    and $R(M)\subseteq M^r, R(N)\subseteq N^r$ be relations of arity $r$ satisfying that $\rel M=(\alg M, R(M))$ maps homomorphically to $\rel N= (\alg N, R(N))$. Suppose that there is no homomorphism $f:\rel M \to \rel N$ with regular commutative image such that $\Cos{f(R(M))}\subseteq R(N)$. Then $\pcsp(\rel M, \rel N)$ is NP-hard. 
\end{proposition}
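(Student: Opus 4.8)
The plan is to follow the algebraic approach of \Cref{sec:hardness}: reduce $\PMC_N(\rel M,\rel N)$ to $\pcsp(\rel M,\rel N)$, and then show that $\PMC_N(\rel M,\rel N)$ is NP-hard for all large $N$ by verifying condition $(\star)$ of \Cref{hardness-criterion}. Since $\alg M$ is a finitely generated monoid, all its finite powers $\alg M^N$ are finitely generated, and $\alg N$ (hence $\rel N$) is finite, so \Cref{prop:reduction-fin-presentation-general} supplies, for every $N$, a logspace reduction from $\PMC_N(\rel M,\rel N)$ to $\pcsp(\rel M,\rel N)$. It therefore suffices to prove $(\star)$ and invoke \Cref{hardness-criterion}. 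Throughout I will use \Cref{obs:pol} together with the decomposition $f(x_1,\dots,x_n)=\prod_{i=1}^n f_i(x_i)$ of a polymorphism into unary homomorphisms $f_i\colon\alg M\to\alg N$ with pairwise commuting images, and two stability facts: the unary minor is minor-invariant, i.e.\ $(f^\sigma)^\tau=f^\tau$, and $\im(f^\sigma)\subseteq\im(f)$.

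I would partition $\pol(\rel M,\rel N)$ into three subsets, each closed under minors by the two stability facts: $\Mscr_1$, the polymorphisms with $\im(f)$ non-Abelian; $\Mscr_2$, those with $\im(f)$ Abelian and $\im(f^\tau)$ regular; and $\Mscr_3$, those with $\im(f)$ Abelian and $\im(f^\tau)$ non-regular. For $\Mscr_2$ the hypothesis does the work: the unary minor $h=f^\tau$ is a homomorphism $\rel M\to\rel N$ with regular Abelian image, so by assumption $[h(R^M)]\not\subseteq R^N$, and hence $f$ satisfies conditions~(1)--(3) of \Cref{prop:minor-compatibility_coset}. The selection function of that proposition --- the union of the maximal constant sets of size below $K(\alg N^r)$ --- is bounded, because $f$ has at most $|\mathrm{Hom}(\alg M,\alg N)|$ maximal constant sets (finitely many, as $\alg M$ is finitely generated and $\alg N$ finite), each small one contributing fewer than $K(\alg N^r)$ coordinates; and it is a genuine selection function on $\Mscr_2$ because a violation of the intersection condition would, by \Cref{prop:minor-compatibility_coset}, yield $f((R^M)^n)\not\subseteq R^N$, contradicting $f\in\pol(\rel M,\rel N)$.

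For $\Mscr_1$ I would take $\mathcal I_1(f)=\{\,i\in[n]:\im(f_i)\text{ is non-Abelian}\,\}$. This set is bounded independently of $n$: if $\im(f_{i_1}),\dots,\im(f_{i_t})$ are pairwise commuting non-Abelian submonoids of $\alg N$, then, writing $K_\ell=\langle\bigcup_{l\le\ell}\im(f_{i_l})\rangle$, each $\im(f_{i_\ell})$ cannot be contained in $K_{\ell-1}$ (else it would commute with itself and be Abelian), so $|K_{\ell-1}|<|K_\ell|\le|N|$ and $t\le|N|$. It is minor-compatible because, for $g=f^\sigma$, the image $\im(g_i)=\{\prod_{j\in\sigma^{-1}(i)}f_j(a):a\in M\}$ is Abelian whenever all $\im(f_j)$ with $j\in\sigma^{-1}(i)$ are (using that these images commute and are each Abelian); hence a non-Abelian $\im(g_i)$ forces some $j\in\sigma^{-1}(i)$ with $\im(f_j)$ non-Abelian, so $\sigma(\mathcal I_1(f))\cap\mathcal I_1(g)$ is nonempty whenever $g\in\Mscr_1$.

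The remaining case $\Mscr_3$ is the main obstacle. Here $\im(f)$ is Abelian and $\im(f^\tau)$ is non-regular; since the regular elements of the finite Abelian monoid $\im(f)$ form a submonoid (\Cref{le:regular_submonoid}), a product of regular-image coordinate homomorphisms has regular image, so some $\im(f_i)$ is non-regular. The difficulty is that the number of such coordinates need not be bounded, as they can fill a single maximal constant set $J$ of arbitrarily large size. The key observation is that if such a $J$ has size at least the constant $C$ of \Cref{le:idempotent_constant}, then $\prod_{j\in J}f_j(x)=f_J(x)^{|J|}$ takes only values $y^{|J|}$ lying in a maximal subgroup of $\alg N$, hence has regular image, so a large non-regular maximal constant set does not by itself make $\im(f^\tau)$ non-regular. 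Accordingly I would set $K'=\max(C,K(\alg N^r))$ and let $\mathcal I_3(f)$ be the union of the maximal constant sets $J$ of $f$ with $|J|<K'$ and $\im(f_J)$ non-regular; boundedness follows exactly as for $\Mscr_2$, and $\mathcal I_3(f)\neq\emptyset$ on $\Mscr_3$ since otherwise every non-regular constant set would be large and the observation above would force $\im(f^\tau)$ to be regular. The delicate part is minor-compatibility: mimicking the proof of \Cref{prop:minor-compatibility_coset}, one analyses how the maximal constant sets of $f$ are merged and split by $\sigma$, uses \Cref{splitting-monoids_non_regular} to absorb the contribution of every large maximal constant set into $f_J(R^M)^{\otimes|J|}$, and uses \Cref{le:split_aux} to transport the resulting decompositions along $\sigma$, so that $\sigma(\mathcal I_3(f))\cap\mathcal I_3(f^\sigma)=\emptyset$ leads either to a homomorphism $\rel M\to\rel N$ with regular Abelian image and $[\,\cdot(R^M)]\subseteq R^N$, contradicting the hypothesis, or to an element of $f((R^M)^n)\setminus R^N$, contradicting $f\in\pol(\rel M,\rel N)$. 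Carrying out this bookkeeping --- in particular handling the ``carry'' elements produced when the coordinates outside a constant set are fixed to a common tuple of $R^M$ --- is where I expect the bulk of the technical work to lie.
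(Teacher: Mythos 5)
Your overall structure matches the paper: you split $\pol(\rel M,\rel N)$ into three classes, handle the ``Abelian and regular unary minor'' class via \Cref{prop:minor-compatibility_coset} plus the hypothesis on $\rel M\to\rel N$, and obtain boundedness from the finiteness of $\mathrm{Hom}(\alg M,\alg N)$. That middle case (your $\Mscr_2$, the paper's $\Mscr_3$) is essentially the paper's argument. The divergence is in how you dispose of the other two classes. The paper does not reprove the existence of selection functions for the ``non-Abelian image'' and ``non-regular unary-minor image'' cases; it simply invokes the fact (from~\cite{LZ24:acm}, relying only on $\alg N$ being finite) that bounded selection functions exist there. You instead attempt to construct them from scratch. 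Your argument for $\Mscr_1$ (the chain $K_0\subsetneq K_1\subsetneq\cdots$ of submonoids showing at most $|N|$ coordinates can have non-Abelian image, and the observation that a product of pairwise-commuting Abelian submonoids is Abelian) is correct and is a perfectly reasonable stand-alone argument, so no complaint there.

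The genuine gap is in your $\Mscr_3$ (non-regular unary minor). You propose $\mathcal I_3(f)$ to be the union of small maximal constant sets with non-regular image, you correctly observe that $\mathcal I_3(f)\neq\emptyset$ on $\Mscr_3$ (because in an Abelian monoid $a^n$ is regular once $n\geq C$), and boundedness follows as before. But you then explicitly defer minor-compatibility, and the plan you sketch for it is not the right one: \Cref{splitting-monoids_non_regular} and \Cref{le:split_aux} are tools for transporting membership in cosets $[h(R)_\dagger]$ along minors, and they are what makes \Cref{prop:minor-compatibility_coset} work for the \emph{coset} obstruction; they have no bearing on the \emph{regularity} obstruction, which is a statement about images of homomorphisms rather than about the extra relation $R$. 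The difficulty is real: if $\sigma$ shatters a large non-regular constant set $J$ of $f$ (with $|J|\geq K'$, so $J\not\subseteq\mathcal I_3(f)$) among several fibres $\sigma^{-1}(j)$, a fibre can pick up only a few coordinates of $J$, producing a small non-regular constant set of $g$ whose preimage misses $\mathcal I_3(f)$ entirely. Ruling this out requires a separate combinatorial/algebraic argument about how non-regularity of the unary minor propagates under minors --- exactly what~\cite{LZ24:acm} supplies and what you leave as ``the bulk of the technical work.'' As written, the proposal is incomplete precisely where the paper relies on a cited nontrivial result.
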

\begin{proof}
    We show that $\PMC_k(\rel M,\rel N)$ is NP-hard for sufficiently large $k$. By \Cref{prop:reduction-fin-presentation-general} this proves the result. \par
    We describe $\pol(\rel M, \rel N)$ as the union of three subsets (1) the set $\Mscr_1$  of polymorphisms $f$ such that $\im(f)$ is not a commutative submonoid of $\alg N$, (2) the set $\Mscr_2$ of polymorphisms $f$ such that $\im(f)$ is commutative and whose unary minor $g$ satisfies that $\im(g)$ is not a {regular} submonoid of $\alg N$, and 
    (3) $\Mscr_3$, the complement of $\Mscr_1\cup\Mscr_2$ in $\Mscr$.

    Using that that $\alg N$ is finite, the proof of~\cite[Theorem 3]{LZ24:acm} on pages 3:14--3:15 therein shows the existence of a selection function for $\Mscr_1$ and $\Mscr_2$.
    It remains therefore to construct a selection function for $\Mscr_3$.

    Since there is no homomorphism $h\colon\alg M\to\alg N$ with regular commutative
    image and such that $\Cos{h(R(M))}\subseteq R(N)$, all the elements of $\Mscr_3$ satisfy the assumption of~\Cref{prop:minor-compatibility_coset}.
    Thus, by~\Cref{prop:minor-compatibility_coset}, the map $\mathcal I$ where for an $n$-ary $f\in\Mscr_3$ we define $\mathcal I(f)$ to be the union of all $f$-equivalence classes of size smaller than the constant $\ell(\alg N^r)$ given in \Cref{splitting-monoids_non_regular} is a selection function.
    It remains to show that it is bounded.
    Since $\alg M$ is finitely generated and $\alg N$ is finite, there is a finite number $\lambda$ of homomorphisms $\alg M\to \alg N$, and therefore for an $n$-ary $f\in\Mscr_3$ there are only $\lambda$ possibilities for $f_i$ with $i\in [n]$.
    Thus, the number of distinct $f$-equivalence classes is $\lambda$ for each $f\in\Mscr_3$. Therefore, at most $\lambda$ of those $f$-equivalence classes have size smaller than $\ell(\alg N^r)$.
    It follows that $\mathcal I$ is a bounded selection function, which concludes the proof.
\end{proof}

\section{Tractability}\label{sec:tractability}

Let $\rel{A} \to \rel{B}$ be two relational structures, and $i\in \NN$. A \emph{$2$-block 
symmetric polymorphism} $f\in \pol(\rel A, \rel B)$ of arity $2i+1$ is a 
homomorphism $f: \rel A^{2i+1} \to \rel B$ such that $f^\sigma=f$ for all bijections $\sigma:[2i+1]\to [2i+1]$ satisfying $\sigma([i+1])=[i+1]$ (i.e., $\sigma$ preserves the sets
$\{1,\dots, i+1\}$ and $\{i+2, \dots, 2i+1\}$). 
We use the following characterization of solvability of PCSPs via the BLP+AIP algorithm defined in~\cite{BGWZ20}.

\begin{theorem}[\cite{BGWZ20}]\label{th:BLP+AIP}
Let $\rel A \to \rel B$ be finite relational structures. Then $\pcsp(\rel A,
  \rel B)$ is solvable in polynomial time via BLP+AIP if, and only if,
  $\pol(\rel A, \rel B)$ contains $2$-block symmetric polymorphisms of all odd arities.     
\end{theorem}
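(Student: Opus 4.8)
The plan is to prove the two implications separately, keeping in mind the shape of the algorithm: on input $\rel X$, BLP+AIP first tests feasibility of the basic LP relaxation of $\rel X\to\rel A$, then tests feasibility of the $\mathbb Z$-affine relaxation on the support of the LP solution (iterating the two steps if necessary), and accepts exactly when a common feasible solution is found. In both directions the content is about \No-instances: correctness on \Yes-instances is immediate, since a homomorphism $\rel X\to\rel A$ is a feasible $\{0,1\}$-solution of either relaxation and hence is also accepted.

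For the ``if'' direction, assume $\pol(\rel A,\rel B)$ contains a $2$-block symmetric polymorphism $f_i$ of arity $2i+1$ for every $i\geq 0$, and suppose BLP+AIP accepts some $\rel X$; the goal is $\rel X\to\rel B$. From acceptance one extracts a rational LP solution together with an integral affine solution supported on the same tuples. Clearing a common denominator and adding a large multiple of the LP solution, one would massage this into a single uniform combinatorial datum: one odd number $2i+1$, and for each constraint $R(\mathbf x)$ of $\rel X$ a multiset $P_{R(\mathbf x)}$ of $i+1$ tuples and a multiset $Q_{R(\mathbf x)}$ of $i$ tuples, all from $R^{\rel A}$, such that for each variable $x$ the $x$-coordinates contributed by the $P$'s minus those contributed by the $Q$'s factor through a single tuple $\mathbf u_x\in A^{2i+1}$. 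The size difference $(i+1)-i=1$ encodes the affine ``sum-equals-target'' part, while permutability within each of the two blocks encodes the LP convex part. Applying $f_i$ coordinatewise to $(\mathbf u_{x_1},\dots,\mathbf u_{x_m})$ for each constraint gives a tuple in $R^{\rel B}$, and $2$-block symmetry of $f_i$ ensures the value assigned to a variable is independent of the constraint used to compute it; this yields a homomorphism $\rel X\to\rel B$.

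For the ``only if'' direction, fix an odd arity $n=2i+1$ and build the ``universal'' instance $\rel I_n$ whose variables are the orbits of $A^n$ under $S_{i+1}\times S_i$ acting by permuting coordinates within the blocks $\{1,\dots,i+1\}$ and $\{i+2,\dots,2i+1\}$, with each relation lifted from $\rel A^n$; by construction a homomorphism $\rel I_n\to\rel B$ is precisely a $2$-block symmetric $n$-ary polymorphism of $(\rel A,\rel B)$. I would then exhibit an explicit feasible BLP+AIP solution for $\rel I_n$: for the LP part, put on each constraint the uniform average over the $n$ rows of the corresponding tuple of $\rel A^n$; for the affine part, weight each row $\mathbf r$ by (number of block-$1$ positions equal to $\mathbf r$) minus (number of block-$2$ positions equal to $\mathbf r$), which sums to $(i+1)-i=1$ and has marginals consistent with the analogous signed row-counts at the variables, and which is well defined on orbits. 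Since BLP+AIP is assumed to solve $\pcsp(\rel A,\rel B)$ and accepts $\rel I_n$, the instance $\rel I_n$ is not a \No-instance, hence $\rel I_n\to\rel B$, giving the desired polymorphism; ranging over all odd $n$ finishes the proof.

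I expect the ``if'' direction to be the main obstacle: turning an arbitrary rational-LP-plus-integral-affine solution into the uniform form above with one arity $2i+1$ working simultaneously for \emph{all} constraints requires controlling denominators and the magnitude of the affine coefficients, and then checking that the between-block asymmetry of $2$-block symmetric polymorphisms matches exactly the ``$+1$'' produced by the affine relaxation. The ``only if'' direction is comparatively routine once $\rel I_n$ is set up, the only delicate point being the verification that the proposed signed weighting is consistent across shared variables and descends to orbits.
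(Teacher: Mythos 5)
This theorem is not proved in the paper: it is imported from \cite{BGWZ20} and used as a black box, so there is no in-paper proof to compare against. Judging your sketch on its own terms against the cited source, the two directions deserve different verdicts.

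Your ``only if'' direction is essentially the standard free-structure argument and is sound. The instance $\rel I_n$ is the free structure of $\pol(\rel A,\rel B)$ generated on $\rel A^{2i+1}$, with block-symmetry orbits identified; the LP solution (uniform over the $n$ rows) and the AIP solution (block-$1$ row-count minus block-$2$ row-count) are well defined on orbits, have consistent marginals, and the AIP weights are supported inside the LP support. One point you should make explicit: BLP+AIP runs AIP on the support of a \emph{relative-interior} LP solution, which contains the support of any particular LP solution, so exhibiting one LP solution and one AIP solution supported within it does certify acceptance.

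The ``if'' direction, which you correctly flag as the hard one, has a genuine unresolved gap rather than a routine ``controlling denominators'' step. After clearing a common denominator $D$ from the LP solution $\lambda$ and writing the affine solution as $\mu=\mu^+-\mu^-$, the natural split assigns $kD\lambda+\mu^+$ to block $1$ and $kD\lambda+\mu^-$ to block $2$; at a constraint $c$ the block sizes are $kD+\sum_t\mu^+_c(t)$ and $kD+\sum_t\mu^-_c(t)$, and these sums vary from constraint to constraint. You therefore do not automatically obtain a single uniform odd arity $2i+1$ that works for every constraint and every variable simultaneously, and the obvious fix of padding each constraint with extra copies of tuples in $\text{supp}(\lambda_c)$ to equalize the block sizes disturbs the marginal counts at variables shared between constraints, so consistency is not preserved for free. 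Producing the ``single uniform combinatorial datum'' you describe is precisely the technical content of this implication, and it is not resolved in your sketch; as written the proof does not close.
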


The following shows the tractability side of our main result, \Cref{thm:main}.

\begin{proposition}
\label{prop:main_tractability}
     Let $\rel M=(\alg M,R(M)), \, \rel N=(\alg N,R(N))$ be expansions of monoids such that there exists a homomorphism $\rel M\to\rel N$. Assume that $\alg M$ is finitely generated and $\alg N$ is finite. The following are equivalent.
     \begin{itemize}
         \item[(1)] There is a homomorphism
         $h\colon \rel M \to \rel N$ with  {regular} commutative image, satisfying that
         $\Cos{h(R(M))}\subseteq R(N)$.
         \item[(2)] $\pol(\rel M, \rel N)$ contains $2$-block symmetric polymorphisms of all odd arities.
         \item[(3)] There is a finite structure $\rel A$
         satisfying $\rel M \to \rel A \to \rel N$ such that $\csp(\rel A)$ is solvable via BLP+AIP.
       \item[(4)] There is a homomorphism from $(\abreg{\alg M}, T)$ 
         to $\rel N$, where $T=\Cos{\abreg{\pi}_M(R(M))}$, and $\abreg{\alg M}, \abreg{\pi}_M$ are given by \Cref{le:abelian_regularization}.
     \end{itemize}
\end{proposition}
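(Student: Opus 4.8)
The plan is to prove the cycle of implications $(1)\Rightarrow(2)\Rightarrow(3)\Rightarrow(1)$, together with $(1)\Leftrightarrow(4)$, which is the cheapest of the equivalences since it follows directly from the universal property of the Abelian regularization.

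For $(1)\Leftrightarrow(4)$: a homomorphism $h\colon\rel M\to\rel N$ with regular Abelian image and $[h(R^M)]\subseteq R^N$ is, forgetting the relation, a monoid homomorphism $\alg M\to\alg N$ into a regular Abelian monoid (the image is such a monoid, being a regular Abelian submonoid of the finite monoid $\alg N$), hence by \Cref{le:abelian_regularization} it factors as $h=h'\circ\pi_M^{\textnormal{ab-reg}}$ for some $h'\colon\abreg{\alg M}\to\alg N$. Since $h'$ maps $\pi_M^{\textnormal{ab-reg}}(R^M)$ into $h(R^M)$, and the coset operation is monotone and idempotent, $h'$ maps $[\pi_M^{\textnormal{ab-reg}}(R^M)]$ into $[h(R^M)]\subseteq R^N$; so $h'$ is a homomorphism $(\abreg{\alg M},[\pi_M^{\textnormal{ab-reg}}(R^M)])\to\rel N$, giving $(1)\Rightarrow(4)$. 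Conversely, composing such an $h'$ with $\pi_M^{\textnormal{ab-reg}}$ and noting $\pi_M^{\textnormal{ab-reg}}(R^M)\subseteq[\pi_M^{\textnormal{ab-reg}}(R^M)]$ yields a homomorphism $\rel M\to\rel N$ whose image lies in $\im(h')$, which is a regular Abelian submonoid of $\alg N$, and which sends $R^M$ into $R^N$; one then needs $[h(R^M)]\subseteq R^N$, which follows because $h(R^M)\subseteq h'([\pi_M^{\textnormal{ab-reg}}(R^M)])$ and the latter is already a coset contained in $R^N$. Also $(3)\Rightarrow(1)$ then follows from the dichotomy \Cref{thm:main} (or more directly: $\rel A$ finite with $\rel M\to\rel A\to\rel N$, so $\csp(\rel A)$ tractable forces, by the main theorem applied to $\pcsp(\rel M,\rel N)$ being tractable, alternative~(1) to hold); alternatively one can argue $(3)\Rightarrow(2)$ is immediate by restricting polymorphisms of $\rel A$ along the two homomorphisms, so it suffices to close the loop $(2)\Rightarrow(1)$ and $(1)\Rightarrow(2)$, with $(1)\Rightarrow(3)$ handled by exhibiting a concrete finite $\rel A$.

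The substantive direction is $(1)\Rightarrow(2)$: given $h$ with regular Abelian image $\alg{K}\leq\alg N$ and $[h(R^M)]\subseteq R^N$, I must build $2$-block symmetric polymorphisms $f\colon\rel M^{2i+1}\to\rel N$ of every odd arity $2i+1$. The natural candidate, working inside the finite regular Abelian monoid $\alg K$, is to take $f(x_1,\dots,x_{2i+1})$ to be a suitable "symmetrized" product of the $h(x_j)$ that weights the first block and the second block symmetrically — concretely something like $f(\bm x)=\big(\prod_{j=1}^{i+1}h(x_j)\big)\cdot\big(\prod_{j=i+2}^{2i+1}h(x_j)\big)^{-1}$ adjusted by idempotents so that it lands in the right place and is genuinely invariant under block-preserving permutations; this is a monoid homomorphism $\alg M^{2i+1}\to\alg N$ because $\alg K$ is Abelian, and $2$-block symmetry is then built in by construction. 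The main work is checking it maps $(R^M)^{2i+1}$ into $R^N$: expanding $f$ applied to a tuple of rows of $R^M$, one gets, coordinatewise, a product of elements of $h(R^M)$ and of $h(R^M)^{-1}$ in the regular Abelian monoid $\alg K$, and I need this to land in $[h(R^M)]$, which by definition of the generated coset $[U]=U\otimes\langle U^{-1}\otimes U\rangle$ it does once the exponents balance appropriately — this is exactly where the "block symmetric" shape, sending the second block through inverses, pays off, and where \Cref{splitting-monoids} (every sufficiently long $\otimes$-power of $R$ equals the corresponding power of $[R]$) may be needed to absorb terms cleanly. This is the step I expect to be the main obstacle: getting the algebra of the polymorphism to simultaneously (a) be well-defined and block symmetric, (b) restrict correctly on $R^M$ into $[h(R^M)]\subseteq R^N$, and (c) reduce to $h$ on the unary minor.

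For $(2)\Rightarrow(1)$: from a family of $2$-block symmetric polymorphisms one extracts the required homomorphism. Take $f$ of some large odd arity $2i+1$; its unary minor $h=f^\tau$ is a monoid homomorphism $\alg M\to\alg N$, and I claim it has regular Abelian image with $[h(R^M)]\subseteq R^N$. Abelianness of $\im(h)$ and regularity of $\im(h)$ should follow from the $2$-block symmetry by a standard argument (swapping a pair of coordinates within a block versus across blocks forces commutation relations on the images of the $f_i$, and since $\alg N$ is finite, regularity is witnessed by some power being idempotent — here the structural lemmas of \Cref{sec:algebra}, in particular \Cref{le:idempotent_constant} and \Cref{structure-abelian-monoid}, are the relevant tools); this is essentially the argument already used in \cite{LZ24:acm} and encapsulated in the selection functions for $\Mscr_1,\Mscr_2$ in \Cref{prop:hardness_main}. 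For $[h(R^M)]\subseteq R^N$: evaluate $f$ on constant-row tuples and on tuples that realize the generators of the coset $\langle h(R^M)^{-1}\otimes h(R^M)\rangle$; because $f$ preserves $R^N$ and is $2$-block symmetric of unbounded arity, its value on such tuples ranges over all of $[h(R^M)]$ (using $(U^{-1}\otimes U)$-elements are reached by pairing a coordinate in the first block with one in the second block, exactly the "$+1$/$-1$" weighting that $2$-block symmetry makes available), and all these values must lie in $R^N$. Here too \Cref{splitting-monoids} controls which products of coset elements are already products of $R^M$-elements, letting one conclude the containment. Finally $(1)\Rightarrow(3)$: let $\rel A=\abreg{\rel M}=(\abreg{\alg M},[\pi_M^{\textnormal{ab-reg}}(R^M)])$ — it is finite because, by \Cref{le:abelian_regularization}, $\abreg{\alg M}$ is a finitely generated regular Abelian monoid, hence finite by \Cref{structure-abelian-monoid}(2) combined with the fact that a finitely generated regular Abelian monoid is a finite union of finite (finitely generated Abelian, hence finite) groups; we have $\rel M\to\rel A$ via $\pi_M^{\textnormal{ab-reg}}$ and $\rel A\to\rel N$ by $(1)\Leftrightarrow(4)$; and $\csp(\rel A)$ is solvable by BLP+AIP because, by \Cref{th:BLP+AIP}, it suffices to produce $2$-block symmetric polymorphisms of all odd arities for $(\rel A,\rel A)$, which one builds directly inside the finite regular Abelian monoid $\abreg{\alg M}$ by the same symmetrized-product formula as in $(1)\Rightarrow(2)$ (the cosets are preserved by the analogous balanced-exponent computation). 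This closes the cycle.
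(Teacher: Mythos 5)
Your overall cycle of implications matches the paper's, and your $(1)\Leftrightarrow(4)$, $(1)\Rightarrow(2)$, and $(3)\Rightarrow(2)$ are essentially the paper's arguments. (In $(1)\Rightarrow(2)$ you do not need to ``adjust by idempotents''; the map $f(a_1,\dots,a_{i+1},b_1,\dots,b_i)=\prod_{j}h(a_j)\prod_j h(b_j)^{-1}$ is already a monoid homomorphism since $\im(h)$ is regular Abelian, and $f(R^{2i+1})=h(R)^{\otimes(i+1)}\otimes(h(R)^{-1})^{\otimes i}=h(R)\otimes(h(R)^{-1}\otimes h(R))^{\otimes i}\subseteq[h(R)]$ requires no appeal to \Cref{splitting-monoids}.)

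However, your $(1)\Rightarrow(3)$ contains a genuine error. You take $\rel A=\abreg{\rel M}$ and assert it is finite ``because a finitely generated regular Abelian monoid is a finite union of finitely generated Abelian, hence finite, groups.'' A finitely generated Abelian group need not be finite: $(\ZZ,+)$ is finitely generated, Abelian, regular, and infinite. In fact $\abreg{\alg M}$ is typically infinite — for instance $\abreg{(\ZZ_{\ge 0},+)}\cong(\ZZ,+)$ — so it cannot serve as the finite sandwich structure. What \Cref{structure-abelian-monoid}(2) gives is only that the semilattice $\alg M_I$ of idempotents is finite, not the monoid itself. The paper instead takes $\rel A$ to be the expansion of $\im(h)\leq\alg N$ by the coset $[h(R^M)]$; finiteness is then immediate from $\alg N$ being finite, and the $2$-block symmetric polymorphisms of $\rel A$ come from the same symmetrized-product formula inside $\im(h)$.

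Your $(2)\Rightarrow(1)$ is also on shakier ground than you acknowledge. You propose to extract regularity of $\im(h)$ and the coset containment $[h(R^M)]\subseteq R^N$ directly from a single $2$-block symmetric $f$ of large odd arity, but the unary minor is $h(a)=f_1(a)^{i+1}f_{i+2}(a)^i$ and neither regularity nor the coset containment falls out by simply ``balancing exponents'' as you suggest — $i$ and $i+1$ cannot both be multiples of the idempotent constant, and the coset argument you gesture at is not a computation you have written down. The paper avoids this entirely: assuming (1) fails, \Cref{prop:hardness_main} yields a finite cover of $\pol(\rel M,\rel N)$ by sets each admitting a bounded selection function; taking a $2$-block symmetric $f$ of arity $4K+1$ (where $K$ bounds the selection functions) one can find a block-preserving permutation $\sigma$ with $\mathcal I(f)\cap\sigma(\mathcal I(f))=\emptyset$, contradicting $f=f^\sigma$. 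Your alternative suggestion of deducing $(3)\Rightarrow(1)$ from \Cref{thm:main} is circular, since \Cref{thm:main} is proved using this proposition.
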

\begin{proof}
    We show that (1) is equivalent to each of the other items. The arguments are similar to those in~\cite{LZ24:acm}, but the implication (2)$\implies$(1) is significantly more complex.    
    \par
    \textbf{(1)$\implies$(2)} Let $i\in \NN$. We construct a $2i+1$-ary $2$-block symmetric polymorphism $f\in \pol(\rel A, \rel B)$. We define 
    \[f(a_1,\dots, a_{i+1}, b_1, \dots, b_{i}) = 
    \prod_{j=1}^{i+1} h(a_j)  \prod_{j=1}^{i} h(b_j)^{-1}
    \]
    for each $(a_1,\dots, a_{i+1}, b_1, \dots, b_{i})\in M^{2i+1}$. The fact that $f$ is a monoid homomorphism from $\alg M^{2i+1}$ to $\alg N$ follows from $\im(h)\leq \alg N$
    being commutative and {regular}. We only need to show that $f\left(R(M)^{2i+1}\right)\subseteq R(N)$. This can be seen as follows: 
    \begin{align*}
 & f\left(R(M)^{2i+1}\right) = h(R(M))^{\otimes i+1} \otimes (h\left(R(M))^{-1}\right)^{\otimes i} =
 \\
& h(R(M))\otimes (h(R(M))^{-1} \otimes h(R(M)))^{\otimes i} \subseteq \Cos{h(R(M))} \subseteq R(N).
\end{align*}
    
    \par
    \textbf{(2)$\implies$(1)} Suppose, for the sake of a contradiction, that there is no homomorphism $h: \rel M \to \rel N$
    satisfying (1). In \Cref{prop:hardness_main} we show that in this case
    $\pol(\rel M, \rel N)$ is in the scope of \Cref{hardness-criterion}. This
    means that $\pol(\rel M, \rel N)$ is the union of subsets
    $\Mscr_1,\dots,\Mscr_k$ such that there exists a bounded selection
    function $\mathcal{I}_i$ for $\Mscr_i$ for each $i\in [k]$. Let $\ell$ be a bound on $|\mathcal{I}_i(f)|$ for all $i\in [k], f\in \Mscr_i$. Let $f$ be a $2$-block symmetric polymorphism of arity $4\ell + 1$, and let $i$ be such that $f\in \Mscr_i$.
    Because $|\mathcal{I}_i(f)|\leq \ell$, and each block of $f$ has size at least $2\ell$, there is a bijection $\sigma: [4\ell+1]\to [4\ell+1]$ that preserves each block and satisfies 
   $\mathcal{I}_i(f) \cap \sigma(\mathcal{I}_i(f))=\emptyset$. However, $f=f^\sigma$, so this contradicts the fact that $\mathcal{I}_i$ is a selection function on $\Mscr_i$. We derived this contradiction from the assumption that (1) does not hold, so this completes the proof. \par
   \textbf{(1) $\implies$ (3)} Let $\rel A$ be the expansion of the monoid $\im(h)$ by the relation $\Cos{h(R)}$. Observe that $h$ is a homomorphism from $\rel M$ to $\rel A$ and the inclusion is a homomorphism from $\rel A$ to $\rel N$. We show that $\csp(\rel A)$ is solvable via BLP+AIP. By \Cref{th:BLP+AIP}, we just need to find a $2$-block symmetric polymorphism $f\in \pol(\rel A)$ of arity $2i+1$ for each $i\in \NN$. Here $\pol(\rel A)$ denotes $\pol(\rel A, \rel A)$. We define
   \[
   f(a_1,\dots, a_{i+1}, b_1,\dots, b_i) = \prod_{j=1}^{i+1} a_j \prod_{j=1}^i b_j^{-1},
   \]
   for each $a_1, \dots, a_{i+1}, b_1, \dots, b_i\in \im(h)$.
   The fact that $f$ is a well-defined monoid homomorphism from $\im(h)^{2i+1}$ to $\im(h)$ follows from the fact that $\im(h)$ is commutative and {regular}. Let $T= \Cos{h(R)}$. To see that $f(T^{2i+1})\subseteq T$, observe that 
   $f(T) = T^{\otimes i+1} \otimes (T^{-1})^{\otimes i}
   = T \otimes (T^{-1} \otimes T)^{\otimes i} = T$,
   where the last equality uses the fact that $T$ is a coset. \par
   \textbf{(3) $\implies$ (2)} Suppose there is a finite structure $\rel A$
   such that $\csp(\rel A)$ is solvable via BLP+AIP, and there are homomorphisms 
   $g_1: \rel M \to \rel A$ and $g_2: \rel A \to \rel N$. Let $i\in \NN$. We show that
   $\pol(\rel M, \rel N)$ has a $2$-block symmetric polymorphism $f$ of arity $2i+1$. By \Cref{th:BLP+AIP}, $\pol(\rel A)$ contains a $2$-block symmetric polymorphism $h$ of arity $2i+1$. Then we can define $f = g_2 \circ h \circ g_1$. %
    \textbf{(1) $\implies$ (4)} Follows directly from \Cref{le:abelian_regularization}.\par
    \textbf{(4) $\implies$ (1)} Follows from the fact that the homomorphic image of a 
     {regular} commutative monoid must be commutative and {regular}, and the homomorphic image of a coset is a coset.     
\end{proof}

\paragraph*{Algorithm for Infinite Templates}

Let $\rel M$ be the expansion of a finitely generated {regular} commutative monoid 
$\alg M$ by a coset $R(M)\subseteq M^r$. We sketch a polynomial-time algorithm solving $\csp(\rel M)$. This way, using item (4) in \Cref{prop:main_tractability}, we obtain that all the tractable problems $\pcsp(\rel A, \rel B)$ within the scope of our main theorem,
\Cref{thm:main},
can be solved by an algorithm that does not depend on the second structure $\rel B$ of the template. Our algorithm is an extension of the algorithm outlined in~\cite[Lemma 23]{KTT07:tcs}, which is a polynomial-time algorithm to solve systems of equations over a finite  {regular}  commutative monoid, although we observe the algorithm can also be applied to finitely generated monoids. The full details can be found in \Cref{ap:algorithm}
\par

By \Cref{prop:charRM}, we can describe $\alg M$ as the disjoint union
$\bigsqcup_{d\in M_I} H_d$, where $\alg M_I \leq \alg M$ is the semilattice formed by the idempotent elements in $\alg M$ and
$\alg H_d$ is the $\Hcal$-class of $d\in M_I$, which is a subgroup of $\alg M$. Using the fact that $\alg M$ is finitely generated,
\Cref{structure-abelian-monoid} yields that $\alg M_I$ is finite. We define
$\pi_I: \alg M \to \alg M_I$ as the natural projection that sends each element $a\in M$ to the only idempotent element $\delta_a$ in its $\Hcal$-class.
Finally, let $R(M_I)\subseteq M_I^r$ be the set $\{ \pi_I(\bm a) \vert  \bm a\in R(M) \}$, and $\rel M_I$ be the expansion of $\alg M_I$ by $R(M_I)$.  \par

Let $\rel X$ be an instance of $\csp(\rel M)$. We can see this instance as a set of variables $X$ and constraints of the form $x_1x_2=x_3$, $x=e$, and $(x_1, \dots, x_r)\in R$. If there is a solution $f: \rel X \to \rel M$, then $\pi_I\circ f$ is a solution for $\rel X$ in $\csp(\rel M_I)$. It is not difficult to see that homomorphisms $h: \rel X \to \rel M_I$ are closed under point-wise products. Hence, if there exists such a homomorphism, there must be a minimal homomorphism $h_{\mathrm{min}}$. ``Minimal'' here means that $h_{\mathrm{min}}(x)\preceq h(x)$ for any $h: \rel X \to \rel M_I$, $x\in X$. It can also be seen that if  $f:\rel X \to \rel M$ and $h: \rel X \to \rel M_I$ are homomorphisms, then their point-wise product $fh$ is still a homomorphism from $\rel X$ to $\rel M$. This way, we can conclude that if 
there is such a homomorphism $f$, then there must be one satisfying $\pi_I\circ f= h_{\mathrm{min}}$. Our algorithm aims to find a homomorphism $f$ with this property. \par

Firstly, arc-consistency (e.g., \cite{BBKO21}) or a slight modification of the
algorithm in {\cite[Lemma~19]{KTT07:tcs}} allows us to find $h=h_{\mathrm{min}}$ if $\rel X$ is a satisfiable instance. Using the homomorphism $h$ we reduce the problem of finding $f:\rel X \to \rel M$ to the problem of solving a system of linear equations $\Sigma_X$ over $\ZZ$, which can be solved in polynomial time using Gaussian elimination. The key insight is that for each $d \in M_I$, the group $\alg H_d$ is commutative and finitely generated, so it is isomorphic to the homomorphic image of some power of $\ZZ$ (e.g., \cite[Theorem 1.4, Chapter II]{hungerford2012algebra}). In other words, $\alg H_d$ is isomorphic to $\ZZ^{k_d}/G_{d}$ for some number $k_d\in \NN$ and some subgroup $G_d \leq \ZZ^{k_d}$. This way, we can choose a big enough number $k\in \NN$ such that all the elements in $\rel M$ can be represented with pairs of the form $(d, \bm v)$, where $d \in M_I$ is an idempotent element, and $\bm v\in \ZZ^k$
is an integer vector that is mapped to the quotient  $\ZZ^{k_d}/G_d$. We refer to \Cref{ap:algorithm} for the details. 
In $\Sigma_X$ we represent each variable $x\in X$ by a vector of integer variables $\bm v^x=(v^x_1, \dots, v^x_k)$. 
Constraints of the form $x=e$ are translated to $\bm v^x=\bm 0$. Given a constraint of the form $xy=z$ in $\rel X$, we add to $\Sigma_X$ equations ensuring $\bm v^x + \bm v^y = \bm v^z$ inside $\alg H_d$, where $d = h(z)$. Finally, we observe that for each $\bm{d}\in R(M_I)$, the intersection $R(M)\cap H_{\bm d}$ is a coset \footnote{We remind the reader that the notion of coset introduced in \Cref{sect:prelims} generalizes cosets in Abelian groups, so there should be no confusion here.}. A coset $U$ in a commutative group $\alg{G}$ can be expressed (using additive notation) as $a + H$, where
$a\in G$ is an element, and $\alg H\leq \alg G$ is a subgroup. This way, the condition $\bm a \in R(M_I)$ can be expressed via a set of linear equations inside each group of the form $\alg H_{\bm d}$ with $\bm d\in M_I^r$. Using this insight, we represent each constraint $(x_1,\dots, x_r)\in R$ in $\rel X$ with a set of linear equations in $\Sigma_X$ ensuring
$(v^{x_1},\dots, v^{x_r})\in R(M_I) \cap \alg H_{\bm d}$, where 
$\bm d= (h(x_1), \dots, h(x_r))$. This completes the reduction.

\section{Algorithm For Infinite Templates}
\label{ap:algorithm}

Let $\alg N$ be a semilattice with identity element $e(\alg N)$, $Q$ a set, and let $\lambda: N \to 2^Q$ be a map satisfying $\lambda(a) \subseteq \lambda(b)$ whenever $b\preceq a$, for any $a,b\in N$. The monoid $\alg M = \alg N \times_\lambda \ZZ$ consists of the set 
\[ \bigsqcup_{a\in N} \{a\}\times \ZZ^{\lambda(a)}. \] 
We implicitly identify $\ZZ^{\lambda(a)}$ with the subspace of $\ZZ^{Q}$
consisting of all vectors $\bm{v}$ satisfying $v_\alpha=0$ for all $\alpha\notin \lambda(a)$. This way, given $a,b\in N$, $\bm{u}\in \ZZ^{\lambda(a)}, \bm{v}\in \ZZ^{\lambda(b)}$ it holds that $\bm{u} + \bm{v} \in \ZZ^{\lambda(ab)}$.
Hence, we define $(a, \bm{u}) \cdot (b, \bm{v}) = 
(ab, \bm{u} + \bm{v})$ for all $(a, \bm{u}), (b, \bm{v})$. The identity element of the resulting monoid $\alg M$ is the element $(e(N), \bm{0})$. It is easy to see that $\alg M$ is commutative and {regular}. The idempotent elements of $\alg M$ are precisely $(d, \bm 0)$ for each $d\in N$, so $\alg M_I$ is isomorphic to $\alg N$, and the $\Hcal$-class of $(d, \bm 0)$ is the subgroup $\{d \} \times \ZZ^{\lambda(d)}$.
\par
Let $\Xi: N \to 2^{\ZZ^Q}$ be such that
$\Xi(d)$ is a subgroup of $\ZZ^{\lambda(d)}$
for each $d\in N$, and $\Xi(a) \subseteq \Xi(b)$ whenever $b \preceq a$.
Then we define $\alg N \times_{\lambda}^{\Xi} \ZZ$ as the quotient
of $\alg N \times_{\lambda} \ZZ$ by the equivalence relation $\cong$ 
defined by $
(a, \bm{u}) \cong (b, \bm{v})$,
whenever $a= b$ and $\bm{u}-\bm{v}\in \Xi(a)$. Given
$(d, \bm{v})\in \alg N \times_{\lambda} \ZZ$, we write $[d, \bm{v}]$
for its corresponding $\cong$-class in $\alg N \times_{\lambda}^{\Xi} \ZZ$. It is not difficult to see that $\alg N \times_{\lambda}^{\Xi} \ZZ$
equipped with the product 
\[
[a, \bm{u}] [b, \bm{v}] = [ab, \bm{u}+\bm{v}] 
\]
is a well-defined monoid. Moreover, $\alg N \times_{\lambda}^{\Xi} \ZZ$
must be commutative and {regular} because it is a homomorphic image of
$\alg N \times_{\lambda} \ZZ$. \par

\begin{restatable}{lemma}{structureabelianregularintegers}
\label{le:structure_abelian_regular_integers}
  Let $\alg M$ be a {regular} commutative monoid generated by a finite set $Q \subseteq M$. Then there are suitable maps  $\lambda: M_I \to 2^Q$ and $\Xi: M_I \to 2^{\ZZ^Q}$ such that
  $\alg M$ is isomorphic to $\alg M_I \times_\lambda^\Xi \ZZ$.
\end{restatable}
\begin{proof}
Let $\lambda: M_I \to 2^Q$ be the map defined by 
    $d \mapsto \{ \alpha \in Q \vert  d\preceq \alpha \}$. 
    We first prove that $\alg M$ is a homomorphic image of $\alg M_I \times_\lambda \ZZ$. Consider the map 
    $\rho: \alg M_I \times_\lambda \ZZ \to \alg M$ given by
    \begin{equation}
    \label{eq:hom_structure_abelian_regular}
           (d, \bm{v}) \mapsto d \prod_{\substack{\alpha \in
           \lambda(d), v_\alpha\neq 0
           }} \alpha^{v_\alpha}.
    \end{equation}
     We claim that $\rho$ is a
    is a monoid homomorphism.
    By construction, $\rho$ preserves the identity element. Let us show that it also preserves products. Another way of writing \eqref{eq:hom_structure_abelian_regular} is
    \[          
    (d, \bm{v}) \mapsto d \prod_{\substack{\alpha \in \lambda(d)}} \alpha^{v_\alpha},
    \]
     where we adopt the convention that $\alpha^0= d_\alpha$.
    To see that this is equivalent to \eqref{eq:hom_structure_abelian_regular},
    observe that, by definition of $\lambda$, the fact that $\alpha\in \lambda(d)$
    implies that $d \preceq d_\alpha$, so $d d_\alpha= d$. 
    Let $(d_1, \bm{u}), (d_2,\bm{v})\in \alg M_I \times_\lambda \ZZ$. Then
    \begin{align*}
    \rho(d_1, \bm{u}) \rho(d_2,\bm{v}) = 
    d_1 d_2 \prod_{\substack{\alpha \in \lambda(d_1) \cup \lambda(d_2)}} \alpha^{u_\alpha+ v_\alpha} \\
    =  d_1 d_2 \prod_{\substack{\alpha \in \lambda(d_1d_2)}} \alpha^{(u+v)_\alpha} =  \rho(d_1 d_2, \bm{u}+\bm{v}).
    \end{align*}
    This shows that $\rho$ is a monoid homomorphism. Now, it is also not hard to see that $\rho$ is surjective. Let $a\in \alg M$. Then $a$ can be expressed as $\prod_{\alpha\in Q_a} \alpha^{n_\alpha}$ for some 
    subset of generators $Q_a\subseteq Q$ and some integers $n_\alpha>0$
    for each $\alpha\in Q_a$. In particular, it must be that $d_a
    \preceq \alpha$ for each $\alpha \in Q_a$, so $Q_a\subseteq \lambda(d_a)$. Additionally, the fact that $a= d_a a$ implies
    $a= d_a\prod_{\alpha\in Q_a} \alpha^{n_\alpha}$. Define $\bm{v}\in \ZZ^{\lambda(d_a)}$ by letting $v_\alpha= n_\alpha$ if $\alpha\in Q_a$
    and $v_\alpha = 0$ otherwise. Then $a= \rho(d_a, \bm{v})$.\par
    The fact that $\alg M= \im(\rho)$ implies that $\alg M$ is isomorphic to
    the quotient of $\alg M_I \times_\lambda \ZZ$ by the congruence $\cong$
    given by $(a,\bm u)\cong (b, \bm v)$ whenever $\rho(a, \bm u)=\rho(b,\bm v)$. Let us analyze this congruence. The first observation is that if $c = \rho(d, \bm v)$, then $d_c=d$. This shows that $(a,\bm u)\cong (b, \bm v)$ implies that necessarily $a=b$. For each $d\in M_I$, define $\Xi(d)\subseteq \ZZ^{\lambda(d)}$ 
    as the submodule consisting of the vectors $\bm v$ satisfying
    $\rho(d, \bm v)= \rho(d, \bm 0)$. By construction
    $(a,\bm u)\cong (b, \bm v)$ if and only if $a=b$, and 
    $\bm u- \bm v \in \Xi(a)$. This shows that $\alg M$ is isomorphic
    to $\alg M_I \times_{\lambda}^\Xi \ZZ$.
\end{proof}

In the following result we will consider the expansion of a monoid $\alg M$ of the form $\alg N \times_\lambda^\Xi \ZZ$ by a relation $R\subseteq M^r$. 
It will be convenient to represent tuples $([d_1, \bm v_r],\dots,
[d_r, \bm v_r])\in M^r$ as pairs $[\bm d, \bm u]$ where $\bm d= (d_1,\dots,
d_r)$ and $\bm u\in \prod_{i\in [r]} \ZZ^{\lambda(d_i})$ is a vector
whose projection to $\ZZ^{\lambda(d_i)}$ 
is $\bm v_i$ for each $i\in [r]$. This way, we identify $M^r$ with the set
\[
\bigsqcup_{\bm d \in N^r} \{\bm d \} \times \prod_{i\in [r]} \ZZ^{\lambda(d_i)}/ \Xi(d_i).
\]

\begin{restatable}{theorem}{algorithm}\label{thm:algorithm}
    If $\rel M$ is an expansion of a finitely generated {regular} commutative monoid $\alg M$ by a coset $R\subseteq M^r$, then $\csp(\rel M)$ is solvable in polynomial time.
\end{restatable}

\begin{proof}
 Let $\rel X$ be an instance of $\csp(\rel M)$. We can see $\rel X$ as a set of variables $X$ together with expressions of the form (1) $x_1 x_2 = x_3$ for $x_1,x_2, x_3\in X$, (2) $x=e$ for $x\in X$, and (2) $(x_1,\dots, x_r)\in R$ for $x_1,\dots, x_r\in X$.  \par

By \Cref{structure-abelian-monoid}, the fact that $\alg M$ is finitely generated implies that the submonoid $\alg M_I \leq \alg M$
of idempotent elements is finite. Hence, by
\Cref{le:structure_abelian_regular_integers}, we can assume that
$\alg M$ is of the form
$\alg N \times_{\lambda}^{\Xi} \ZZ$, where $\alg N$ is a finite semilattice, $\lambda: N \to 2^Q$ for some finite set $Q$, and
$\Xi: N \to 2^{\ZZ^Q}$. Recall that $\alg M= \bigsqcup_{\delta\in M_I} \alg H_\delta$, where $H_\delta$ denotes the $\Hcal$-class of the idempotent $M_I$, which is a subgroup of $\alg M$. As
$\alg M = \alg N \times_{\lambda}^{\Xi} \ZZ$, we have that 
$M_I = \{  [\delta, \bm{0}]\, \vert \, \delta \in  N \}$. We abuse the notation and identify $\alg N$ with $\alg M_I$ via the isomorphism $\delta \mapsto [\delta,\bm 0]$.
This way, given $\delta\in M_I$, the subgroup $H_\delta$ is defined as
$\{ [\delta, \bm v] \, \vert \, \bm v \in \ZZ^{\lambda(\delta)} 
\}$.
\par
Given $a\in M$, we write $\delta_a$ for the unique idempotent element satisfying $\delta_a\Hcal a$. Similarly, if $\bm a \in M^r$, we write $\bm{\delta}_{\bm a}$ for the unique idempotent element in the $\Hcal$-class of $\bm a$ (such element exists because $M^r$ is a regular commutative monoid). Let $\rel M_I$ be the expansion of $\alg M_I$ by the relation $R(M_I) = \{ \bm{\delta}_{\bm a} \vert \bm a \in R \}$. Then the projection $\pi_I: \rel M\to \rel M_I$ defined by $a \mapsto \delta_a$ is a homomorphism.  
    Suppose that $f: \rel X \to \rel M$ is a homomorphism. Then $\pi_I\circ f$
    is a homomorphism from $\rel X$ to $\rel M_I$. Additionally, if $h: \rel X
    \to \rel M_I$ is a homomorphism, then the product $fh$ is also a
    homomorphism from $\rel X$ to $\rel M$. Indeed, $fh$ clearly preserves products and sends $e(X)$ to $e(M)$. To see that $fh$ preserves the relation $R$, observe that if $R(X)$ contains a tuple $\bm{x}$, then $f(\bm{x})\in R(M)$ and $h(\bm{x})\in R(M_I) \subseteq (R(M))^{-1} \otimes R(M)$. This way, using the fact that $R(M)$ is a coset, we obtain 
\[ fh(\bm{x})= f(\bm{x})h(\bm{x}) \in R(M) \otimes ((R(M))^{-1} \otimes R(M)) \subseteq R(M).
\]
Additionally, this new solution $fh$ satisfies 
$\pi_I \circ (fh)(x) \preceq h(x)$ for all $x\in X$. 
Now, observe that if $\rel X \to \rel M_I$ then there must be a 
\emph{minimal} homomorphism $h:\rel X \to \rel M_I$, meaning that
for any $h^\prime: \rel X \to \rel M_I$ it holds that
$h(x)\preceq h^\prime(x)$ for all $x\in X$. To see this define
$h(x) = \prod_{h^\prime: \rel X \to \rel M_I} h^\prime(x)$.
This way, if there is a solution $f^\prime: \rel X \to \rel M$ then, by defining $f=f^\prime h$, we can obtain a solution $f: \rel X \to \rel M$ such that $\pi_I\circ f$ is the minimal homomorphism $h: \rel X  \to \csp(\rel M_I)$. Our algorithm aims to find such a solution $f$.\par
  By {\cite[Lemma~19]{KTT07:tcs}}, we can find a minimal solution $h: \rel X \to \rel M_I$ in polynomial time if one exists. Otherwise we reject the instance $\rel X$. 
Now we look for a map $f: \rel X \to \rel M$ 
such that $f(x)$ is of the form $[\delta, \bm{v}]\in H_\delta$ for some, whenever $h(x)= \delta$. 
Next we construct a system $\Sigma_X$ of linear equations over $\ZZ$. For each variable $x\in X$, we include a vector $\bm{v}^x$ of integer variables  $(v^x_\alpha)_{\alpha\in \lambda(h(x))}$. The intuition is that our intended solution $f$ should be defined as $f(x)=[h(x), \bm{v}^x]$.
For each $x\in X$ and each $\alpha \notin \lambda(h(x))$ we add to $\Sigma_X$ an equation $v^x_\alpha=0$. 
For each constraint in $\rel X$ of the form
$x=e$, we add to $\Sigma_X$ equations ensuring $\bm{v}^x=\bm{0}$.

We make use of the well-known result that any subgroup of a finitely generated
  commutative group is itself finitely generated \cite[Theorem 1.6, Chapter
  II]{hungerford2012algebra}.
For each $\delta\in M_I$, let $W_\delta$ be a finite set generating the subgroup $\Xi(\delta) \subseteq \ZZ^{\lambda(\delta)}$. 
For each constraint in $\rel X$ of the form $xy=z$, we add to $\Sigma_X$ equations
ensuring 
\[\bm{v}^x + \bm{v}^y = \bm{v}^z + \sum_{\bm{u}\in W_{h(z)}} n_{\bm u} \bm{u}\]
where
$n_{\bm u}$ is a fresh variable for each $\bm{u}\in W_{h(z)}$. This simply ensures that 
\[
[h(x), \bm{v}^x] + [h(y), \bm{v}^y] = [h(z), \bm{v}^z].
\]
For each $\bm{\delta}\in \pi_I(R)$,
let $R_{\bm{\delta}}$ be $R \bigcap H_{\bm{\delta}}$. Using the fact that $R$ is a coset of $\alg M^r$, we obtain that $R_{\bm{\delta}}$ must be a coset of $H_{\bm{\delta}}$. This way, there must be a coset $U_{\bm{\delta}}\subseteq \prod_{i\in [r]} \ZZ^{\lambda(\delta_r)}$ such that $R_{\bm{\delta}}= \{ \bm \delta\} \times U_{\bm{\delta}}$. It is not hard to show that a coset $U$ in a commutative group $\alg G$ must be of the form $a + H$ (using additive notation), where $a\in G$ and $\alg H\leq \alg G$ is a subgroup. Using again the fact that subgroups of finitely generated commutative groups must be finitely generated, we conclude there is a vector $\bm{o}_{\bm \delta} \in \prod_{i=1}^r \ZZ^{\lambda(\delta_i)}$
and $V_{\bm \delta} \subseteq \prod_{i=1}^r \ZZ^{\lambda(\delta_i)}$ a finite subset such that
\[
R_{\bm{\delta}}= \{ 
[\bm{\delta}, \bm{u}] \, \vert \, 
\bm{u} \in \bm{o}_{\bm \delta} + \langle  V_{\bm \delta} \rangle
\}.
\]
Then, for each constraint in $\rel X$ of the form
$\bm{x}=(x_1,\dots, x_r) \in R$, we add to $\Sigma_X$ equations ensuring
\[
\bm{v}^{\bm{x}} = 
\bm{o}_{h(\bm x)} + \sum_{\bm{u}\in V_{h(\bm x)}} n_{\bm u} \bm u,
\]
where $\bm{v}^{\bm{x}}$ denotes the vector $(\bm v^{x_1},\dots, \bm v^{x_r})$, and $n_{\bm u}$ is a fresh variable for each $\bm{u}\in V_{h(\bm x)}$. This ensures that $[h(\bm x), \bm{v}^{\bm x}]\in R(M)$. Now, by construction, $\Sigma_X$ has a solution over the integers if and only if $\rel X$ has a solution in $\csp(\rel M)$. Systems of linear equations over the integers can be solved in polynomial time via Gaussian elimination, so this completes the proof. 
\end{proof}

\appendix

\section{Auxiliary Results About Monoids}
\label{ap:algebra}

\structureabelianmonoid*
\begin{proof}
Item (1) is straightforward. Let us prove (2). 
If $\alg M$ is {regular}, then the quotient $\alg M/\Hcal$
is isomorphic to $\alg M_I$ through the map that sends each $\Hcal$-class
to the single idempotent element belonging to it. Hence if $\alg M$ is finitely generated, so is $\alg M_I$. Since $\alg M_I$ is a finitely generated semilattice, it is finite. 
\end{proof}

The following is a characterization of group elements in finite {regular} commutative monoids. 

\begin{lemma}[{\cite[Lemma 1]{LZ24:acm}}]
\label{le:ab_reg_characterization}
Let $\alg M$ be a finite commutative monoid, and $a\in \alg M$ an element. Then the following are equivalent.
\begin{itemize}
    \item[(1)] $a^2b=a$ for some $b\in \alg M$,
    \item[(2)] $a^k=a$ for some $k>1$,
    \item[(3)] $a$ is a group element,
    \item[(4)] $a\preceq a^2$.
\end{itemize}
\end{lemma}

\idempotentcst*
\begin{proof}
    Let $a\in \alg M$. As $\alg M$ is finite, there must be some $\ell_a, k_a \in \NN$ satisfying that
    \begin{equation}
    \label{eq:idempotent_constant_aux}
    a^{\ell_a + n} = a^{\ell_a + m},
    \end{equation}
    for any integers $n,m\geq 0$ satisfying $m = n \mod k_a$ (see e.g., \cite{clifford1961algebraic}[Section 1.6]). 
    This way, the element
    $a^{k_a\ell_a}$ is idempotent. Indeed, 
    \[
    a^{2k_a\ell_a} =  a^{\ell_a + (2k_a-1)\ell_a} = a^{\ell_a + (k_a-1)\ell_a} = a^{k_a\ell_a}.
    \]
    Here, the second equality uses \eqref{eq:idempotent_constant_aux}. The constant $C$ that witnesses the result can be defined as $C= \prod_{a\in \alg M} k_a\ell_a$.
    If $a\in M$ is a group element, then it belongs to a subgroup 
    whose identity must be $a^C$, so $a^{-1}= a^{C-1}$. \par
    Fix $a\in M$. We show that there is a unique idempotent element $d_a$ of the form
    $a^n$ for some $n\in \NN$. We have shown that there is at least one idempotent $a^n$ of this form. Suppose that there is another positive integer $m\neq n$ such that
    $a^m$ is idempotent as well. It holds that $a^m\Hcal a^n$. Indeed, there are numbers $k, \ell\in \NN$ satisfying $km> n$ and $\ell n > m$, so 
    \[
    a^n = a^{\ell n} = a^m a^{\ell n - m} = a^{\ell n - m} a^m,
    \]
    and
    \[
    a^m = a^{k m} = a^n a^{k m - n} = a^{k m - n} a^n.
    \]
    By \Cref{th:green_theorem}, there is at most a single idempotent element in each $\Hcal$-class of $\alg M$, so $a^m= a^n$. This completes the proof. 
\end{proof}

\regsubmonoid*
\begin{proof}
\textbf{(1)} The map $\pi_I$ preserves the identity element, because $d_{e_M}= e_M$. To see that $\pi_I$ preserves the product, let $k\in \NN$ be such that $a^k= d_a$ for each $a\in \alg M$, as given by \Cref{le:idempotent_constant}. Then $d_a d_b = a^k b^k = (ab)^k = d_{ab}$ for each $a, b\in \alg M$. 
\textbf{(2)} Let $k\in \NN$ be such that $a^k$ is idempotent for all $a\in \alg M$, as given by \Cref{le:idempotent_constant}. We have that $a\in \alg M$ is a group element if and only if $a^{k+1}=a$. Indeed, if $a^{k+1}=a$ then $a$ is a group element by \Cref{le:ab_reg_characterization}. Conversely, if $a$ is a group element, by \Cref{th:green_theorem} its $\Hcal$-class is a subgroup of $\alg M$. The identity of this subgroup must be $a^k$, so $a^{k+1}=a$. \textbf{(3)} The map $\pi_\dagger$ clearly preserves the identity element. To see that it also preserves products, observe that by (1), $d_ad_b=d_{ab}$ for every $a,b\in \alg{M}$, so $a d_a b d_b= ab d_{ab}$. Finally, by (1), it holds that
    $d_{ad_a} = d_a d_{d_a} = d_a^2 = d_a$ for all $a\in \alg M$. This way, $\pi_\dagger \circ \pi_\dagger(a)= (ad_a) d_{ad_a} = a d_a d_a  = ad_a$ for all $a\in \alg M$.
\end{proof}

\splitaux*
\begin{proof}
    Let $a\in \Cos{g(S)_\dagger}$. Then we can express $a$ as a product of the form $g(b)_\dagger \prod_{i\in [k]} g(s_i)_\dagger g(t_i)_\dagger^{-1}$, for some elements $b, s_i, t_i\in S$.
    By \Cref{le:regular_submonoid}, for each $s\in \alg M$ it holds that
    $g(s)_\dagger = \prod_{f\in F} f(s)_\dagger$. Hence, we can write
    \[
    a = \prod_{f\in F} \left( f(b)_\dagger \prod_{i\in [k]} f(s_i)_\dagger f(t_i)_\dagger^{-1}\right). 
    \]
    This shows that $a\in  \bigotimes_{f\in F} \Cos{f(S)_\dagger}$, as we wanted to prove. 
\end{proof}

\section{Commutative Regularization}
\label{ap:ab_reg}

The aim of this section is to prove \Cref{le:abreg_definition}. More explicitly, given a monoid $\alg M$, we construct its \emph{commutative regularization}.
This is a second monoid $\abreg{\alg M}$ (r.c. standing for \textit{regular commutative}) together with a homomorphism $f\colon \alg M \rightarrow \abreg{\alg M}$ satisfying the following universal property: 
for every {regular} commutative monoid $\alg N$ and every homomorphism $g\colon \alg M\rightarrow \alg N$, there exists a unique homomorphism $h\colon \abreg{\alg M} \rightarrow \alg N$ satisfying $g = f\circ h$. \par
We use the notion of a monoid presented by a set of generators and relations~\cite{howie1995fundamentals}. This way, in a monoid $\alg M$
presented by a set of generators $S$ and some relations, the elements of $\alg M$ are equivalence classes of $S^*$, where $S^*$ denotes the set of non-empty words over $S$. Given a word $\alpha\in S^*$, we write 
$[\alpha]$ to denote its equivalence class in $\alg M$. \par

Let $\alg M$ be a monoid.
Informally, to define $\abreg{\alg M}$ we add inverses to each element $a\in \alg M$ and we impose relations that make the resulting monoid commutative.  We define $\abreg{\alg M}$ as the monoid presented by the set of generators $\Omega$
that contains the symbols
$\widehat{a},\widecheck{a}, 1_a$ for each element $a\in \alg M$, 
and the set of relations containing all identities of the form (1)
$
\prod_{i\in [k]} \widehat{a}_i = \prod_{j\in [\ell]} \widehat{b}_j $
for each equality
$\prod_{i\in [k]} a_i = 
\prod_{j\in [\ell]} b_\ell$
that holds in $\alg M$, 
(2) $\widehat{a} 1_a = \widehat{a}$
for each $a\in M$,
(3) $\widehat{a} \widecheck{a} = 1_a$
for each $a\in M$,
(4) $\widehat{e_{\little{M}}} = e$,
and
(5) $xy= yx$ for each $x,y\in \Omega$.

\par

\begin{lemma}
\label{le:abreg_definition} 
The monoid $\abreg{ \alg M}$
is commutative and {regular}. Moreover, for each $a\in \alg M$, the idempotent element in the $\Hcal$-class of $[\widehat{a}]$ is
$[1_a]$, and $[\widehat{a}]^{-1}= [\widecheck{a}]$.    
\end{lemma}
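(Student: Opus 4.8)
The statement to establish is that the monoid $\abreg{\alg M}$ defined by the presentation with generators $\widehat a,\widecheck a,1_a$ and relations (1)--(5) is Abelian and completely regular, and that for each $a\in\alg M$ the idempotent in the $\sim$-class of $[\widehat a]$ is $[1_a]$, with $[\widehat a]^{-1}=[\widecheck a]$. The plan is to work directly with the presentation. First I would observe that commutativity of $\abreg{\alg M}$ is immediate from relations of type (5), which impose $xy=yx$ for all generators $x,y\in\Omega$; since every element of $\abreg{\alg M}$ is a product of generators, the whole monoid is Abelian. From now on I will freely reorder factors in any word.

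Next I would identify a convenient normal form for elements of $\abreg{\alg M}$. Using commutativity, every word over $\Omega$ can be sorted; using relation (3), $\widehat a\widecheck a=1_a$, every adjacent pair $\widehat a\widecheck a$ collapses to $1_a$; using relation (2), $\widehat a 1_a=\widehat a$, any $1_a$ occurring alongside a $\widehat a$ can be absorbed; and relation (4), $\widehat{e_{\little M}}=e$, lets us drop trivial factors. The key technical step — and the main obstacle — is to show that these rewriting moves, together with the type-(1) relations (which only involve the $\widehat{\cdot}$ symbols and faithfully encode the multiplication table of $\alg M$), do not collapse $\abreg{\alg M}$ further than intended: concretely, that $[\widehat a]\neq[1_a]$ in general, and more importantly that the claimed inverse and idempotent really behave as asserted rather than being identified with something unexpected. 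I would handle this by exhibiting a concrete target: build, for a witnessing purpose, homomorphisms from $\abreg{\alg M}$ into known Abelian completely regular monoids (e.g.\ products of groups and semilattices, or the Abelian regularization-type quotients one already understands) that separate the relevant elements, so that the presentation is not degenerate. This is exactly the place where one must be careful, because presentations can collapse unexpectedly, and simply manipulating words is not by itself a proof that two words are distinct.

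Granting control over the presentation, the remaining claims are short computations inside $\abreg{\alg M}$. To see $[1_a]$ is idempotent: $1_a=\widehat a\widecheck a$ by (3), so $1_a^2=\widehat a\widecheck a\widehat a\widecheck a=\widehat a(\widehat a 1_a)\widecheck a\cdot$ — more cleanly, $1_a^2=(\widehat a 1_a)\widecheck a=\widehat a\widecheck a=1_a$, using commutativity and (2). To see $[\widehat a]\sim[1_a]$: we have $1_a\preceq\widehat a$ since $1_a=\widecheck a\cdot\widehat a$, and $\widehat a\preceq 1_a$ since $\widehat a=\widehat a\cdot 1_a$ by (2); hence $\widehat a\sim 1_a$, and by \Cref{th:green_theorem} the unique idempotent in the $\sim$-class of $[\widehat a]$ is $[1_a]$. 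That $[\widehat a]$ lies in a subgroup then follows from \Cref{prop:charRM} once we know every $\sim$-class contains an idempotent; for a general element, written as a product of generators, commutativity plus the computations above reduce it to a product of the $[\widehat a]$'s and $[1_a]$'s, each of which sits in a subgroup, and a product of regular elements in an Abelian monoid is regular (using \Cref{le:ab_reg_characterization} or directly exhibiting inverses), so $\abreg{\alg M}$ is completely regular. Finally, for the inverse: inside the maximal subgroup $\alg G_{[1_a]}$ we need the element $t$ with $[\widehat a]t=t[\widehat a]=[1_a]$ and $[\widehat a]^2 t=[\widehat a]$; taking $t=[\widecheck a]$, relation (3) gives $[\widehat a][\widecheck a]=[1_a]$, and then $[\widehat a]^2[\widecheck a]=[\widehat a][1_a]=[\widehat a]$ by (2), which by the uniqueness characterization of the group inverse recalled in \Cref{sect:prelims} pins down $[\widehat a]^{-1}=[\widecheck a]$. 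I expect the bulk of the writing to go into the normal-form / non-collapse argument; everything after that is a handful of one-line identities.
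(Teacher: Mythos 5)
The short computations you do give are correct and match the paper's: $[1_a]^2 = [1_a\widehat{a}\widecheck{a}] = [\widehat{a}\widecheck{a}] = [1_a]$, and $[\widehat{a}] \sim [1_a]$ follows from $[\widecheck{a}][\widehat{a}] = [1_a]$ and $[\widehat{a}][1_a] = [\widehat{a}]$, after which \Cref{th:green_theorem} pins down $[1_a]$ as the unique idempotent of that $\sim$-class; the inverse identities $[\widehat{a}][\widecheck{a}] = [1_a]$ and $[\widehat{a}]^2[\widecheck{a}] = [\widehat{a}]$ also match. However, the ``main obstacle'' you single out --- a normal-form or non-collapse argument to ensure, e.g., that $[\widehat{a}] \neq [1_a]$ --- is a red herring, and you never actually carry it out. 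Every assertion in the lemma is either a positive identity between specified words or a universally quantified structural property (Abelianness, complete regularity). None of these claim that two elements are distinct, so all of them are preserved under arbitrary further quotients of the presented monoid; no non-degeneracy argument is needed, and the paper gives none. Since you expected ``the bulk of the writing to go into the normal-form / non-collapse argument,'' you misjudged where the real work is, and the part that does carry the weight --- complete regularity --- is left underdeveloped.

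For that part, your route is to cite the fact that in an Abelian monoid a product of completely regular elements is completely regular. This is true and would give a workable alternative, but as written it has two problems: you cannot obtain it from \Cref{le:ab_reg_characterization}, which is stated only for \emph{finite} Abelian monoids, so you would need to prove it directly; and your claim that a general word ``reduce[s] to a product of the $[\widehat{a}]$'s and $[1_a]$'s'' is false --- the $[\widecheck{a}]$ symbols cannot be eliminated and must also be counted among the regular generators. The paper avoids both issues by arguing concretely: given $c = \prod_i [b_i]^{n_i}$ with $b_i \in \Omega$, it forms $c' = \prod_i [b_i']^{n_i}$, where $b_i'$ swaps $\widehat{a} \leftrightarrow \widecheck{a}$ and fixes $1_a$, and verifies by direct computation that $cc'$ is idempotent and $c^2c' = c$, from which complete regularity follows via \Cref{prop:charRM}. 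If you keep your ``product of regulars is regular'' route, spell out the witness for the product explicitly (essentially the same $c'$), rather than invoking a finite-monoid lemma.
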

\begin{proof}
Commutativity follows from (5). In order to see that $\abreg{ \alg M}$ is {regular}, first observe that the elements $[1_a]$ are idempotent for all $a\in M$: 
\[
[1_a^2]= [1_a \widehat{a}\widecheck{a}]=
[\widehat{a}\widecheck{a}]=
[1_a].
\]
Now let us see that $\abreg{\alg M}$ is {regular}.
By \Cref{prop:charRM} we just need to show that there is an idempotent element in each $\Hcal$-class of $\abreg{ \alg M}$. As $\abreg{ \alg M}$ is a commutative monoid and $\{[a] \vert a\in \Omega \}$ is a set of generators, any non-identity element $c$ can be expressed as $\prod_{i\in [k]} [b_i]^{n_i}$ for some elements $b_1,\dots, b_k\in \Omega$ and some integers $n_1,\dots, n_k\geq 1$. For each $i\in [k]$
define $b_i^\prime= \widecheck{a}$ if $b_i= \widehat{a}$, $b_i^\prime = \widehat{a}$ if
$b_i= \widecheck{a}$, or $b_i^\prime = b_i$ if $b_i = 1_a$. This way $[b_i] [b_i^\prime]$ is an idempotent element, and
$[b_i]^2[b_i^\prime]=[b_i]$. We define $c^\prime =  \prod_{i\in [k]} [b_i^\prime]^{n_i}$. It holds that $c c^\prime$ is idempotent, and
$c^2 c^\prime= c$, proving that $\abreg{\alg M}$ is {regular}. \par
To prove the last part of the statement, observe that both 
$[\widehat{a}][\widecheck{a}]= [1_a]$ and
$[\widehat{a}]^2[\widecheck{a}]= [\widehat{a}]$ hold for all $a\in \alg{M}$, and $[1_a]$ is an idempotent element.
\end{proof}

\begin{lemma}
\label{le:generators_abreg}
  Let $S \subseteq M$ be a set of generators of a monoid $\alg M$. Then the set 
\[
\widetilde{S}= 
\{ [\widehat{a}] \mid a\in S \} \cup \{ 
[\widecheck{a}] \mid a\in S\}
\]
is a set of generators of $\abreg{\alg M}$.
\end{lemma}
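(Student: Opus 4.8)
The plan is to exploit that $\abreg{\alg M}$ is, by construction, generated by the classes of the symbols in $\Omega$, i.e.\ by $\{[\widehat a],[\widecheck a],[1_a] : a\in M\}$. So it suffices to prove that each of $[\widehat a]$, $[\widecheck a]$, $[1_a]$ lies in the submonoid $\langle\widetilde S\rangle$ generated by $\widetilde S$, for every $a\in M$. I will first dispose of the ``hat'' generators, then the ``check'' generators (the crux), and finally the idempotents $[1_a]$.

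For the hat generators: since $S$ generates $\alg M$, write $a=s_1\cdots s_k$ with all $s_i\in S$ (if $a=e_{\little{M}}$ then $[\widehat a]=[\widecheck a]=[1_a]=e$ by relations (4) and (3), and there is nothing to prove). The equality $a=s_1\cdots s_k$ holding in $\alg M$ is precisely an identity of the form occurring in family (1) of the defining relations, hence $[\widehat a]=[\widehat{s_1}]\cdots[\widehat{s_k}]\in\langle\widetilde S\rangle$. The main step is to show that $[\widecheck a]=[\widecheck{s_1}]\cdots[\widecheck{s_k}]$. By \Cref{le:abreg_definition}, $[\widecheck a]$ is the inverse $[\widehat a]^{-1}$ of $[\widehat a]$ in its maximal subgroup, and this inverse is uniquely characterised (as recalled in \Cref{sect:prelims}) by the identities $u\,u^{-1}=u^{-1}u$, $u^2u^{-1}=u$, and $u^{-2}u=u^{-1}$. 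So, setting $u=[\widehat a]=\prod_i[\widehat{s_i}]$ and $v=\prod_i[\widecheck{s_i}]$, I would verify these three identities for the pair $(u,v)$: commutativity is relation (5); regrouping the Abelian product gives $uv=\prod_i[\widehat{s_i}][\widecheck{s_i}]=\prod_i[1_{s_i}]$, a product of commuting idempotents and therefore idempotent, and then $u^2v=u\cdot(uv)=\prod_i[\widehat{s_i}][1_{s_i}]=\prod_i[\widehat{s_i}]=u$ by relation (2); symmetrically $v^2u=\prod_i[\widecheck{s_i}][1_{s_i}]=v$, using that $[1_{s_i}]$ is the identity of the maximal subgroup containing both $[\widehat{s_i}]$ and $[\widecheck{s_i}]=[\widehat{s_i}]^{-1}$, so $[\widecheck{s_i}][1_{s_i}]=[\widecheck{s_i}]$. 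Uniqueness of the inverse then forces $v=[\widehat a]^{-1}=[\widecheck a]$, so $[\widecheck a]\in\langle\widetilde S\rangle$.

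Finally, $[1_a]=[\widehat a\,\widecheck a]=[\widehat a][\widecheck a]$ by relation (3), a product of two elements already shown to lie in $\langle\widetilde S\rangle$; hence $[1_a]\in\langle\widetilde S\rangle$ as well. Since all generators of $\abreg{\alg M}$ are thereby accounted for, $\widetilde S$ generates $\abreg{\alg M}$. Essentially everything here is routine unwinding of the presentation; the one place requiring genuine care is the identity $[\widecheck a]=\prod_i[\widecheck{s_i}]$, since $\abreg{\alg M}$ need not be finite (so one cannot simply write the inverse as a power of $[\widehat a]$) and one must instead argue through the uniqueness characterisation of the group inverse as above. This is the step I expect to be the main obstacle.
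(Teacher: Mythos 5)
Your proof is correct and follows essentially the same route as the paper: reduce to showing that each generator $[\widehat a]$, $[\widecheck a]$, $[1_a]$ (for $a\in M$) lies in $\langle\widetilde S\rangle$, handle the hat case via the relations in family (1), the check case via inverses, and the idempotent case via $[1_a]=[\widehat a][\widecheck a]$. The only substantive difference is in the check step: the paper simply asserts that inversion is an endomorphism of the Abelian regular monoid $\abreg{\alg M}$ and applies it to the term $t$, whereas you verify the identity $[\widecheck a]=\prod_i[\widecheck{s_i}]$ directly through the uniqueness characterisation of the group inverse — a more self-contained rendering of the same idea (and your treatment of $[1_a]$ via relation (3) is, if anything, slightly cleaner than the paper's detour through the idempotent projection being a homomorphism).
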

\begin{proof}
We show that every element of the form $[b]$ for $b\in \Omega$ can be generated by these elements.
Let $c \in M$ be an arbitrary element. By assumption $c=t^{\alg M}(a_1,\dots, a_k)$ for some term $t$ and some elements $a_1, \dots, a_k\in S$. Suppose that $b= \widehat{c}$. 
Then, because $\abreg{ \alg M}$ satisfies all identities in (1), it must hold that $[b]=t^{\abreg{\alg M}}([\widehat{a}_1], \dots, [\widehat{a}_k])$.
Suppose that $b=\widecheck{c}$. We know that $[\widecheck{a}]=[\widehat{a}]^{-1}$ for all $a\in \alg M$, and that the map that sends each element to its inverse is an endomorphism of $\abreg{\alg M}$. Hence
it must be that
$[b]=t^{\abreg{\alg M}}([\widecheck{a}_1], \dots, [\widecheck{a}_k])$.
Finally, suppose that $b=1_c$. In other words, $[b]$ is the idempotent element
  in $H_{[\widehat{c}]}$. Recall that the map sending each element $a\in \abreg{\alg M}$ to the idempotent element in $H_a$ is a homomorphism. Hence,
  $t^{\abreg{\alg M}}([1_{a_1}], \dots, [1_{a_k}])$ must be the idempotent element in $H_{[\widehat{c}]}$, which equals $[b]$. Observe that $[1_{a}]=[\widehat{a}][\widecheck{a}]$ for each $a\in M$, so each $[1_{a_i}]$ is generated by the elements in $\widetilde{S}$. This completes the proof. 
\end{proof}

Consider the map $\abreg{\pi_M}: \alg M \rightarrow \abreg{\alg M}$
defined by $a\mapsto [\widehat{a}]$.
By definition $\abreg{\alg M}$ satisfies all identities in (1), so $\abreg{\pi_M}$ preserves products. By the identity (4), $[\widehat{e_M}]$ is the identity element in $\abreg{M}$, so $\abreg{\pi_M}$ is a monoid homomorphism.

\begin{lemma}\label{le:univ-prop-abreg}
    Let $f: \alg M \rightarrow \alg N$
    be a monoid homomorphism, where $\alg N$ is a regular commutative monoid. Then there is another monoid homomorphism $g: \abreg{\alg M} \rightarrow \alg N$
    such that $g\circ \abreg{\pi_M}=f$.
\end{lemma}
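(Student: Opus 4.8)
The plan is to construct $g\colon\abreg{\alg M}\to\alg N$ directly on generators of $\abreg{\alg M}$ and then verify it respects all the defining relations (1)--(5) of the presentation. Recall that $\alg N$ is Abelian and completely regular, so every $b\in N$ lies in a unique maximal subgroup $\alg G_{d_b}$ with identity $d_b$, and $b$ has a well-defined inverse $b^{-1}$ in that subgroup; by \Cref{le:regular_submonoid} the maps $b\mapsto d_b$ and $b\mapsto b^{-1}$ (the latter understood on the regular part, which is all of $\alg N$ here) are well-behaved, in particular $b\mapsto d_b$ is a monoid homomorphism and $b\mapsto b^{-1}$ is an endomorphism of $\alg N$. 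With this in hand I would define $g$ on the generating set $\Omega$ by
\[
g([\widehat a]) = f(a), \qquad g([\widecheck a]) = f(a)^{-1}, \qquad g([1_a]) = d_{f(a)},
\]
for each $a\in M$, and extend multiplicatively to words over $\Omega$.

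The key step is then to check that $g$ is well-defined, i.e.\ that this assignment factors through the congruence generated by relations (1)--(5). I would go through the five families of relations in turn. Relation (5) (commutativity of all pairs of generators) is immediate since $\alg N$ is Abelian. Relation (4), $\widehat{e_M}=e$, holds because $f(e_M)=e_N$. For relation (1), if $\prod_{i\in[k]}a_i=\prod_{j\in[\ell]}b_j$ holds in $\alg M$, then applying the homomorphism $f$ gives $\prod_i f(a_i)=\prod_j f(b_j)$ in $\alg N$, which is exactly the image under $g$ of the two sides $\prod_i\widehat a_i$ and $\prod_j\widehat b_j$. For relation (3), $\widehat a\,\widecheck a = 1_a$, I need $f(a)f(a)^{-1}=d_{f(a)}$, which is precisely the defining property of the inverse in the maximal subgroup containing $f(a)$. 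For relation (2), $\widehat a\,1_a=\widehat a$, I need $f(a)\,d_{f(a)}=f(a)$: this is the statement that $d_{f(a)}$ acts as identity on $f(a)$, again true because $f(a)$ lies in the subgroup $\alg G_{d_{f(a)}}$. Since all five relation families are respected, $g$ is a well-defined monoid homomorphism $\abreg{\alg M}\to\alg N$ (it sends $e$ to $e_N$ because $\widehat{e_M}$ represents the identity, by relation (4)).

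Finally, $g\circ\abreg{\pi_M}=f$ is immediate from the definitions: $\abreg{\pi_M}(a)=[\widehat a]$ and $g([\widehat a])=f(a)$ for every $a\in M$. I expect the only mildly delicate point to be the well-definedness check for relation (1): one must be a little careful that "$\prod_i a_i = \prod_j b_j$ holds in $\alg M$'' is exactly the kind of identity preserved by any monoid homomorphism, and that the images $g(\prod_i\widehat a_i)=\prod_i f(a_i)$ are computed correctly — but this is routine once the setup is in place. There is no genuine obstacle here; the lemma is essentially the universal property of a presentation combined with the fact that inverses and idempotent-projections in an Abelian completely regular monoid behave functorially, which was already recorded in \Cref{le:regular_submonoid}.
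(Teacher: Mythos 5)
Your proof is correct and follows essentially the same route as the paper: define $g$ on the generating set $\Omega$ by $\widehat a\mapsto f(a)$, $\widecheck a\mapsto f(a)^{-1}$, $1_a\mapsto d_{f(a)}$, and verify that the images respect the defining relations (1)--(5), so that $g$ descends to a monoid homomorphism by the universal property of the presentation; the paper simply compresses the relation-checks into ``it is easy to see,'' whereas you spell them out. One small caveat: you invoke \Cref{le:regular_submonoid} to assert that $b\mapsto d_b$ is a homomorphism and $b\mapsto b^{-1}$ an endomorphism, but that lemma is stated only for \emph{finite} Abelian monoids, while $\alg N$ here is merely assumed Abelian and completely regular. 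This is harmless, however, since your relation-checks never actually use the homomorphism property of those maps --- only the pointwise identities $f(a)\,d_{f(a)}=f(a)$ and $f(a)f(a)^{-1}=d_{f(a)}$, which hold in any completely regular monoid directly from the definition of $d_b$ as the identity of the unique maximal subgroup containing $b$.
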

\begin{proof}
    Let $\Omega$ be the set of generators in the definition of $\abreg{\alg M}$. We define a map $h: \Omega \rightarrow \alg N$ as follows.  For each $a\in M$ let $h(\widehat{a})= f(a)$,  
    $h(\widecheck{a})= f(a)^{-1}$, and
    $h(1_a)= d_{f(a)}$. It is easy to see that for each identity
    $t(x_1,\dots, x_k)= s(y_1,\dots, y_\ell)$ in the presentation of $\abreg{\alg M}$, the identity
    $t^{\alg N}(h(x_1),\dots, h(x_k)) = s^{\alg N}(h(y_1),\dots, h(y_\ell))$ holds. This way, by the universal property of a semigroup given by a presentation, the
    map $g: \abreg{\alg M} \rightarrow \alg N$ given by $[x] \mapsto h(x)$ is a well-defined semigroup homomorphism from $\abreg{\alg M}$ to $\rel N$. To see that it is also a monoid homomorphism, observe that $h(\widehat{e})= f(e)$, and $f(e)$ must be the identity element of $\alg N$. \par
    Finally, the fact that $g\circ \abreg{\pi_M}= f$ is straightforward. Given
    $s\in M$, we have $\abreg{\pi_M}(s)= [\widehat{s}]$, and $g([\widehat{s}])= h(\widehat{s})=f(s)$, as we wanted to show. This completes the proof. 
\end{proof}

{\small
\bibliographystyle{plainurl}
\bibliography{lmz}
}

\end{document}